\algnewcommand\algorithmicforeach{\textbf{for each}}
\newtheorem{theorem}{Theorem}[section]
\newtheorem{lemma}[theorem]{Lemma}
\newtheorem{definition}[theorem]{Definition}
\newtheorem{corollary}[theorem]{Corollary}
\newtheorem*{remark}{Remark}
\newcommand{\z}{\boldsymbol{z}}
\newcommand{\M}{\mathcal{M}}
\newcommand{\X}{\mathcal{X}}
\newcommand{\LL}{\mathcal{L}}
\newcommand{\MinPts}{\text{MinPts}}
\newcommand{\Span}{\operatorname{Span}}
\newcommand{\op}[1]{\operatorname{#1}}
\newcommand{\ARI}{\operatorname{ARI}}
\newcommand{\AMI}{\operatorname{AMI}}
\newcommand{\dist}{\mathrm{dist}}
\title{Approximate DBSCAN under Differential Privacy}
\author{Yuan Qiu\\
CNRS@CREATE\\
\texttt{yuan.qiu@cnrsatcreate.sg}\\
\And
Ke Yi\\
Hong Kong University of Science and Technology\\
\texttt{yike@cse.ust.hk}\\
}
\begin{document}

\maketitle


\begin{abstract}
This paper revisits the DBSCAN problem under differential privacy (DP).  Existing DP-DBSCAN algorithms aim at publishing the cluster labels of the input points.  However, we show that both empirically and theoretically, this approach cannot offer any utility in the published results.  We therefore propose an alternative definition of DP-DBSCAN based on the notion of spans.  We argue that publishing the spans actually better serves the purposes of visualization and classification of DBSCAN.  Then we present a linear-time DP-DBSCAN algorithm achieving the sandwich quality guarantee in any constant dimensions, as well as matching lower bounds on the approximation ratio. A key building block in our algorithm is a linear-time algorithm for constructing a histogram under pure-DP, which is of independent interest.  Finally, we conducted experiments on both synthetic and real-world datasets to verify the practical performance of our DP-DBSCAN algorithm.
\end{abstract}

\section{Introduction}

\subsection{DBSCAN}
As one of the most popular clustering algorithms, {DBSCAN} (Density-Based Spatial Clustering of Applications with Noise)~\cite{DBLP:conf/kdd/EsterKSX96} has been receiving continuous attention~\cite{DBLP:conf/pakdd/CampelloMS13,DBLP:journals/tkdd/CampelloMZS15,DBLP:conf/sigmod/GanT15,DBLP:journals/tods/SchubertSEKX17, DBLP:journals/pacmmod/MoSD24} over the past 30 years. Unlike centroid-based clustering algorithms such as k-means~\cite{DBLP:journals/tit/Lloyd82}, which tend to identify clusters in spherical shapes,  DBSCAN can detect clusters of arbitrary shapes. Additionally, DBSCAN offers several advantages, such as not requiring the number of clusters to be specified in advance and its robustness to outliers.

Without loss of generality, we consider the $d$-dimensional unit cube $\LL:=[0,1]^d$ as the feature space, equipped with Euclidean distance $\dist(\cdot, \cdot)$.  
Two parameters decide the DBSCAN clustering result: a real number\footnote{
We use $\alpha$ instead of the usual symbol $\varepsilon$, which will be used to denote the privacy parameter. } $\alpha\in (0,1)$ and an integer $\MinPts \ge 1$.  For any location $l\in \LL$, its $\alpha$-neighborhood is defined as $\mathbf{B}(l,\alpha):=\{l'\in \LL: \dist(l,l')<\alpha\}$, the $d$-dimensional ball centered at $l$ with radius $\alpha$.  Let $P=(p_1,\dots,p_n) \in \LL^n$ be the input point set.  
A point $p\in P$ is called a \textit{core point}, if its $\alpha$-neighborhood contains at least $\MinPts$ points of $P$, 
otherwise a \textit{noise point}.
Two core points $p$ and $q$ are $\alpha$-reachable if $\dist(p,q)<\alpha$; they are $\alpha$-connected if they are directly or transitively $\alpha$-reachable.   Then an $(\alpha,\MinPts)$-cluster $C$ is a nonempty maximal subset of the core points that are mutually $\alpha$-connected, and the DBSCAN output, denoted $\mathcal{C}(\alpha,\MinPts)$, consists of all the $(\alpha,\MinPts)$-clusters.  Note that the noise points do not belong to any cluster.

\subsection{Approximate DBSCAN}
In order to reduce the computational cost, \citet{DBLP:conf/sigmod/GanT15} have introduced  \textit{approximate DBSCAN} based on the \textit{sandwich quality guarantee}.  For our purpose which will become clear shortly, we generalize their definition by introducing a second approximation parameter $\tau$ (their definition is the special case $\tau=0$):

\begin{definition}[Sandwich Quality Guarantee]\label{def:sand}
For $\rho>1$ and $0\leq \tau < \MinPts$, a set of clusters $\hat{\mathcal{C}}$ are called $(\rho,\tau)$-approximate clusters, if the following conditions are satisfied:
\begin{enumerate}
\item For each cluster $C_1\in\mathcal{C}(\alpha,\MinPts)$, there exists an approximate cluster $\hat{C}\in \hat{\mathcal{C}}$ such that $C_1\subseteq \hat{C}$.
\item For each approximate cluster $\hat{C}\in \hat{\mathcal{C}}$, there exists a cluster $C_2\in\mathcal{C}(\rho\alpha, \MinPts-\tau)$ such that $\hat{C}\subseteq C_2$.
\end{enumerate}
\end{definition}

The approximation parameter $\rho$ is critical in the computational complexity of DBSCAN: For any constant $\rho>1$ and constant $d$, they present an $O(n)$-time algorithm that returns $(\rho,0)$-approximate clusters for any $P$, while giving an $\Omega(n^{4/3})$ lower bound for $\rho=1, \tau=0$ (i.e., accurate clusters) for any $d\ge 3$ \cite{DBLP:conf/sigmod/GanT15}. They also empirically demonstrate that approximate clusters are very close to the true clusters on most realistic datasets. 

\subsection{DBSCAN under Differential Privacy}

As with other learning algorithms, clustering results can contain sensitive information that may expose personal data, posing significant privacy risks.  Unsurprisingly, DBSCAN has been studied under \textit{differential privacy (DP)}~\cite{DBLP:journals/fttcs/DworkR14}, the \textit{de facto} standard for personal data privacy. Since publishing any of the input points in $P$ cannot possibly satisfy DP, so existing work~\cite{DBLP:conf/iiki/NiLLBY17,wu2015dp,DBLP:journals/access/NiLWJY18,jin2019improved} choose to output only the cluster labels, i.e., the DP-DBSCAN output is a vector $(c_1,\dots, c_n)$ such that $p_i$ and $p_j$ are in the same cluster iff $c_i=c_j>0$, while $c_i=0$ indicates that $p_i$ is a noise point.  However, we tested the existing DP-DBSCAN algorithms and find that they offer little utility. In all our test datasets, all these algorithms classified all points into a single cluster, which is clearly not informative. Our first result in this paper is a negative result, showing that this lack of utility is not due to the algorithm design, but the output format.  Specifically, we prove that, if required to output the cluster labels of the input points, then no DP algorithm can achieve any finite $\rho$ and any $\tau < \MinPts$ with constant probability, i.e., no non-trivial sandwich quality guarantee is achievable. 

\begin{figure*}[htbp]
\centering
     \begin{subfigure}[t]{0.24\textwidth}
         \includegraphics[width=\textwidth]{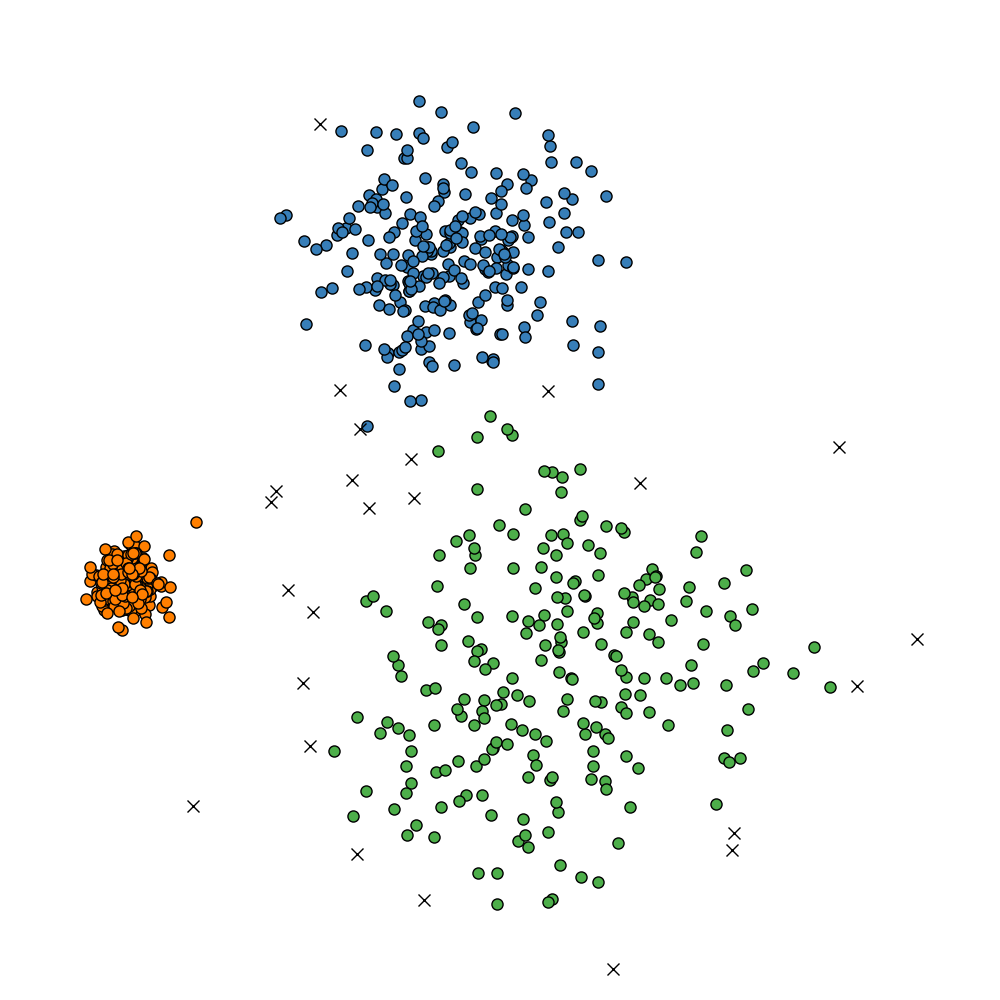}
         \caption{Clusters (Blobs)}
         \label{fig:dbscanBlob}
     \end{subfigure}    \hfill
     \begin{subfigure}[t]{0.24\textwidth}
         \centering
         \includegraphics[width=\textwidth]{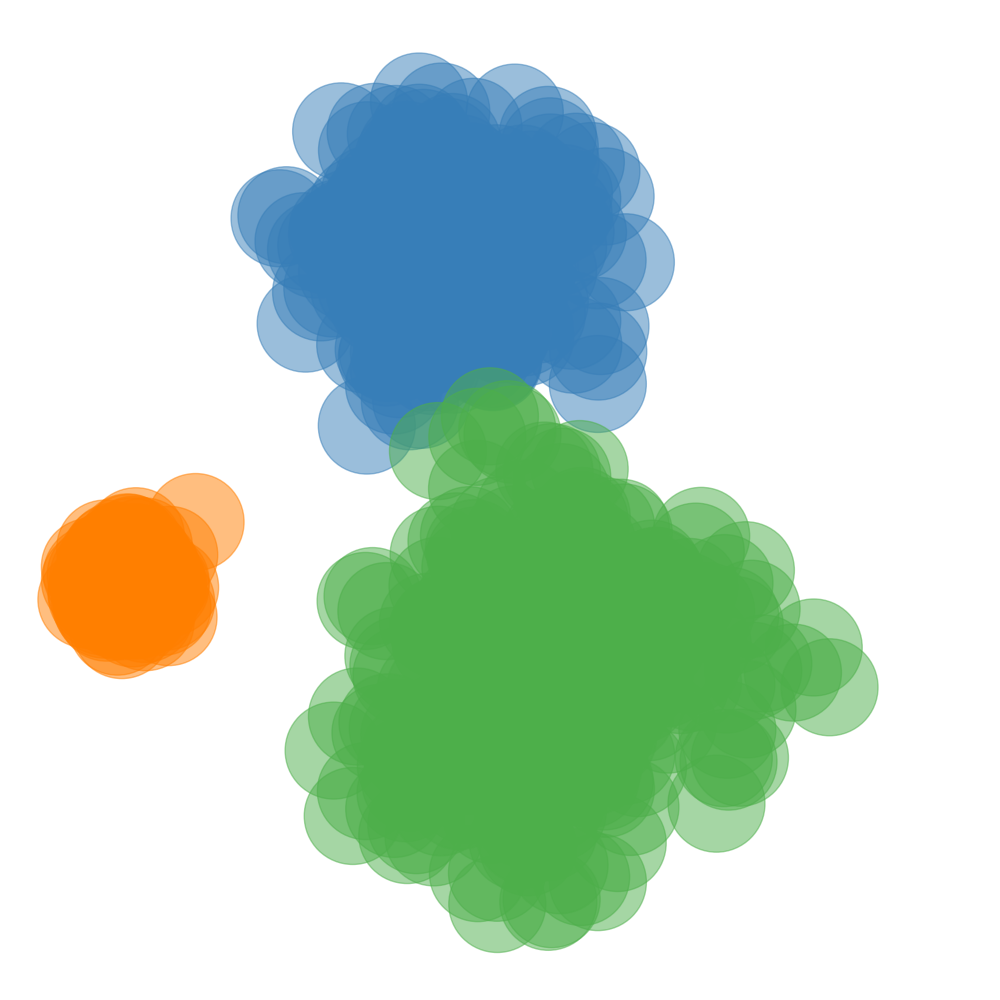}
         \caption{Cluster spans (Blobs)}
         \label{fig:spanBlob}
     \end{subfigure}     \hfill
     \begin{subfigure}[t]{0.24\textwidth}
         \centering
         \includegraphics[width=\textwidth]{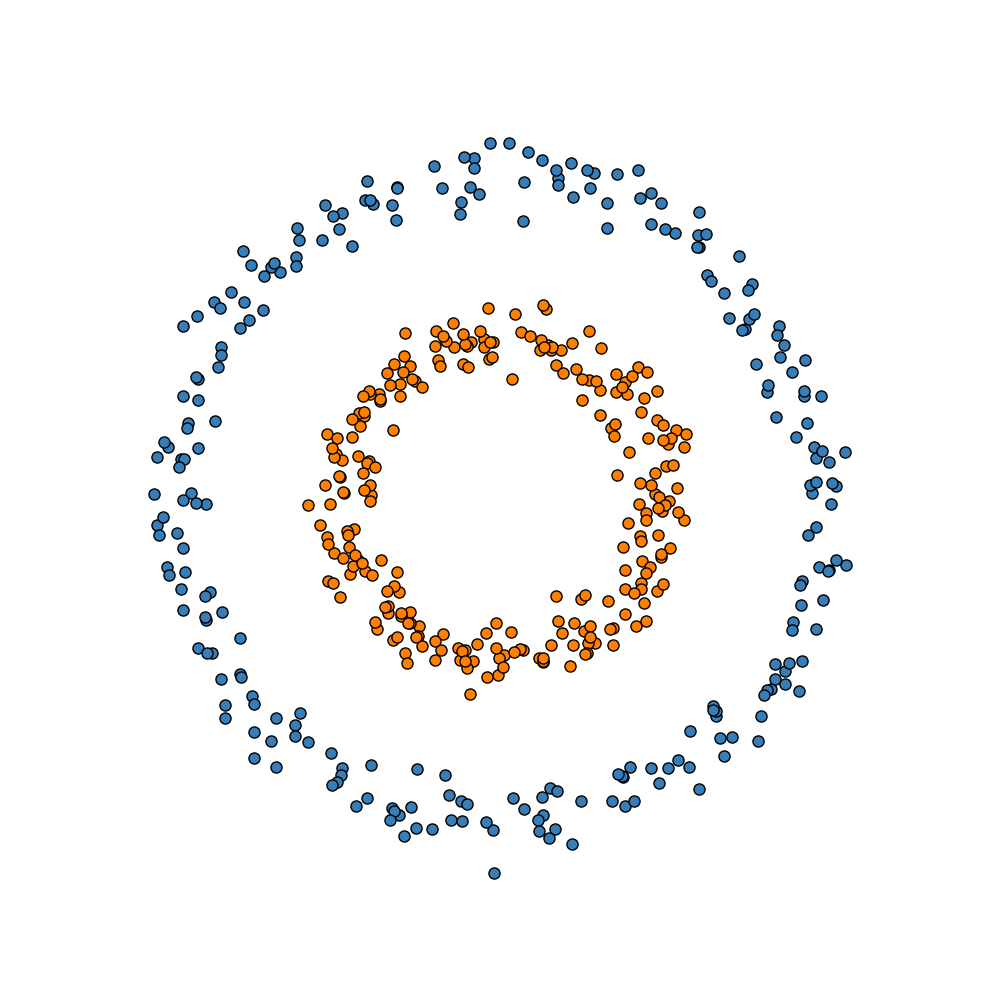}
         \caption{Clusters (Circles)}
         \label{fig:dbscanCirc}
     \end{subfigure}     \hfill
     \begin{subfigure}[t]{0.24\textwidth}
         \centering
         \includegraphics[width=\textwidth]{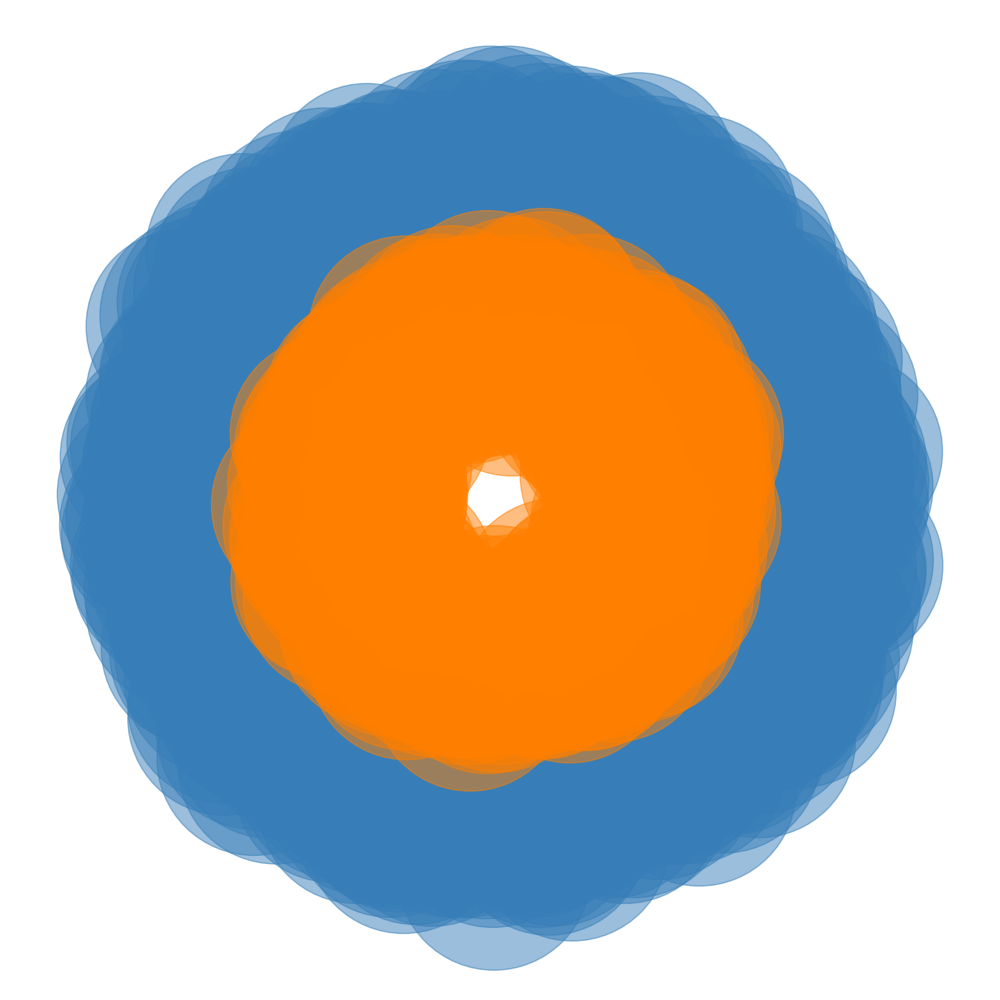}
         \caption{Cluster spans (Circles)}
         \label{fig:spanCirc}
     \end{subfigure}
\caption{Span of clusters.}
\label{fig:span}
\end{figure*}

To get around this negative result, we propose an alternative output format for DP-DBSCAN: For 
a cluster $C$, we define its \textit{span} as $\Span(C):=\cup_{p\in C} \mathbf{B}(p,\alpha)$, and we aim at outputting the spans of the clusters; please see some illustrations in Figure \ref{fig:span}.  Compared with the cluster labels, the spans do not reveal how the input points themselves are clustered, which is impossible by our negative result.  However, they serve the more useful functions of DBSCAN: visualization and classification.  The visualization result is illustrated in Figure \ref{fig:span}; more plots are provided in Section \ref{sec:exp}.  To classify a new  point in the feature space, one can simply check which span the given point falls into, or declare it as a noise point if it does not fall into any span.  Note that the spans may overlap a little bit, but the overlapping region must have width less than $\alpha$, and points falling into the overlapping region can be classified into either cluster. 
Essentially, these spans serve as a classification model, similar to the $k$ centroids in k-means clustering.  

More importantly, we are able to design DP algorithms that output approximate spans that satisfy the sandwich quality guarantee:
 
\begin{definition}[Sandwich Quality Guarantee for Spans]\label{def:ratio}
For $\rho>0$ and $0\leq \tau<\MinPts$, a collection of spans $\mathcal{S}=\{S_1,\dots,S_m\}$ are called $(\rho,\tau)$-approximate spans, if the following conditions are satisfied:
\begin{enumerate}
\item For each cluster $C_1\in\mathcal{C}(\alpha,\MinPts)$, there exists a span $S\in \mathcal{S}$ such that $C_1\subseteq S$.
\item For each span $S\in \mathcal{S}$, there exists a cluster $C_2\in\mathcal{C}(\rho\alpha,\MinPts-\tau)$, such that $S\subseteq \Span(C_2)$.
\end{enumerate}
\end{definition}

\subsection{Our Contributions}
Specifically, we make the following contributions in this paper with respect to DP-DBSCAN:

\paragraph{(1) A new definition of approximate DP-DBSCAN (Section \ref{sec:span})}  We prove that, using the existing definition of DP-DBSCAN, it is impossible to achieve any meaningful utility guarantee.  We then propose a span-based new definition, which not only allows us to achieve the sandwich quality guarantee, but also serves visualization and classification purposes of clustering.  

\paragraph{(2) Approximation lower bounds (Section \ref{sec:lower})}
For the best achievable span-based sandwich quality guarantee, we first prove two lower bounds: $\rho\ge 3$ and $\tau=\Omega(\frac{1}{\varepsilon} \log{\frac{1}{\rho \alpha}})$.  Both lower bounds hold for any DP-DBSCAN algorithm even with unlimited computing power, which implies that the privacy constraint incurs a higher level of hardness for DBSCAN than the running time requirement (recall that in the non-private setting, one can achieve $\rho=1,\tau=0$ with an $O(n^2)$-time algorithm, and $\rho>1,\tau=0$ with an $O(n)$-time algorithm). 

\paragraph{(3) Approximate DP-DBSCAN (Section \ref{sec:upper})}
We design an $O(n)$-time DP-DBSCAN mechanism that achieves any constant $\rho>3$ and $\tau=O(\frac{1}{\varepsilon}\log\frac{1}{\alpha})$.  This approaches the lower bound on  $\rho$ within an arbitrarily small constant, while matching the lower bound on $\tau$ up to a constant factor.  

\paragraph{(4) Linear-time pure-DP histogram (Section \ref{sec:hist})}
A key building block of our DP-DBSCAN algorithm is a differentially private histogram, which is itself a fundamental problem in DP.  A naive method simply adds a Laplace noise to the frequency of each element in the universe $\X$, regardless they appear in the given dataset or not.  This results in $O(|\X|)$ time, which does not work for a large $\X$.  \citet{DBLP:conf/icdt/CormodePST12} designed an $O(n)$-time histogram algorithm under pure-DP, but their proof did not show that the bins in the histogram are independent.  We give a complete proof on both the privacy and utility of their pure-DP histogram. 

\paragraph{(5) Practical optimizations and experimental study (Section \ref{sec:discuss}-\ref{sec:exp})}
In addition to providing the theoretical guarantees above, we have also made some practical optimizations to our DP-DBSCAN algorithm.  Finally, we conducted extensive experiments on synthetic and real-world datasets to verify its practical performance.

\subsection{Other Related Work}

Besides DBSCAN, some other clustering problems have also been studied under DP, particularly k-means~\cite{DBLP:conf/codaspy/SuCLBJ16,DBLP:conf/icml/BalcanDLMZ17,DBLP:conf/pods/HuangL18,DBLP:conf/nips/StemmerK18,DBLP:conf/nips/Ghazi0M20,DBLP:conf/icml/ChangG0M21,DBLP:conf/aaai/NguyenCX21,DBLP:conf/aaai/JonesNN21,DBLP:conf/icml/CohenKMST21}. In DP-Kmeans, a mechanism publishes $k$ privatized centroids that can be used for classifying new points through a nearest-neighbor query over the centroids. The objective in designing DP-Kmeans mechanisms is to minimize the sum of squared distance between each point and its nearest centroid, i.e., the within-cluster sum of squares (WCSS) function.
Both DBSCAN and k-means (and their private versions) are unsupervised learning methods that work on unlabeled data. On the other hand, supervised learning methods like regression~\cite{DBLP:journals/popets/AlabiMSSV22}, support vector machine~\cite{senekane2019differentially} and deep learning~\cite{DBLP:conf/ccs/AbadiCGMMT016} have been studied under differential privacy, which all require labeled data.

There is also a line of research~\cite{DBLP:conf/asiaccs/BozdemirCEMO021,DBLP:journals/tdp/LiuXLH13,DBLP:conf/globecom/WangLCZH23,Fu2024PPADBSCANP,DBLP:journals/tifs/FuCSXCS24} known as "Privacy-Preserving Clustering," which leverages secure multiparty computation (MPC) techniques to ensure privacy. This approach is orthogonal to our work~\cite{DBLP:journals/popets/HegdeMSY21}:
differential privacy provides an information-theoretic privacy guarantee, but inherently introduces some error. In contrast, MPC protocols offer high accuracy but come with significant computational overhead, which scales with the number of participating parties.

\section{Preliminaries}
\label{sec:pre}


We call two datasets $P,P'\in\mathcal{P}$ neighbors, denoted $P\sim P'$, if they differ by a single point.
\begin{definition}[Differential Privacy~\cite{DBLP:journals/fttcs/DworkR14}]
A mechanism $\M:\mathcal{P}\to \mathcal{O}$ is $(\varepsilon,\delta)$-DP, if for all pairs of neighboring datasets $P\sim P'$ and any subset of outputs $O\subseteq \mathcal{O}$ it holds that
\[
\Pr[\M(P)\in O]\leq e^\varepsilon\cdot\Pr[\M(P')\in O]+\delta\,.
\]
When $\delta=0$, the mechanism is called pure-DP or $\varepsilon$-DP.
\end{definition}

\begin{theorem}[Group Privacy~\cite{DBLP:journals/fttcs/DworkR14}]\label{lemma:group}
We say $P$ and $P'$ are $k$-hop neighbors, if there is a sequence of datasets $P, P_1,\dots, P_{k-1}, P'$ of length $k$ where $P\sim P_1\sim\dots\sim P'$.
Given an $\varepsilon$-DP mechanism $\M$ and any subset of outputs $O\subseteq \mathcal{O}$, we have 
\[
\Pr[\M(P)\in O]\leq e^{k\varepsilon}\cdot\Pr[\M(P')\in O]\,.
\]
\end{theorem}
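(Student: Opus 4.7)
The plan is a straightforward induction on the hop distance $k$, exploiting the fact that $\delta=0$ means each one-hop inequality is purely multiplicative and therefore composes cleanly.

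For the base case $k=1$, the statement is exactly the definition of $\varepsilon$-DP specialized to $\delta=0$: if $P\sim P'$ then $\Pr[\M(P)\in O]\le e^{\varepsilon}\Pr[\M(P')\in O]$ for every measurable $O\subseteq\mathcal{O}$. No extra work is required here.

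For the inductive step, suppose the bound holds for all $(k-1)$-hop pairs, and let $P,P'$ be $k$-hop neighbors witnessed by a chain $P\sim P_1\sim\cdots\sim P_{k-1}\sim P'$. Then $P$ and $P_{k-1}$ are $(k-1)$-hop neighbors via the prefix of this chain, so by the inductive hypothesis $\Pr[\M(P)\in O]\le e^{(k-1)\varepsilon}\Pr[\M(P_{k-1})\in O]$. Since $P_{k-1}\sim P'$, the base case gives $\Pr[\M(P_{k-1})\in O]\le e^{\varepsilon}\Pr[\M(P')\in O]$. Multiplying these two inequalities yields $\Pr[\M(P)\in O]\le e^{k\varepsilon}\Pr[\M(P')\in O]$, closing the induction.

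There is essentially no hard step: the only subtlety worth flagging in the write-up is that the clean telescoping of the ratios relies on $\delta=0$. For a general $(\varepsilon,\delta)$-DP mechanism, the same induction would produce an additive term of the form $\delta\sum_{i=0}^{k-1}e^{i\varepsilon}$ on the right, which is why the statement is phrased only for pure-DP mechanisms. I would therefore keep the proof to two short paragraphs (base case plus inductive step) and add a one-line remark explaining where $\delta=0$ is used.
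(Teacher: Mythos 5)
Your induction is correct: the base case is precisely the definition of $\varepsilon$-DP with $\delta=0$, the inductive step chains the one-hop multiplicative bounds along the sequence $P\sim P_1\sim\dots\sim P'$, and your remark about why the argument is stated only for pure-DP (the additive $\delta\sum_{i=0}^{k-1}e^{i\varepsilon}$ term otherwise) is accurate. The paper does not prove this statement itself — it cites it as a standard result from the Dwork--Roth survey — and your argument is exactly the canonical proof of group privacy, so there is nothing to reconcile.
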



\begin{theorem}[Post-processing~\cite{DBLP:journals/fttcs/DworkR14}]
Let $\M$ be an $(\varepsilon,\delta)$-DP mechanism. For any randomized function $f$, $f\circ \M$ is also $(\varepsilon,\delta)$-DP.
\end{theorem}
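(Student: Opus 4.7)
The plan is to reduce the randomized case to the deterministic case by conditioning on the internal randomness of $f$, and then invoke the DP guarantee of $\M$ by pulling back an output event through $f$.

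First I would handle a deterministic function $f:\mathcal{O}\to\mathcal{O}'$. For any (measurable) subset $O'\subseteq \mathcal{O}'$, the event $\{f(\M(P))\in O'\}$ coincides exactly with the event $\{\M(P)\in f^{-1}(O')\}$. Setting $O := f^{-1}(O')\subseteq \mathcal{O}$, the $(\varepsilon,\delta)$-DP guarantee of $\M$ applied to the set $O$ gives
\[
\Pr[f(\M(P))\in O'] = \Pr[\M(P)\in O] \leq e^\varepsilon \Pr[\M(P')\in O] + \delta = e^\varepsilon \Pr[f(\M(P'))\in O'] + \delta.
\]

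For a randomized $f$, I would view it as a deterministic function $f(\cdot;r)$ of its input together with an independent source of randomness $R$, so that for each fixed $r$ the map $f_r := f(\cdot;r)$ is deterministic. The previous step applies to each $f_r$, giving
\[
\Pr[f_r(\M(P))\in O'] \leq e^\varepsilon \Pr[f_r(\M(P'))\in O'] + \delta
\]
for every realization $r$ and every $O'\subseteq \mathcal{O}'$. Because $R$ is independent of the mechanism's randomness on both $P$ and $P'$, I can take expectation over $R$ on both sides; linearity of expectation preserves the affine inequality and yields the desired
\[
\Pr[f(\M(P))\in O'] \leq e^\varepsilon \Pr[f(\M(P'))\in O'] + \delta.
\]

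There is no real obstacle here beyond bookkeeping: the only subtlety is ensuring that $f^{-1}(O')$ is a valid event in $\mathcal{O}$ and that $R$ is modeled as independent of the coins of $\M$, both of which are part of the standard measure-theoretic setup. No DP tool beyond the definition itself is required, which is why the bound survives post-processing with identical parameters $(\varepsilon,\delta)$.
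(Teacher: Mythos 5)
Your proof is correct and is the standard argument (deterministic case via preimages, then averaging over the independent randomness of $f$), which is exactly the proof in the cited reference~\cite{DBLP:journals/fttcs/DworkR14}; the paper itself states this theorem as a known result without reproving it. The only bookkeeping point, which you already note, is measurability of $f^{-1}(O')$ and independence of $f$'s coins from $\M$'s, so nothing is missing.
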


Two flavors of differential privacy have been considered in the literature~\cite{DBLP:conf/sigmod/KiferM11}. In \textit{unbounded DP}, neighboring instances differ by adding or removing one point, and in \textit{bounded DP}, neighboring instances have the same size while differ by the value of exactly one point. It is easy to see that any pair of bounded DP neighbors are also 2-hop neighbors under unbounded DP. By group privacy, any $\varepsilon$-unbounded DP mechanism also satisfies $(2\varepsilon)$-bounded DP, or more formally:

\begin{theorem}\label{thm:bounded}
If there is an $\varepsilon$-unbounded DP mechanism that has error $\op{Err}(\varepsilon)$, then there is an $\varepsilon$-bounded DP mechanism that has error $\op{Err}(\frac{\varepsilon}{2})$. By contraposition, if there is no $\varepsilon$-bounded DP mechanism that has error $\op{Err}(\varepsilon)$, then there is no  $\varepsilon$-unbounded DP mechanism that has error $\op{Err}(2\varepsilon)$. 
\end{theorem}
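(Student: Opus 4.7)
The plan is to reduce the theorem to a direct application of group privacy (Theorem \ref{lemma:group}), relying on the observation already made in the paragraph preceding the statement: every pair of bounded-DP neighbors $P \sim_b P'$ is a pair of 2-hop unbounded-DP neighbors, witnessed by the intermediate dataset $P_1$ obtained by deleting the point on which $P$ and $P'$ disagree (then $P'$ is reached from $P_1$ by inserting the replacement point).

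For the forward direction, I would interpret $\op{Err}(\cdot)$ as a function of the privacy parameter: the hypothesis gives, for every target $\varepsilon' > 0$, some $\varepsilon'$-unbounded DP mechanism $\M_{\varepsilon'}$ with error $\op{Err}(\varepsilon')$. To build an $\varepsilon$-bounded DP mechanism, I instantiate the hypothesis at $\varepsilon' = \varepsilon/2$, obtaining $\M_{\varepsilon/2}$ with error $\op{Err}(\varepsilon/2)$. For any bounded neighbors $P \sim_b P'$ and any output set $O \subseteq \mathcal{O}$, the 2-hop observation together with Theorem \ref{lemma:group} (applied with $k=2$ and parameter $\varepsilon/2$) yields
\[
\Pr[\M_{\varepsilon/2}(P)\in O] \;\leq\; e^{2\cdot \varepsilon/2}\,\Pr[\M_{\varepsilon/2}(P')\in O] \;=\; e^{\varepsilon}\,\Pr[\M_{\varepsilon/2}(P')\in O],
\]
so $\M_{\varepsilon/2}$ is $\varepsilon$-bounded DP, with error $\op{Err}(\varepsilon/2)$ inherited unchanged (since we are reinterpreting the same mechanism under a different neighborhood relation, without altering its randomness or output).

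The second sentence of the theorem is then pure logical manipulation. Taking the contrapositive of the forward direction gives: if no $\varepsilon$-bounded DP mechanism has error $\op{Err}(\varepsilon/2)$, then no $\varepsilon$-unbounded DP mechanism has error $\op{Err}(\varepsilon)$. Reparameterizing the error function by $\widetilde{\op{Err}}(\varepsilon) := \op{Err}(\varepsilon/2)$ (equivalently $\op{Err}(\varepsilon) = \widetilde{\op{Err}}(2\varepsilon)$) and then dropping the tilde recovers exactly the statement in the theorem.

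There is no real obstacle here: the only technical ingredient is the 2-hop path between bounded neighbors, and the rest is group privacy plus a change of variables. The main subtlety to flag in the write-up is the correct reading of $\op{Err}(\varepsilon)$ as a function of the privacy parameter rather than a single number, since otherwise the asymmetric factor of $2$ between the two halves of the statement is easy to misread.
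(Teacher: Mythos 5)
Your proposal is correct and follows essentially the same route as the paper, which justifies the theorem exactly by the observation that bounded-DP neighbors are 2-hop unbounded-DP neighbors together with group privacy, followed by the rescaling $\varepsilon \mapsto \varepsilon/2$. Your extra care in reading $\op{Err}(\cdot)$ as a function of the privacy parameter is a fair clarification but not a departure from the paper's argument.
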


In this paper, we adopt unbounded DP unless otherwise specified.
 
Next we introduce some standard DP mechanisms.

\begin{definition}
[$\ell_1$-sensitivity~\cite{DBLP:journals/fttcs/DworkR14}]
Given a function $f:\mathcal{P}\to\mathbb{R}^k$, its $\ell_1$-sensitivity is defined as $\Delta:=\max_{P\sim P'} \|f(P)-f(P')\|_1$.
\end{definition}

\begin{theorem}[Laplace Mechanism~\cite{DBLP:journals/fttcs/DworkR14}]
Given a function $f:\mathcal{P}\to\mathbb{R}^k$ that has $\ell_1$-sensitivity $\Delta$, the mechanism $\M(P)=f(P)+(Z_1, Z_2,\dots, Z_k)$ preserves $\varepsilon$-DP, where $Z_i$ are i.i.d.~ random variables drawn from $\op{Lap}(\Delta / \varepsilon)$ with $\Pr[Z=z]=\frac{\varepsilon}{2\Delta}e^{-\frac{\varepsilon|z|}{\Delta}}, z\in\mathbb{R}$.
\end{theorem}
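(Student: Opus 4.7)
The plan is to verify the $\varepsilon$-DP inequality $\Pr[\M(P)\in O]\le e^{\varepsilon}\Pr[\M(P')\in O]$ by first establishing a pointwise bound on density ratios and then integrating. First I would fix neighbors $P\sim P'$ and, using independence of the $Z_i$, write the joint density of $\M(P)=f(P)+(Z_1,\dots,Z_k)$ at an arbitrary $z\in\mathbb{R}^k$ as
\[
p_P(z)=\prod_{i=1}^{k}\frac{\varepsilon}{2\Delta}\exp\!\left(-\frac{\varepsilon\,|z_i-f(P)_i|}{\Delta}\right),
\]
and similarly for $p_{P'}(z)$. Since the normalizing constants $(\varepsilon/2\Delta)^k$ are identical, the ratio simplifies cleanly.

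Next I would bound the ratio $p_P(z)/p_{P'}(z)$ pointwise. Cancelling the prefactors yields
\[
\frac{p_P(z)}{p_{P'}(z)}=\exp\!\left(\frac{\varepsilon}{\Delta}\sum_{i=1}^{k}\bigl(|z_i-f(P')_i|-|z_i-f(P)_i|\bigr)\right).
\]
Applying the triangle inequality $|z_i-f(P')_i|\le |z_i-f(P)_i|+|f(P)_i-f(P')_i|$ coordinate-wise bounds the inner sum by $\sum_{i=1}^{k}|f(P)_i-f(P')_i|=\|f(P)-f(P')\|_1$, which by the definition of $\ell_1$-sensitivity is at most $\Delta$. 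Hence $p_P(z)/p_{P'}(z)\le e^{\varepsilon}$ for every $z$.

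Finally, I would integrate this pointwise inequality $p_P(z)\le e^{\varepsilon}p_{P'}(z)$ over an arbitrary measurable set $O\subseteq\mathbb{R}^k$ to obtain $\Pr[\M(P)\in O]\le e^{\varepsilon}\Pr[\M(P')\in O]$, which is precisely $\varepsilon$-DP (i.e., $(\varepsilon,0)$-DP). There is no genuine obstacle in this proof; the only points requiring care are invoking independence of the $Z_i$ to factorize the joint density, and orienting the triangle inequality in the correct direction so that the sum of coordinate discrepancies collapses to the $\ell_1$-sensitivity bound.
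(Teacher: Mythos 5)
Your proof is correct and is the standard density-ratio argument for the Laplace mechanism; the paper itself states this theorem as a cited preliminary (from Dwork and Roth) without giving a proof, and your argument matches the canonical one from that reference.
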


\begin{theorem}[Geometric Mechanism~\cite{DBLP:journals/siamcomp/GhoshRS12}]\label{thm:geom}
Given a function $f:\mathcal{P}\to\mathbb{Z}^k$ that has $\ell_1$-sensitivity $\Delta$, the mechanism $\M(P)=f(P)+(Z_1, Z_2,\dots, Z_k)$ preserves $\varepsilon$-DP, where $Z_i$ are i.i.d.~ random variables drawn from $\op{Geom}(e^{\varepsilon /\Delta})$ with $\Pr[Z=z]=\frac{e^{\varepsilon/\Delta}-1}{e^{\varepsilon/\Delta}+1}e^{-\frac{\varepsilon|z|}{\Delta}}, z\in\mathbb{Z}$.
\end{theorem}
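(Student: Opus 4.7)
The plan is to mirror the standard proof of the Laplace mechanism, exploiting that the noise is added coordinate-wise and independently. Fix a pair of neighbors $P \sim P'$ and any target output $\y = (y_1,\dots,y_k) \in \mathbb{Z}^k$. Because the $Z_i$'s are independent, the output density at $\y$ factorizes as $\Pr[\M(P)=\y] = \prod_{i=1}^{k} \Pr[Z_i = y_i - f(P)_i]$, and similarly for $P'$. Hence it suffices to bound the per-coordinate ratio and then combine.

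The first step is the one-dimensional bound: for any $a,b \in \mathbb{Z}$, the ratio $\Pr[Z = a]/\Pr[Z = b]$ equals $e^{-\varepsilon(|a|-|b|)/\Delta}$ because the normalizing constant cancels. By the reverse triangle inequality $|a|-|b| \ge -|a-b|$, this ratio is at most $e^{\varepsilon |a-b|/\Delta}$. Applying this with $a = y_i - f(P)_i$ and $b = y_i - f(P')_i$ gives
\[
\frac{\Pr[Z_i = y_i - f(P)_i]}{\Pr[Z_i = y_i - f(P')_i]} \;\leq\; e^{\varepsilon \,|f(P)_i - f(P')_i| / \Delta}.
\]

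Multiplying over $i = 1,\dots,k$, the exponent becomes $\frac{\varepsilon}{\Delta} \sum_i |f(P)_i - f(P')_i| = \frac{\varepsilon}{\Delta} \|f(P)-f(P')\|_1$, which is at most $\varepsilon$ by the definition of $\ell_1$-sensitivity. Thus $\Pr[\M(P)=\y] \le e^{\varepsilon} \Pr[\M(P')=\y]$ for every $\y$, and since the output space is discrete this implies the $\varepsilon$-DP inequality for any subset $O \subseteq \mathbb{Z}^k$ by summation.

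No genuine obstacle arises: the proof is essentially an algebraic identity plus the reverse triangle inequality, with independence of the $Z_i$'s converting a sum-of-absolute-differences into a product of per-coordinate ratios. The only subtlety to double-check is the normalization constant, but since both $\Pr[Z=a]$ and $\Pr[Z=b]$ share the same constant $(e^{\varepsilon/\Delta}-1)/(e^{\varepsilon/\Delta}+1)$, it drops out cleanly and plays no role in the argument.
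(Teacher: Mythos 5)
Your proof is correct: the factorization by independence, the cancellation of the normalizing constant, the reverse triangle inequality per coordinate, and the final summation over the discrete output space together give exactly the standard argument for the geometric mechanism. The paper itself states this theorem as a cited preliminary (from Ghosh, Roughgarden, and Sundararajan) without reproducing a proof, and your argument matches the standard proof in that line of work, so there is nothing to reconcile beyond noting that your derivation is the expected one.
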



A histogram query $f(P)=(x_1,x_2,\dots,x_{|\mathcal{X}|})$ counts the number of points $x_i$ in the dataset for each $i\in\mathcal{X}$.
The $\ell_1$ sensitivity of $f$ is $\Delta=1$ (under unbounded DP, or $\Delta=2$ under bounded DP), thus the noise-adding mechanisms can naturally be applied.
We will use it as the building block of our mechanism. In fact, our algorithm can work with any DP histogram that has an error guarantee~\cite{DBLP:journals/vldb/XuZXYYW13,DBLP:conf/nips/Suresh19,DBLP:journals/jpc/BalcerV19,DBLP:conf/pods/LebedaT23}. For simplicity, we first focus on the standard Laplace histogram for presentation, and discuss the extensions in Section~\ref{sec:hist} and \ref{sec:discuss}.

\begin{theorem}[Laplace histogram]\label{thm:dphist}
Let $\mathcal{X}$ be a universe and $x_i$ be the count of points in dataset $P$ at $i\in\mathcal{X}$. An $\varepsilon$-DP Laplace histogram releases $\tilde{x}_i = x_i+\op{Lap}(\frac{1}{\varepsilon})$ for all $i\in \mathcal{X}$.
For any $0<\beta_0<1$, we have with probability $1-\beta_0$, its $\ell_\infty$ error is bounded by
\[
\Pr\left[\max_{i\in\mathcal{X}} |\tilde{x}_i - x_i| \geq \frac{1}{\varepsilon}\ln\frac{|\mathcal{X}|}{\beta_0}\right] \leq \beta_0\,.
\]
\end{theorem}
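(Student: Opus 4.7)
The plan is to handle the two claims of the theorem separately: the $\varepsilon$-DP guarantee and the $\ell_\infty$ error bound.

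For the privacy claim, I would observe that changing a single point in $P$ (adding or removing it under unbounded DP) changes exactly one coordinate of $f(P)=(x_1,\dots,x_{|\X|})$ by exactly $1$, so the $\ell_1$-sensitivity of $f$ is $\Delta=1$. The mechanism adds independent $\op{Lap}(1/\varepsilon)=\op{Lap}(\Delta/\varepsilon)$ noise to each coordinate, hence by the Laplace Mechanism theorem it is $\varepsilon$-DP. This step is routine and I would dispatch it in one or two lines.

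For the utility claim, the plan is a standard tail-bound-plus-union-bound argument. First, for a single $Z\sim\op{Lap}(1/\varepsilon)$ and any $t>0$, the Laplace density gives
\[
\Pr[|Z|\geq t]=\int_{t}^{\infty}\varepsilon e^{-\varepsilon z}\,dz = e^{-\varepsilon t}.
\]
Setting $t=\frac{1}{\varepsilon}\ln\frac{|\X|}{\beta_0}$ yields $\Pr[|Z|\geq t]=\beta_0/|\X|$. Then I would apply a union bound over the $|\X|$ bins: since $\tilde{x}_i-x_i$ is distributed as $\op{Lap}(1/\varepsilon)$,
\[
\Pr\!\left[\max_{i\in\X}|\tilde{x}_i-x_i|\geq \tfrac{1}{\varepsilon}\ln\tfrac{|\X|}{\beta_0}\right]\leq \sum_{i\in\X}\Pr[|\tilde{x}_i-x_i|\geq t]\leq |\X|\cdot \tfrac{\beta_0}{|\X|}=\beta_0,
\]
which is exactly the claimed bound.

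There is no real obstacle here; the argument is entirely standard. The one thing worth flagging is that the union bound only requires the marginals of the noise to be $\op{Lap}(1/\varepsilon)$ and does not need the independence across bins, so the bound would still go through even if one only knew the per-coordinate distribution. Independence of the bins is, however, the subtler point that Section~\ref{sec:hist} will need when turning this naive $O(|\X|)$-time mechanism into the linear-time variant of \citet{DBLP:conf/icdt/CormodePST12}; for the present theorem it is not needed.
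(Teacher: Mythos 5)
Your proof is correct and is exactly the standard argument this theorem rests on: the paper states it as a known preliminary without proof (noting only that the histogram query has $\ell_1$-sensitivity $1$ so the Laplace mechanism applies), and your sensitivity-$1$ privacy step plus the Laplace tail bound $e^{-\varepsilon t}$ with a union bound over the $|\mathcal{X}|$ bins is precisely that derivation. Your remark that independence is not needed for the union bound, but becomes the delicate point for the linear-time histogram of Section~\ref{sec:hist}, is also consistent with the paper.
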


We will denote this error bound by $\gamma=O\left(\frac{1}{\varepsilon}\log\frac{|\mathcal{X}|}{\beta_0}\right)$.
When multiple independent Laplace or Geometric noises are summed, their error can be bounded using concentration inequalities.

\begin{theorem}[Laplace Concentration~\cite{DBLP:conf/icalp/ChanSS10}]\label{thm:lapConcentration}
Let $Z_1,\dots,Z_{\kappa}\sim \op{Lap}(\frac{1}{\varepsilon})$ be independent Laplace random variables, then for any $0<\beta_0<1$,
\[
\Pr\left[\left|\sum_{i=1}^\kappa Z_i\right|>\frac{2\sqrt{2}}{\varepsilon}\max\left\{\sqrt{\kappa\ln\frac{2}{\beta_0}},\ln\frac{2}{\beta_0}\right\}\right]\leq \beta_0\,.
\]
\end{theorem}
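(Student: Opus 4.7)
The plan is to prove Theorem~\ref{thm:lapConcentration} by a standard Chernoff-style MGF argument, treating $S=\sum_{i=1}^\kappa Z_i$ as a sub-exponential random variable and splitting into two regimes depending on whether $\kappa$ or $\ln(2/\beta_0)$ dominates. Since the distribution of each $Z_i$ is symmetric about $0$, I would bound $\Pr[S>\lambda]$ and $\Pr[-S>\lambda]$ separately, each by $\beta_0/2$, and finish with a union bound.

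First I would recall the moment generating function of a centered Laplace: for $Z\sim\op{Lap}(1/\varepsilon)$ and $|t|<\varepsilon$,
\[
\mathbb{E}\bigl[e^{tZ}\bigr]=\frac{1}{1-(t/\varepsilon)^2}.
\]
Then by independence, $\mathbb{E}[e^{tS}]=(1-(t/\varepsilon)^2)^{-\kappa}$, and Markov's inequality gives $\Pr[S>\lambda]\le e^{-t\lambda}(1-(t/\varepsilon)^2)^{-\kappa}$ for any $0<t<\varepsilon$. I would restrict to $|t/\varepsilon|\le 1/\sqrt{2}$, where the elementary inequality $1/(1-x)\le e^{2x}$ on $x\in[0,1/2]$ yields the sub-Gaussian-looking upper bound
\[
\Pr[S>\lambda]\le \exp\!\Bigl(-t\lambda+\tfrac{2\kappa t^2}{\varepsilon^2}\Bigr).
\]

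Now I would optimize $t$ in the two regimes. In the light-tail regime $\lambda\le 2\sqrt{2}\,\kappa/\varepsilon$, the unconstrained optimum $t^\ast=\lambda\varepsilon^2/(4\kappa)$ satisfies $|t^\ast/\varepsilon|\le 1/\sqrt 2$, giving $\Pr[S>\lambda]\le\exp(-\lambda^2\varepsilon^2/(8\kappa))$; setting this to $\beta_0/2$ produces $\lambda=\frac{2\sqrt 2}{\varepsilon}\sqrt{\kappa\ln(2/\beta_0)}$. In the heavy-tail regime (where this $\lambda$ exceeds $2\sqrt 2\kappa/\varepsilon$, equivalently $\ln(2/\beta_0)\ge\kappa$), I would instead pin $t=\varepsilon/\sqrt 2$ at the boundary, which gives $\Pr[S>\lambda]\le\exp(-\lambda\varepsilon/\sqrt 2+\kappa)$; solving this $\le\beta_0/2$ and using $\kappa\le\ln(2/\beta_0)$ yields $\lambda\le\frac{2\sqrt 2}{\varepsilon}\ln(2/\beta_0)$ as a sufficient threshold. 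Combining the two regimes gives exactly the $\max\{\sqrt{\kappa\ln(2/\beta_0)},\ln(2/\beta_0)\}$ expression stated. The symmetric argument on $-S$ and a final union bound over the two tails pays the factor of $2$ inside the logarithm, matching the bound as written.

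The main obstacle I expect is keeping the constants tight enough to land on the stated $2\sqrt 2$ coefficient in both regimes: the inequality $1/(1-x)\le e^{2x}$ is what forces the sub-Gaussian proxy variance to be $2\kappa/\varepsilon^2$ rather than the naive $\kappa/\varepsilon^2$ from the variance of a Laplace, and I need the boundary choice $t=\varepsilon/\sqrt 2$ (not $t=\varepsilon$) to stay inside the validity range of that inequality. Beyond this bookkeeping, the argument is routine; the quoted result follows from a direct Chernoff bound plus regime-splitting, with no deeper machinery required.
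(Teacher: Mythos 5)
Your proof is correct: the MGF bound $\mathbb{E}[e^{tZ}]=(1-(t/\varepsilon)^2)^{-1}$, the inequality $1/(1-x)\le e^{2x}$ on $[0,1/2]$, the regime split at $\ln(2/\beta_0)\lessgtr\kappa$ with $t^\ast=\lambda\varepsilon^2/(4\kappa)$ versus $t=\varepsilon/\sqrt{2}$, and the symmetry plus union bound all check out and land exactly on the stated $\frac{2\sqrt{2}}{\varepsilon}\max\{\sqrt{\kappa\ln(2/\beta_0)},\ln(2/\beta_0)\}$ threshold. The paper itself gives no proof, importing the result from the cited reference of Chan, Shi, and Song, and your argument is essentially the same Chernoff-style derivation used there, so there is nothing further to reconcile.
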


\begin{theorem}[Geometric Concentration~\cite{DBLP:conf/esa/ChanSS12}]\label{thm:geomConcentration}
Let $Z_1,\dots,Z_{\kappa}\sim \op{Geom}(e^{\varepsilon})$ be independent Geometric random variables, then for any $0<\beta_0<1$,
\[
\Pr\left[\left|\sum_{i=1}^\kappa Z_i\right|>\frac{4e^\varepsilon}{e^\varepsilon-1}\cdot\sqrt{\kappa}\cdot \ln\frac{2}{\beta_0}\right]\leq \beta_0\,.
\]
If in addition $\kappa\geq e^\varepsilon\cdot \ln\frac{2}{\beta_0}$, then
\[
\Pr\left[\left|\sum_{i=1}^\kappa Z_i\right|>\frac{4\sqrt{e^\varepsilon}}{e^\varepsilon-1}\sqrt{\kappa\ln\frac{2}{\beta_0}}\right]\leq \beta_0\,.
\]
\end{theorem}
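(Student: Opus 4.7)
The plan is to prove both tails via a Chernoff/MGF argument, exploiting the explicit form of the symmetric Geometric law. First I would compute the moment generating function of a single $Z_i \sim \op{Geom}(e^\varepsilon)$ directly from the definition: with $p := e^{-\varepsilon}$, summing the two geometric series for $z \ge 0$ and $z \le 0$ and simplifying gives
\[
\mathbb{E}[e^{tZ_i}] \;=\; \frac{(1-p)^2}{(1-p e^{t})(1-p e^{-t})},\qquad |t|<\varepsilon,
\]
so $\mathbb{E}[Z_i] = 0$ and $\operatorname{Var}(Z_i) = 2p/(1-p)^2 = 2 e^\varepsilon / (e^\varepsilon-1)^2$. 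By independence, $\mathbb{E}[e^{t S_\kappa}] = \mathbb{E}[e^{tZ_1}]^\kappa$ for $S_\kappa := \sum_{i=1}^\kappa Z_i$, and Markov's inequality on $e^{t S_\kappa}$ gives the basic template $\Pr[S_\kappa > r] \le e^{-tr} \mathbb{E}[e^{tZ_1}]^\kappa$ for $t\in(0,\varepsilon)$.

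Next I would bound $\ln \mathbb{E}[e^{tZ_1}]$ above by a sub-exponential expression. Expanding $-\ln(1-pe^{\pm t})$ and using $\cosh(t) - 1 \le t^2/2 \cdot e^{|t|}$ (for small $|t|$) I expect a bound of the shape
\[
\ln \mathbb{E}[e^{tZ_1}] \;\le\; \frac{e^\varepsilon}{(e^\varepsilon-1)^2}\cdot t^2 \cdot \phi(t),
\]
with $\phi(t)\to 1$ as $t\to 0$ and $\phi$ controlled on $|t| \le \varepsilon/2$. Then I would optimize $t$ in two regimes. In the sub-Gaussian regime — which is exactly where the side hypothesis $\kappa \ge e^\varepsilon \ln(2/\beta_0)$ places us — the optimal $t$ lies safely inside $(0,\varepsilon/2)$, the quadratic term dominates, and standard tuning $t \asymp \sqrt{\ln(2/\beta_0)/\kappa}\cdot(e^\varepsilon-1)/\sqrt{e^\varepsilon}$ yields the claimed $\frac{4\sqrt{e^\varepsilon}}{e^\varepsilon-1}\sqrt{\kappa\ln(2/\beta_0)}$ tail. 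For the unconditional first bound, I would instead pick $t$ proportional to $1/\sqrt{\kappa}$ (times a $\varepsilon$-dependent factor), accepting a looser scaling of $\sqrt{\kappa}\ln(2/\beta_0)$ in exchange for no restriction on $\kappa$. A symmetric choice of $t<0$ gives the lower tail, and a factor-$2$ union bound converts the one-sided tails into the stated two-sided ones.

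The main obstacle I anticipate is getting the constants $4$, $\tfrac{e^\varepsilon}{e^\varepsilon-1}$, and $\tfrac{\sqrt{e^\varepsilon}}{e^\varepsilon-1}$ exactly right, because the intrinsic scale $\operatorname{Var}(Z_i)\asymp 1/\varepsilon^2$ for small $\varepsilon$ and $\operatorname{Var}(Z_i)\to 0$ for large $\varepsilon$ must be threaded correctly through the MGF bound across the regime boundary $\kappa = e^\varepsilon \ln(2/\beta_0)$. A cleaner bookkeeping alternative I would keep in reserve is the decomposition $Z_i = X_i - Y_i$ with $X_i,Y_i$ i.i.d.\ one-sided $\op{Geom}(1-e^{-\varepsilon})$; then $\sum_i X_i$ and $\sum_i Y_i$ are negative binomials to which standard Chernoff bounds for sums of geometric random variables apply directly, after which the triangle inequality $|S_\kappa| \le |\sum X_i - \mu| + |\sum Y_i - \mu|$ recombines the two pieces.
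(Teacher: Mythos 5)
The paper does not prove this statement: Theorem~\ref{thm:geomConcentration} is imported verbatim, with a citation, from Chan, Shi and Song, so there is no internal proof to compare yours against. Judged on its own, your MGF/Chernoff plan is sound and is essentially the same technique used in the cited source. Your MGF computation is correct: with $p=e^{-\varepsilon}$, $\mathbb{E}[e^{tZ}]=\frac{(1-p)^2}{(1-pe^t)(1-pe^{-t})}$ for $|t|<\varepsilon$, which can be rewritten as $\exp\bigl(-\ln(1-u)\bigr)$ with $u=\frac{2p(\cosh t-1)}{(1-p)^2}$, and $\cosh t-1\le \frac{t^2}{2}e^{|t|}$ gives exactly the quadratic-with-correction shape you describe, with variance proxy $\frac{e^{\varepsilon}}{(e^{\varepsilon}-1)^2}$. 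The constants do come out, with slack: in the second regime take $t=\frac{e^{\varepsilon}-1}{2\sqrt{e^{\varepsilon}}}\sqrt{\ln(2/\beta_0)/\kappa}$, so that $tr=2\ln\frac{2}{\beta_0}$ for the stated $r$; the hypothesis $\kappa\ge e^{\varepsilon}\ln\frac{2}{\beta_0}$ forces $t\le\frac{1-e^{-\varepsilon}}{2}<\varepsilon$ and $u\le \frac{1}{4}e^{1/2}\approx 0.42$, whence $\kappa\ln\mathbb{E}[e^{tZ}]\le 0.7\ln\frac{2}{\beta_0}$ and the exponent is at most $-1.3\ln\frac{2}{\beta_0}$, which after the two-sided union bound gives the claimed constant $4$. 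For the unconditional bound take $t=\frac{e^{\varepsilon}-1}{2e^{\varepsilon}\sqrt{\kappa}}$; using $1-e^{-\varepsilon}\le\varepsilon$ one gets $u\le\frac{1}{4\kappa}$ and $\kappa\ln\mathbb{E}[e^{tZ}]\le\frac13$, which suffices because $\ln\frac{2}{\beta_0}>\ln 2$. The one place where care is genuinely needed, and which you partly anticipate, is that you should not plug in the exact optimizer of the pure quadratic bound: when $\varepsilon$ is small and $\kappa$ is near $e^{\varepsilon}\ln\frac{2}{\beta_0}$, that choice pushes $u$ close to $1$, where $-\ln(1-u)$ blows up; backing off by a constant factor, as above, is what makes the constant $4$ go through. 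Your reserve route via $Z_i=X_i-Y_i$ with $X_i,Y_i$ i.i.d.\ one-sided geometric is also valid (the difference of two such variables is exactly the two-sided geometric law used here), so either bookkeeping works.
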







\section{Approximate DBSCAN under DP}\label{sec:span}

In this section, we prove that the existing definition of DP-DBSCAN cannot achieve any meaningful utility guarantee, and propose the definition of spans.
Since any non-trivial DP mechanism must be randomized, for a mechanism that outputs cluster labels, we say it is $(\rho,\tau;\beta)$-accurate if it produces $(\rho,\tau)$-approximate clusters in Definition~\ref{def:sand} with probability at least $1-\beta$.
Smaller $\rho$, $\tau$ and $\beta$ gives better approximations.
In non-private setting, $\tau=\beta=0$ can be achieved for any constant $\rho>1$ in linear time \cite{DBLP:conf/sigmod/GanT15}.
But for DP-DBSCAN, we show that it is impossible for an algorithm to have a good utility with this definition.
We prove all the lower bounds in this paper under bounded DP. By Theorem~\ref{thm:bounded}, they also hold under unbounded DP up to a factor of $2$ in $\varepsilon$.

\begin{lemma}
Let $\M$ be any $\varepsilon$-bounded DP mechanism that is $(\rho,\tau;\beta)$-accurate for $(\alpha,\MinPts)$-DBSCAN, then $\beta\geq \frac{n}{n+e^\varepsilon}$. Correspondingly, any $\varepsilon$-unbounded DP mechanism that is $(\rho,\tau;\beta)$-accurate must have $\beta\geq \frac{n}{n+e^{2\varepsilon}}$. This holds for any $0<\tau<\MinPts$ and approximation ratio $\rho>0$.
\end{lemma}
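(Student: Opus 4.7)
The plan is to build a ``star''-shaped family of $n+1$ datasets $P^{(0)},P^{(1)},\ldots,P^{(n)}$ of size $n$ such that each $P^{(i)}$ for $i\geq 1$ is a $1$-hop bounded-DP neighbor of the central dataset $P^{(0)}$ and the correct $(\rho,\tau)$-approximate partitions of the $P^{(i)}$'s are pairwise distinguishable. A single packing-type inequality against $P^{(0)}$ then closes the argument.

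Concretely, I would pick a ``cluster location'' $y_0\in\LL$ and an ``isolated location'' $y_1\in\LL$ at distance $\dist(y_0,y_1)>\rho\alpha$, which exists whenever $\rho\alpha$ is less than the diameter of $\LL$. Set $P^{(0)}$ to be all $n$ points placed at $y_0$, and obtain $P^{(i)}$ from $P^{(0)}$ by relocating only point $i$ to $y_1$. Assuming $n-1\geq\MinPts$ (otherwise the lemma is vacuous), each $P^{(i)}$ differs from $P^{(0)}$ in exactly one slot, so they are $1$-hop bounded-DP neighbors.

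The structural step is to pin down the valid partitions. For $P^{(0)}$ the unique valid partition is the single cluster containing all $n$ points, since the only $(\rho\alpha,\MinPts-\tau)$-cluster is this whole set, and Conditions~1--2 of Definition~\ref{def:sand} force everything into it. For $P^{(i)}$ with $i\geq 1$, the $n-1$ co-located points must lie in a common approximate cluster (Condition~1), while point $i$ sits farther than $\rho\alpha$ from every other point and therefore is either labeled noise or, only in the boundary case $\MinPts-\tau=1$, forms its own singleton cluster; either way point $i$ is separated from the rest. Letting $O^{(i)}$ denote the set of label vectors encoding any valid partition for $P^{(i)}$, the sets $O^{(0)},O^{(1)},\ldots,O^{(n)}$ are pairwise disjoint: $O^{(0)}$ is characterized by ``no point separated'', while $O^{(i)}$ for $i\geq 1$ is characterized by ``point $i$ separated from the other $n-1$''.

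The endgame is the packing inequality. Accuracy gives $\Pr[\M(P^{(i)})\in O^{(i)}]\geq 1-\beta$ for every $i$, and for $i\geq 1$, $\varepsilon$-bounded DP applied to the $1$-hop pair $(P^{(0)},P^{(i)})$ yields $\Pr[\M(P^{(0)})\in O^{(i)}]\geq e^{-\varepsilon}(1-\beta)$. Summing over all $i$ and using the disjointness of the $O^{(i)}$'s,
\[
1\geq \sum_{i=0}^{n}\Pr[\M(P^{(0)})\in O^{(i)}]\geq (1-\beta)+n e^{-\varepsilon}(1-\beta),
\]
which rearranges to $\beta\geq n/(n+e^\varepsilon)$. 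The unbounded-DP half of the lemma is then immediate from Theorem~\ref{thm:bounded}. The main obstacle I anticipate is cleanly ruling out spurious valid partitions in the boundary regime $\MinPts-\tau=1$ where the displaced point can technically form its own cluster; the key observation that rescues the argument is that even such a partition still separates point $i$ from the big cluster, preserving disjointness of the $O^{(i)}$'s.
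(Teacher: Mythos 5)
Your proposal is correct and follows essentially the same route as the paper: all $n$ points co-located, $n$ one-hop neighbors each displacing a single point beyond the $\rho\alpha$ reach, disjoint output events forcing the point to be separated, and a packing inequality yielding $\beta\geq n/(n+e^\varepsilon)$, with Theorem~\ref{thm:bounded} giving the unbounded case. The only cosmetic difference is that the paper fixes $\alpha=\rho^{-1}$ (using that reachability requires strict inequality, so distance exactly $\rho\alpha$ already separates the points) so that the construction works for every $\rho>0$, whereas you assume $\rho\alpha$ is strictly below the diameter of $\LL$.
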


\begin{proof}
We construct a 1D example.
Let $P$ be a dataset where all points are located at the origin, then DBSCAN should output a single cluster $\mathcal{C}(\alpha,\MinPts)=\{P\}$ for any $\alpha>0$ and $\MinPts\geq 1$. We fix $\alpha=\rho^{-1}$ and an arbitrary $\MinPts<n$. As $\M$ is $(\rho,\tau;\beta)$-accurate on $P$, the approximate cluster must be exactly $\{P\}$, thus $\Pr[\M(P)=\{P\}]\geq 1-\beta$. With high probability, all the points are assigned the same cluster id.

Consider a (bounded) neighbor $P_i'$, obtained by moving a single point $i$ from $p_i=0$ to $p_i'=1$.
There are still $n-1\geq \MinPts$ points at the origin and they belong to the same cluster. We have $P_i'-\{p_i'\}\subseteq \hat{C}$ for some $\hat{C}\in \M(P_i')$.
In addition, the second condition indicates $p_i'\not\in\hat{C}$ because otherwise we have $\dist(0,p_i')=1<\rho\alpha$, which is a contradiction. Let $O_i$ be all the outputs of $\M$ on $P_i'$ satisfying that 1) all points other than $i$ share the same non-zero label, and 2) point $i$ has a different label (which may be either a different cluster or a noise).
We then have $\Pr[\mathcal{M}(P_i')\in O_i]\geq 1-\beta$.

Note that $O_i$'s are disjoint for $i=1,\dots,n$, and by $\varepsilon$-DP, we have
\begin{align*}
\beta & \geq \Pr[M(P)\neq \{P\}] \\
& \geq \sum_{i=1}^n\Pr[M(P)\in O_i] \\
& \geq e^{-\varepsilon}\cdot \sum_{i=1}^n\Pr[M(P_i')\in O_i] \\
& \geq e^{-\varepsilon}n(1-\beta)
\end{align*}

Solve for $\beta$, we have $\beta\geq \frac{n}{n+e^\varepsilon}$. Plug in Theorem~\ref{thm:bounded}, we get the negative result for unbounded DP.
\end{proof}

Observe that for constant $\varepsilon$, the failure probability $\beta\approx 1$, and there is no meaningful utility guarantee.
Further, since the theorem holds for an arbitrarily large $\rho$, the hardness remains even when a large approximation ratio can be allowed, meaning no algorithm following the existing DP-DBSCAN definition can guarantee
good utility for all instances.
The main difficulty lies in the close connection between clustering results and the private points. In view of this, we propose a new way of describing a cluster $C$ called the \textit{span} of $C$, defined as $\Span(C)=\cup_{p\in C}\mathbf{B}(p,\alpha)=\{l\in \LL:\exists p\in C, d(p,l)<\alpha\}$.
Instead of points $p\in P$, a span is formed by locations $l\in \LL$, making it less sensitive to individual points, but as informative as the original clusters.

We argue that the definition of spans of clusters are well-justified with practical applications. In fact, noise points within the span of a cluster (or several clusters) are called \textit{boarder points} in~\cite{DBLP:conf/sigmod/GanT15}. They can be considered dense in a relaxed setting: for any $p'\in \Span(C)$, there is a core point $p\in C$ such that $\dist(p',p)<\alpha$, and therefore $|\mathbf{B}(p',2\alpha)\cap P|\geq |\mathbf{B}(p,\alpha)\cap P|\geq \MinPts$.
In practice, it is common for implementations of DBSCAN to include visualizations more than scatter points, e.g.~convex hulls~\cite{JSSv091i01}. Our definition of spans can achieve the same functionality in a more fine-grained way since DBSCAN supports finding non-convex clusters (e.g.~Figure~\ref{fig:spanCirc}).
Further, spans can offer a geometric interpretation of the shapes and locations of clusters, which cannot be done by existing DP-DBSCAN mechanisms that outputs cluster labels while hiding point locations.

\begin{remark}
Our definition of spans of clusters can be viewed as a DBSCAN analogy to the centers under k-means clustering, or equivalently, the Voronoi diagram formed by these centers.
Note that DP Kmeans protects privacy by outputting the centers only, without revealing individual labels. A person can compare his/her data with all the centers to find the nearest center as his/her cluster. This is essentially querying the Voronoi diagram formed by the privatized centers. The spans of clusters offer exactly the same functionality for DBSCAN.
\end{remark}

Similarly, for span-based mechanisms, we say it is $(\rho,\tau;\beta)$-accurate if it produces $(\rho,\tau)$-approximate spans in Definition~\ref{def:ratio} with probability at least $1-\beta$.
Compared with Definition~\ref{def:sand} from~\cite{DBLP:conf/sigmod/GanT15}, the first condition is unchanged.
The second condition is naturally relaxed from $\hat{C}\subseteq C_2$ to $S\subseteq\Span(C_2)$, which bounds how large the approximate spans can be.

\section{Lower Bounds}\label{sec:lower}

In this section, we show that even when we consider span-based mechanisms, both approximation factors $\rho$ and $\tau$ are still necessary.
We first show that the approximation on $\MinPts$ is $\tau=\Omega(\frac{1}{\varepsilon}\log \frac{1}{\rho\alpha})$ using a packing argument.

\begin{theorem}
For $\rho>1$, any $\varepsilon$-bounded or unbounded DP mechanism $\M$ that is $(\rho,\tau;0.1)$-accurate for $(\alpha,\MinPts)$-clustering must have $\tau=\Omega(\frac{1}{\varepsilon}\log \frac{1}{\rho\alpha})$.
\end{theorem}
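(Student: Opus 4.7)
The plan is a packing argument. In one dimension, I choose $k=\lfloor 1/(2\rho\alpha)\rfloor=\Theta(1/(\rho\alpha))$ anchor locations $l_1,\dots,l_k\in[0,1]$ spaced by at least $2\rho\alpha$, so that the balls $\mathbf{B}(l_i,\rho\alpha)$ are pairwise disjoint. I build a baseline dataset $P_0$ by placing $\MinPts-\tau-1$ coincident points at each anchor $l_i$, together with $\tau+1$ singleton ``reservoir'' points at additional locations mutually separated by more than $\rho\alpha$ from one another and from every anchor. For each $i\in\{1,\dots,k\}$, let $P_i$ be obtained from $P_0$ by relocating all $\tau+1$ reservoir points to $l_i$; then $|P_i|=|P_0|$ and the two datasets differ on exactly $\tau+1$ points, making them $(\tau+1)$-hop neighbors under bounded DP.

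Next I would analyze the DBSCAN outputs on $P_i$. The anchor $l_i$ carries $\MinPts$ coincident points, yielding the singleton cluster $\{l_i\}\in\mathcal{C}(\alpha,\MinPts)$, while every other anchor has only $\MinPts-\tau-1$ points and the $2\rho\alpha$ spacing prevents any interaction across anchors (and between reservoirs and anchors). Hence even under the relaxed parameters, $\mathcal{C}(\rho\alpha,\MinPts-\tau)$ on $P_i$ is the single cluster at $l_i$. Definition~\ref{def:ratio} therefore forces any $(\rho,\tau)$-approximate span collection $\mathcal{S}$ on $P_i$ to be non-empty (condition 1) and to have every $S\in\mathcal{S}$ satisfy $S\subseteq \Span(\{l_i\})\subseteq \mathbf{B}(l_i,\rho\alpha)$ (condition 2). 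Defining the event $E_i$ as ``$\mathcal{S}\neq\emptyset$ and every $S\in\mathcal{S}$ lies in $\mathbf{B}(l_i,\rho\alpha)$'', we get $\Pr[\M(P_i)\in E_i]\geq 0.9$, and the $E_i$'s are pairwise disjoint because the balls $\mathbf{B}(l_i,\rho\alpha)$ are disjoint.

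Group privacy (Theorem~\ref{lemma:group}) applied across the $\tau+1$ hops separating $P_0$ and $P_i$ gives $\Pr[\M(P_0)\in E_i]\geq e^{-(\tau+1)\varepsilon}\cdot 0.9$ for every $i$. Summing over the disjoint events yields $1\geq\sum_{i=1}^k\Pr[\M(P_0)\in E_i]\geq 0.9\,k\,e^{-(\tau+1)\varepsilon}$, which rearranges to $\tau\geq \frac{1}{\varepsilon}\ln(0.9k)-1=\Omega\!\left(\frac{1}{\varepsilon}\log\frac{1}{\rho\alpha}\right)$. The unbounded-DP statement then follows from Theorem~\ref{thm:bounded}.

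The main obstacle is calibrating the construction so that the two sandwich conditions interlock: condition 2 on each $P_i$ is what confines the accurate output to a single small ball and thereby makes the $E_i$'s exclusive, while at the same time the reservoir points in $P_0$ must not accidentally create relaxed clusters (which could let the mechanism ``cheat'' by legitimately placing spans at reservoir locations and breaking the disjointness argument). Keeping reservoir points as singletons separated by more than $\rho\alpha$ handles this cleanly as long as $\MinPts-\tau\geq 2$; the edge case $\tau=\MinPts-1$ is consistent with the claim since then $\tau$ is already of order $\MinPts$ and the feasibility constraint $\tau<\MinPts$ is more restrictive than the target bound itself.
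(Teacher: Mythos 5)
Your proposal is correct and follows essentially the same route as the paper: a 1D packing argument with $\Theta(\frac{1}{\rho\alpha})$ well-separated locations, datasets that are $(\tau+1)$-hop bounded neighbors, disjoint output events confined to the balls $\mathbf{B}(l_i,\rho\alpha)$ by the two sandwich conditions, and group privacy to sum the probabilities (the paper sums against the failure probability of one $P_k$ with $\MinPts=\tau+2$ fixed, while you sum to $1$ from a hub dataset $P_0$ and keep $\MinPts$ general---a cosmetic variation). The only nitpick is that, as stated, your events $E_i$ fail to be disjoint on the degenerate output consisting only of empty spans; simply add to $E_i$ the requirement that some span contains $l_i$ (which condition 1 already guarantees on $P_i$), exactly as the paper's events $O_k$ do.
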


\begin{proof}
We introduce a notation for simplicity in the proof. Let $\mathcal{S}$ be any output of $\M$. We say $\mathcal{S}$ covers location $l$, denoted $\mathcal{S}\vdash l$, if $l$ is covered by at least one approximate span in $\mathcal{S}$. Namely, $\mathcal{S}\vdash l$ if and only if $\exists S\in \mathcal{S}, l\in S$. Correspondingly, $\mathcal{S}\nvdash l$ if $\forall S\in\mathcal{S}, l\not\in S$.

We construct an example in 1D, with DBSCAN parameters $\alpha$ and $\MinPts=\tau+2$.
Consider $K$ locations in $[0, 1]$, which we label by $l_k=2\rho\alpha \cdot k$ for $k=1,\dots,K$ where $K=\lfloor\frac{1}{2\rho\alpha}\rfloor$. Note that the distance between any pair of locations is $\dist(l_k, l_j)>\rho\alpha$ for $k\neq j$.
We will consider $K$ point sets $P_1,\dots,P_K$.
Each dataset $P_k$ contains $n=\tau+K+1$ points, with $\tau+2$ points located at $l_k$ and a single point at each other $l_j$, $j\neq k$.
Then for any $j\neq k$, $P_j$ and $P_k$ are $(\tau+1)$-hop (bounded) neighbors.
By group privacy (Lemma~\ref{lemma:group}), this means for any set of outputs $O$,
$\Pr[\M(P_k)\in O]\leq e^{(\tau+1)\varepsilon}\cdot \Pr[\M(P_j)\in O]$.

We then classify all possible outputs of $\M$ into sets $O_i$.
In particular, we denote $\mathcal{S}\in O_k$ if the output $\mathcal{S}$ covers location $l_k$ but not any other location $l_j$ for $j\neq k$. That is, 
\[
O_k=\{\mathcal{S}: \mathcal{S}\vdash l_k\text{ and }\forall j\neq k, \mathcal{S}\nvdash l_j\}\,.
\]
We argue that if $\M$ is $(\rho,\tau;0.1)$-accurate, then with high probability, running $\M$ on $P_k$ will produce some $\mathcal{S}$ that belongs to $O_k$, which gives $\Pr[\M(P_k)\in O_k]\geq 0.9$.
To see this, first observe that there is a unique $(\alpha,\tau+2)$-cluster on $P_k$, containing all the points at location $l_k$.
Even with relaxed parameters $(\rho\alpha, 2)$, there is still the same unique cluster, since any point not at $l_k$ have only a density of $1<2$ in its neighborhood.
Then by the first guarantee in Definition~\ref{def:ratio}, there exists  $S\in \mathcal{S}$ containing points at $l_k$, meaning $\mathcal{S} \vdash l_k$.
By the second guarantee, any $S\in \mathcal{S}$ must satisfy $S\subseteq \Span(C_2)=\mathbf{B}(l_k,\rho\alpha)$.
But we have $\dist(l_k, l_j)>\rho\alpha$ for any $j\neq k$, it follows that $l_j\not\in S$ for any $S\in\mathcal{S}$, that is $\mathcal{S}\nvdash l_j$ for any $j\neq k$.
Combine both, we have $\mathcal{S}\in O_k$, which happens with probability at least $0.9$.

Finally note that the sets $O_k$'s are disjoint. So we conclude
\begin{align*}
0.1&\geq \Pr[\M(P_1)\not\in O_1] \\
&\geq \sum_{k=2}^K \Pr[\M(P_1)\in O_k] \\
&\geq \sum_{k=2}^K e^{-(\tau+1)\varepsilon}\Pr[\M(P_k)\in O_k] \\
&\geq 0.9\cdot \left(\frac{1}{2\rho\alpha}-2\right)\cdot e^{-(\tau+1)\varepsilon}\,,
\end{align*}
which gives $\tau=\Omega(\frac{1}{\varepsilon}\log \frac{1}{\rho\alpha})$. By Theorem~\ref{thm:bounded}, we have the same asymptotic result for unbounded DP.
\end{proof}

We next show that even after relaxing to $\MinPts-\tau$, the relaxation from $\alpha$ to $\rho\alpha$ is still necessary.
In particular, $\rho\geq 3$ for any $\tau\leq n/2-1$.

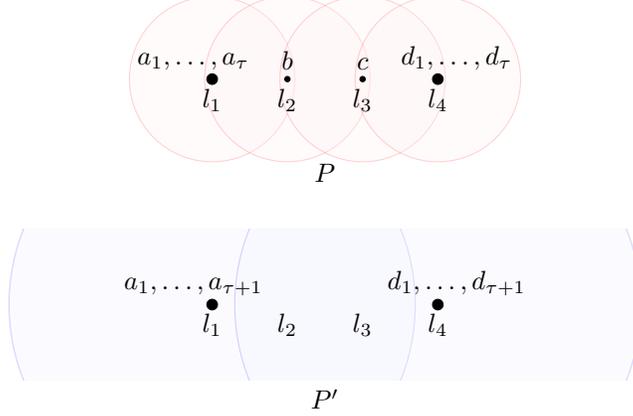
\begin{figure}[htbp]
\centering
\begin{tikzpicture}

\filldraw[color=red!60, fill=red!5,opacity=0.3](0,3) circle (1.1);
\filldraw[color=red!60, fill=red!5,opacity=0.3](1,3) circle (1.1);
\filldraw[color=red!60, fill=red!5,opacity=0.3](2,3) circle (1.1);
\filldraw[color=red!60, fill=red!5,opacity=0.3](3,3) circle (1.1);

\draw (-0.25,3) node[anchor=south] {$a_1,\dots,a_\tau$};
\draw[fill] (0,3) circle (2pt) node[anchor=north] {$l_1$};

\draw (1,3) node[anchor=south] {$b$};
\draw[fill] (1,3) circle (1pt) node[anchor=north] {$l_2$};

\draw (2,3) node[anchor=south] {$c$};
\draw[fill] (2,3) circle (1pt) node[anchor=north] {$l_3$};

\draw (3.25,3) node[anchor=south] {$d_1,\dots,d_\tau$};
\draw[fill] (3,3) circle (2pt) node[anchor=north] {$l_4$};
\draw (1.5,2) node[anchor=north] {$P$};

\begin{scope}
\clip (-3,-1) rectangle (6,1);
\filldraw[color=blue!60, fill=blue!5,opacity=0.3](0,0) circle (2.7);
\filldraw[color=blue!60, fill=blue!5,opacity=0.3](3,0) circle (2.7);
\end{scope}

\draw (-0.25,0) node[anchor=south] {$a_1,\dots,a_{\tau+1}$};
\draw[fill] (0,0) circle (2pt) node[anchor=north] {$l_1$};
\draw (1,0) node[anchor=north] {$l_2$};
\draw (2,0) node[anchor=north] {$l_3$};
\draw (3.25,0) node[anchor=south] {$d_1,\dots,d_{\tau+1}$};
\draw[fill] (3,0) circle (2pt) node[anchor=north] {$l_4$};
\draw (1.5,-1) node[anchor=north] {$P'$};
\end{tikzpicture}
\caption{Proof for Theorem~\ref{thm:lbRho}.}\label{fig:proof}
\end{figure}

\begin{theorem}\label{thm:lbRho}
For $\varepsilon\leq 1$ and $\tau\leq n/2-1$, any $\varepsilon$-bounded or unbounded DP mechanism $\M$ that is $(\rho,\tau; 0.1)$-accurate must have $\rho\geq 3$.
\end{theorem}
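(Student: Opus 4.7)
The plan is to exhibit a pair of bounded-DP $2$-hop neighbors $P$ and $P'$ in $1$D on which DBSCAN behaves very differently, and then invoke group privacy (Theorem~\ref{lemma:group}) to derive a contradiction. Fix $\MinPts = \tau + 1$, and for any candidate $\rho < 3$ pick a spacing $d$ with $d < \alpha$ and $3d > \max(1,\rho)\,\alpha$; this interval is nonempty precisely because $\rho < 3$. Place $l_k := (k-1)\,d$ for $k = 1, 2, 3, 4$, and let $P$ consist of $\tau$ points at $l_1$, a single point $b$ at $l_2$, a single point $c$ at $l_3$, and $\tau$ points at $l_4$, as in Figure~\ref{fig:proof}. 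Obtain $P'$ from $P$ by relocating $b$ to $l_1$ and $c$ to $l_4$; the two edits make $P$ and $P'$ bounded-DP $2$-hop neighbors, and $n = 2\tau + 2$ is compatible with the assumption $\tau \leq n/2 - 1$ (larger $n$ is handled by scattering isolated noise points far from the construction).

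The two conditions of Definition~\ref{def:ratio} then behave incompatibly on $P$ and $P'$. On $P$, since $d < \alpha$, every point's $\alpha$-neighborhood contains at least $\tau + 1 = \MinPts$ points, so all of $P$ forms a single $(\alpha,\MinPts)$-cluster $C_1$; the first accuracy condition forces some approximate span $S \in \M(P)$ to contain $C_1$, and hence $\{l_1, l_4\} \subseteq S$. On $P'$ with the relaxed parameters $(\rho\alpha, \MinPts - \tau) = (\rho\alpha, 1)$, every point is trivially core, but $\dist(l_1, l_4) = 3d > \rho\alpha$ means the points at $l_1$ and $l_4$ form two separate $\rho\alpha$-clusters whose spans are $\mathbf{B}(l_1, \alpha)$ and $\mathbf{B}(l_4, \alpha)$. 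Since $\Span$ in Definition~\ref{def:ratio} is always measured with the original radius $\alpha$ (not $\rho\alpha$), and $3d > \alpha$, neither of these balls contains both $l_1$ and $l_4$, so the second accuracy condition forbids any approximate span of $\M(P')$ from covering both locations simultaneously.

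Define the distinguishing event $E := \{\mathcal{S} : \exists\, S \in \mathcal{S},\ \{l_1, l_4\} \subseteq S\}$. The two accuracy guarantees then give $\Pr[\M(P) \in E] \geq 0.9$ and $\Pr[\M(P') \in E] \leq 0.1$, while Theorem~\ref{lemma:group} applied to the $2$-hop neighbors yields $\Pr[\M(P) \in E] \leq e^{2\varepsilon}\, \Pr[\M(P') \in E]$. Combining these forces $e^{2\varepsilon} \geq 9$, i.e., $\varepsilon \geq \tfrac{1}{2}\ln 9 > 1$, contradicting $\varepsilon \leq 1$. Thus any $(\rho, \tau; 0.1)$-accurate $\varepsilon$-bounded DP mechanism must satisfy $\rho \geq 3$, and Theorem~\ref{thm:bounded} transfers the lower bound to the unbounded DP setting.

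The most delicate step will be the second paragraph: one has to keep straight which dataset and which set of DBSCAN parameters each accuracy condition refers to, and in particular remember that $\Span(C_2)$ in Definition~\ref{def:ratio} is a union of $\alpha$-balls (not $\rho\alpha$-balls). This is what pins the admissible approximate spans on $P'$ to small balls around $l_1$ and $l_4$ individually, which is in turn what rules out any single span covering both.
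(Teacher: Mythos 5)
Your proof is correct and follows essentially the same route as the paper: the same four-location 1D construction with $\MinPts=\tau+1$, the same 2-hop bounded neighbor obtained by relocating $b,c$ to the endpoints, and the same group-privacy argument pitting the $0.9$ accuracy probability on $P$ against $e^{2\varepsilon}\cdot 0.1$ on $P'$. One caveat: the paper applies Definition~\ref{def:ratio} with $\Span(C_2)$ for $C_2\in\mathcal{C}(\rho\alpha,\MinPts-\tau)$ taken as a union of $\rho\alpha$-balls (its own proof writes $\Span(\{l_1\})=\mathbf{B}(l_1,\rho\alpha)$), contrary to your parenthetical insistence that the radius is always $\alpha$; your argument is unaffected only because you imposed $3d>\max(1,\rho)\,\alpha\geq\rho\alpha$, under which neither reading allows a single admissible span on $P'$ to cover both $l_1$ and $l_4$.
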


\begin{proof}
With out loss of generality assume $n$ is even.
We only need to construct an example where $\tau=n/2-1$, since more accurate $\tau$ makes the problem harder for $\rho$.
We construct an example in 1D, with parameters $\alpha$ and $\MinPts=\tau+1$.
Assume such an $\mathcal{M}$ exists for some $\rho<3$, we consider 4 locations $l_k=r\cdot k$ for $k=1,2,3,4$ where $r=\rho\alpha/3<\alpha$.
We consider two datasets $P$ and $P'$ in Figure~\ref{fig:proof}, defined as 
$P=(a_1,\dots,a_\tau, b, c, d_1,\dots,d_\tau)$, where $a_i=l_1$, $b=l_2$, $c=l_3$ and $d_i=l_4$ for $i=1,\dots,\tau$; and 
$P'=(a_1,\dots,a_\tau, d_1,\dots,d_\tau)$ is a 2-hop (bounded) neighbor of $P$ where the locations for $b$ and $c$ are different. It is easy to see that all the points in $P$ are core points in $(\alpha,\tau+1)$-clustering. Moreover, they all belong to the same cluster since $\dist(l_1,l_2)=\dist(l_2,l_3)=\dist(l_3,l_4)=r<\alpha$.
Let $O$ be the set of outputs that put these 4 locations into the same span, namely
\[
O=\{\mathcal{S}: \exists S\in\mathcal{S}\text{ such that } \{l_1,l_2,l_3,l_4\}\subseteq S\}\,.
\]
We have $\Pr[\M(P)\in O]\geq 0.9$ when $\M$ is $(\rho,\tau;0.1)$-accurate.

Since $\M$ is $\varepsilon$-DP, we have $\Pr[\M(P')\in O]\geq 0.9\cdot e^{-2\varepsilon}>0.1$ due to group privacy.
By the accuracy on $P'$, this means some output in $O$ must also satisfy Definition~\ref{def:ratio}.
In particular, there exists a cluster $C_2\in \mathcal{C}(\rho\alpha, 1)$ such that $\{l_1,l_2,l_3,l_4\}\subseteq S\subseteq \Span(C_2)$.
However, note that on $P'$, the output of DBSCAN is $\mathcal{C}(\rho\alpha, 1)=\{\{l_1\}, \{l_4\}\}$ as $\dist(l_1,l_4)=3r=\rho\alpha$.
When $\rho<3$, it is impossible for the open balls $\Span(\{l_1\})=\mathbf{B}(l_1,\rho\alpha)$ or $\Span(\{l_4\})=\mathbf{B}(l_4,\rho\alpha)$ to cover all the 4 locations.
We therefore conclude $\rho\geq 3$ by contradiction.
\end{proof}




\section{DP-DBSCAN on Noisy Histogram}\label{sec:upper}

In this section, we provide a $(3+\eta, \tau; \beta)$-accurate mechanism $\M$ that is $\varepsilon$-DP, where $\eta$ is a small constant and 
\[
\tau = O\left(\left(1+\frac{8\sqrt{d}}{\eta}\right)^d\cdot \frac{d}{\varepsilon}\log\frac{d}{\alpha\beta}\right)\,.
\]
For constant $d$ and $\beta$, we have $\tau=O(\frac{1}{\varepsilon}\log\frac{1}{\alpha})$, so that both approximation factors $\rho$ and $\tau$ match our lower bound of $(3, \Omega(\frac{1}{\varepsilon}\log\frac{1}{\alpha}); 0.1)$ in Section~\ref{sec:lower}, up to constant factors. 

Inspired by~\cite{DBLP:conf/sigmod/GanT15}, we first divide the space into a grid of cells with width $w=\eta' \cdot \alpha/\sqrt{d}$, where $0< \eta' \leq 1$ is a small constant that controls the size of the cells. Any pair of points in the same cell has mutual distance less than $\alpha$, so they must belong to the same cluster if they are core points.
Therefore, we can perform clustering on the cells to avoid releasing information of a specific user.
We call a cell \textbf{core cell} if it contains at least one core point, otherwise it is a \textbf{noise cell} which may be empty or containing noise points.

Our algorithm works by first finding a superset of the core cells, followed by merging cells that belong to the same cluster. We present the full algorithm in Algorithm~\ref{algo} and illustrate each step in the following subsections.

\begin{algorithm}[htbp]
\begin{algorithmic}[1]
\Require Private dataset $P$, DBSCAN parameters $(\alpha,\MinPts)$, constant  $0<\eta'\leq 1$, privacy budget $\varepsilon$, failure probability $\beta$
\Ensure Approximate spans $\mathcal{S}=\{S_1,S_2,\dots\}$
\State $w\gets \eta'\cdot\alpha/\sqrt{d}$
\State Divide $\LL$ into a grid of cells $\X=\{X_1,\dots,X_{|\X|}\}$ of width $w$
\State Release a DP histogram for the noisy count $\tilde{x}_i$ for $X_i\in\mathcal{X}$
\State Find a superset of core cells $\X_{core}:=\{X: \widetilde{\op{UB}}(X) \geq \MinPts\}$
\State Merge cells in $\X_{core}$ to form approximate spans $\mathcal{S}$
\end{algorithmic}
\caption{DP Approximate DBSCAN}\label{algo}
\end{algorithm}

\subsection{DP Histogram on Cells}

Let the cells be $\X=\{X_1,X_2,\dots,X_{|\X|}\}$, where $|\X|=w^{-d}=\left(\frac{\sqrt{d}}{ \eta' \alpha}\right)^d$.
We first run a DP histogram algorithm to release the number of points  $x_i := |X_i\cap P|$ in every cell $X_i\in \X$.
By Theorem~\ref{thm:dphist}, the maximum error of any $\tilde{x}_i$ is
$\gamma=O(\frac{1}{\varepsilon}\log\frac{|\X|}{\beta_0})=O(\frac{d}{\varepsilon}\log\frac{d}{\alpha\beta_0})$.

Note that by post-processing, the privacy guarantee of our mechanism is obvious. This also implies that any DP histogram with an accuracy guarantee $\gamma$ can be used.
Further, our post-processing steps are all deterministic, and the success of our algorithm only depends on the histogram being accurate. Therefore we will use $\beta_0=\beta$ in the remaining of this section.

\subsection{Superset of Core Cells}

In this subsection, we find a superset of the core cells to ensure all the real clusters are reported, so that the first condition in Definition~\ref{def:ratio} is satisfied.

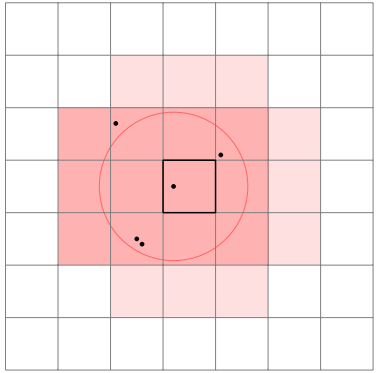
\begin{figure}[htbp]
\centering
\resizebox{0.3\textwidth}{!}{
\begin{tikzpicture}
\fill[color=red!60, opacity=0.5] (1,2) rectangle (5,5);
\fill[color=red!60, opacity=0.2] (2,1) rectangle (5,2);
\fill[color=red!60, opacity=0.2] (2,5) rectangle (5,6);
\fill[color=red!60, opacity=0.2] (5,2) rectangle (6,5);
\draw[color=red!60](3.2,3.5) circle (1.414);
\draw[fill] (3.2,3.5) circle (1pt);
\draw[fill] (2.5,2.5) circle (1pt);
\draw[fill] (2.6,2.4) circle (1pt);
\draw[fill] (2.1,4.7) circle (1pt);
\draw[fill] (4.1,4.1) circle (1pt);

\draw[help lines] (0,0) grid (7,7);
\draw[thick] (3,3) grid (4,4);
\end{tikzpicture}}
\caption{$\alpha$-neighborhood of a 2D cell ($\eta'=1$).}\label{fig:cell}
\end{figure}

Let $X$ be any cell.
Observe that when $\eta'$ is constant, the union of $\alpha$-neighborhoods of its points $\cup_{p\in X} \mathbf{B}(p,\alpha)$ only intersects a constant number of cells.
A 2D example from~\cite{DBLP:conf/sigmod/GanT15} is shown in Figure~\ref{fig:cell}, where when $\eta'=1$, the 2D circles from points in $X$ can only intersect 21 cells, including $X$ itself.
We call these cells the $\alpha$-neighborhood of $X$, denoted $\mathbf{NB}(X)$.
There are at most $\kappa=|\mathbf{NB}(X)|\leq \left(\frac{2\alpha+w}{w}\right)^d=\left(1+\frac{2\sqrt{d}}{\eta'}\right)^d=O(1)$ neighbor cells for constant $d$ and $\eta'$.

Let $\op{UB}(X)$ be the sum of counts in the $\alpha$-neighborhood of cell $X$, it follows that for any $p\in X$,
\begin{equation}\label{eq:ub}
\op{UB}(X):=\sum_{X_i\in \mathbf{NB}(X)} |X_i\cap P| \geq |\mathbf{B}(p,\alpha) \cap P|\,.
\end{equation}
By reporting all cells $X$ such that $\op{UB}(X)\geq \MinPts$, we are guaranteed to find all the core cells, with some possible false positives. Now with the noisy histogram, we cannot obtain $\op{UB}(X)$ in exact. Instead, our algorithm uses a noisy upper bound $\widetilde{\op{UB}}(X)$ defined as
\begin{equation}\label{eq:ubesti}
\widetilde{\op{UB}}(X)=\left(\sum_{X_i\in \mathbf{NB}(X)} \tilde{x}_i\right) + \Gamma\,,
\end{equation}
where $\tilde{x}_i$ is the noisy count of cell $X_i$ and $\Gamma(\kappa,\gamma)$ is a data-independent upward-scaling factor that ensures $\widetilde{\op{UB}}(X)\geq \op{UB}(X)$. For now, simply take $\Gamma = \kappa\gamma = O\left(\left(1+\frac{2\sqrt{d}}{\eta'}\right)^d\cdot \frac{d}{\varepsilon}\log\frac{d}{\alpha\beta}\right)$, so that conditioned on a histogram with error bound by $\gamma$, the sum of $\kappa$ noises will be bounded by $\Gamma$.
We conclude the guarantee for the core cells in the following theorem.

\begin{lemma}\label{thm:core:cell}
Let $\X_{core}=\{X: \widetilde{\op{UB}}(X)\geq \MinPts\}$, and let $\tau =2 \Gamma = O\left(\left(1+\frac{2\sqrt{d}}{\eta'}\right)^d\cdot \frac{d}{\varepsilon}\log\frac{d}{\alpha\beta}\right)$.
Conditioned on the DP histogram being accurate, which happens with probability $1-\beta$, we have
\begin{enumerate}
\item For any core cell $X$ that contains an $(\alpha,\MinPts)$-clustering core point $p\in X$, we have $X\in \X_{core}$.
\item When $\MinPts>\tau$, for any $X\in\X_{core}$, there exists a core point $p$ in $((2+3\eta')\alpha, \MinPts-\tau)$-clustering, such that $X$ lies within the ball $X\subseteq \mathbf{B}(p,(1+2\eta')\alpha)$.
\end{enumerate}
\end{lemma}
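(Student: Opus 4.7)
The plan is to first condition on the good event that the DP histogram is uniformly accurate (every $|\tilde x_i - x_i|\le \gamma$), which by Theorem~\ref{thm:dphist} holds with probability at least $1-\beta$; the rest of the argument is deterministic on that event. Summing the per-cell error over the $\kappa\le(1+2\sqrt d/\eta')^d$ cells in $\mathbf{NB}(X)$ and combining with the deterministic upward shift $\Gamma$ baked into definition~\eqref{eq:ubesti} gives the two-sided sandwich
\begin{equation*}
\op{UB}(X)\;\le\;\widetilde{\op{UB}}(X)\;\le\;\op{UB}(X)+2\Gamma\;=\;\op{UB}(X)+\tau,
\end{equation*}
which will be the only probabilistic input I use.

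Part~1 is then immediate: if $p\in X$ is an $(\alpha,\MinPts)$-core point, equation~\eqref{eq:ub} yields $\op{UB}(X)\ge\MinPts$, and the left half of the sandwich forces $\widetilde{\op{UB}}(X)\ge\MinPts$, so $X\in\X_{core}$.

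For Part~2, the right half of the sandwich combined with $\widetilde{\op{UB}}(X)\ge\MinPts$ gives $\op{UB}(X)\ge\MinPts-\tau\ge 1$, so there is a genuine data point $p$ sitting in some cell $X_i\in\mathbf{NB}(X)$; this $p$ is my witness core point. The remaining work is two triangle-inequality chains that exploit the only two length scales in the construction: the intra-cell diameter $w\sqrt d=\eta'\alpha$, and the defining property of $\mathbf{NB}(X)$, namely that some point of $X_i$ lies within $\alpha$ of some point of $X$. Chaining $p\to X_i\to X\to r$ for any $r\in X$ bounds $\dist(p,r)<(1+\eta')\alpha+\eta'\alpha=(1+2\eta')\alpha$, giving $X\subseteq\mathbf{B}(p,(1+2\eta')\alpha)$; extending one more step to any cell $X_j\in\mathbf{NB}(X)$ and any $q\in X_j$ bounds $\dist(p,q)<(2+3\eta')\alpha$, so all of the $\op{UB}(X)\ge\MinPts-\tau$ data points counted by $\op{UB}(X)$ lie inside $\mathbf{B}(p,(2+3\eta')\alpha)$, certifying $p$ as a core point of the $((2+3\eta')\alpha,\MinPts-\tau)$-clustering.

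The one place I will tread carefully is the open-vs-closed ball convention: $\mathbf{B}(\cdot,\alpha)$ is defined by a strict inequality, whereas the cell-diameter bound $\eta'\alpha$ is a supremum over a half-open cube that is not attained. I will check that each strict inequality on the $\alpha$-hop in the chain propagates, so that the composite distances are strictly below $(1+2\eta')\alpha$ and $(2+3\eta')\alpha$ respectively rather than on the boundary. Beyond that, no further concentration inequality or union bound is needed — the whole argument rides on the single ``histogram-is-accurate'' event, and the rest is a clean deterministic packing computation.
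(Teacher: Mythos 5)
Your proposal is correct and follows essentially the same route as the paper's proof: conditioning on the uniform $\gamma$-accuracy of the histogram to get the sandwich $\op{UB}(X)\le\widetilde{\op{UB}}(X)\le\op{UB}(X)+2\Gamma$, deducing Part~1 immediately, and for Part~2 picking an arbitrary data point $p$ in a cell of $\mathbf{NB}(X)$ as the witness and running the same triangle-inequality chains (intra-cell diameter $\eta'\alpha$ plus strict $\alpha$-hops) to get the $(1+2\eta')\alpha$ and $(2+3\eta')\alpha$ bounds. Your extra care about strict versus non-strict inequalities is a valid (and correct) refinement of a point the paper treats implicitly, but it does not change the argument.
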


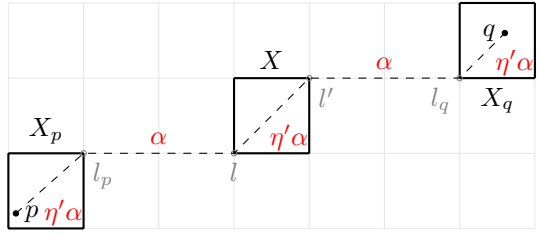
\begin{figure}[htbp]
\centering
\begin{tikzpicture}
\draw[help lines, opacity=0.2] (0,0) grid (7,3);
\draw[thick] (3,1) grid (4,2);
\draw[] (3.5,2) node[anchor=south]{$X$};
\draw[thick] (0,0) grid (1,1);
\draw[] (0.5,1) node[anchor=south]{$X_p$};
\draw[thick] (6,2) grid (7,3);
\draw[] (6.5,2) node[anchor=north]{$X_q$};

\draw[fill] (0.1,0.2) circle (1pt) node[anchor=west]{$p$};
\draw[fill] (6.6,2.6) circle (1pt) node[anchor=east]{$q$};
\draw[gray] (3,1) circle (1pt) node[anchor=north]{$l$};
\draw[gray] (4,2) circle (1pt) node[anchor=north west]{$l'$};
\draw[gray] (1,1) circle (1pt) node[anchor=north west]{$l_p$};
\draw[gray] (6,2) circle (1pt) node[anchor=north east]{$l_q$};

\draw[red] (2,1) node[anchor=south]{$\alpha$};
\draw[red] (5,2) node[anchor=south]{$\alpha$};
\draw[red] (3.35,1.5) node[anchor=north west]{$\eta'\alpha$};
\draw[red] (0.35,0.5) node[anchor=north west]{$\eta'\alpha$};
\draw[red] (6.35,2.5) node[anchor=north west]{$\eta'\alpha$};

\draw[dashed] (0.1,0.2) -- (1,1) -- (3,1) -- (4,2)
   -- (6,2) -- (6.6,2.6);

\end{tikzpicture}
\caption{Proof for $(2+3\eta')$-approximation.}\label{fig:ratio}
\end{figure}

\begin{proof}
The accuracy of DP histogram gives $x_i-\gamma\leq \tilde{x}_i\leq x_i+\gamma$ simultaneously for all cells. Conditioned on this, we only need to note that for any cell $X\ni p$ containing a core point, we have
\[
\widetilde{\op{UB}}(X)\geq \op{UB}(X) \geq |\mathbf{B}(p,\alpha) \cap P| \geq  \MinPts\,,
\]
and the first property is proved.

For the second, note that for any $X\in\X_{core}$, 
\[
\MinPts\leq \widetilde{\op{UB}}(X)\leq \op{UB}(X) +2\Gamma=\op{UB}(X)+\tau\,.
\]
Thus, the counted cells in $\mathbf{NB}(X)$ contains at least $\MinPts-\tau>0$ points.
We fix an arbitrary point in the neighboring cells $p\in X_p$ where $X_p\in \mathbf{NB}(X)$, and argue that $\dist(p,q)<(2+3\eta')\alpha$ for any other $q\in X_q$ where $X_q\in \mathbf{NB}(X)$, to show that $p$ is a core-point in $((2+3\eta')\alpha, \MinPts-\tau)$-clustering.

As shown in Figure~\ref{fig:ratio}, since $X_p\in \mathbf{NB}(X)$, there must exist a pair of locations $l\in X$ and $l_p\in X_p$, so that $\dist(l,l_p)<\alpha$. Similarly, there exists $l'\in X$ and $l_q\in X_q$ so that $\dist(l',l_q)<\alpha$. Finally, note that any two locations in the same cell
have distance at most $w\sqrt{d}=\eta'\alpha$. We then have
\begin{align*}
\dist(p,q)&\leq \dist(p,l_p)+\dist(l_p,l)+\dist(l,l')+\dist(l',l_q)+\dist(l_q, q) \\
&<(2+3\eta')\alpha\,.
\end{align*}
As this holds for any $q$ out of cells $\mathbf{NB}(X)$ and there are $\MinPts-\tau$ points within (including $p$), we conclude that $p$ is a core point in $((2+3\eta')\alpha, \MinPts-\tau)$-clustering. Further, for any $l''\in X$, we have $\dist(p,l'')\leq \dist(p,l_p) + \dist(l_p, l) + \dist(l, l'') < (1+2\eta')\alpha$. So we have $X\subseteq \mathbf{B}(p, (1+2\eta')\alpha)$.
\end{proof}

\subsection{Merge Cells into Clusters}

Given the core cells found above, we merge them to form approximate spans $\mathcal{S}=\{S_1,S_2,\dots\}$, each being a set of locations formed by a union of cells.
To guarantee that two core points of the same cluster belong to the same span, our merging policy is to merge a pair of cells if their minimum distance is less than $\alpha$, namely we merge $X\in\X_{core}$ and $X'\in\X_{core}$ if  $\dist(X,X'):=\min_{l\in X, l'\in X'} \dist(l,l')<\alpha$.
This brings the main theorem of our paper.

\begin{theorem}[Asymptotic Bound]\label{thm:main}
For $0<\eta \leq 4$, there is an $\varepsilon$-DP mechanism for $(\alpha,\MinPts)$-DBSCAN clustering that is $(3+\eta,\tau;\beta)$-accurate, where $\tau = O\left(\left(1+\frac{8\sqrt{d}}{\eta}\right)^d\cdot \frac{d}{\varepsilon}\log\frac{d}{\alpha\beta}\right)$, when $\MinPts>\tau$.
\end{theorem}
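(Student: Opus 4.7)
The plan is to instantiate Algorithm~\ref{algo} with $\eta' = \eta/4$, which forces $\eta'\in(0,1]$ as required (using $\eta\leq 4$) and makes the neighbor count $\kappa=(1+2\sqrt{d}/\eta')^d=(1+8\sqrt{d}/\eta)^d$, so that $\tau=2\Gamma=2\kappa\gamma$ matches the claimed bound. The $\varepsilon$-DP guarantee is immediate: only Step~3 touches the private dataset and it uses the Laplace histogram, while Steps~4 and 5 are deterministic post-processing functions of the noisy counts, so closure under post-processing yields $\varepsilon$-DP. All subsequent utility arguments are conditioned on the high-probability event $\max_i|\tilde{x}_i-x_i|\leq \gamma$, which by Theorem~\ref{thm:dphist} holds with probability at least $1-\beta$ after choosing the histogram failure budget to be $\beta$.

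On this event, Lemma~\ref{thm:core:cell} becomes the main engine. To establish the completeness side of Definition~\ref{def:ratio}, I would take any cluster $C_1\in\mathcal{C}(\alpha,\MinPts)$, and observe that by part~(1) of the lemma every core point $p\in C_1$ lies in a cell that makes it into $\X_{core}$. For two $\alpha$-reachable core points $p,q\in C_1$, the containing cells $X_p,X_q$ satisfy $\dist(X_p,X_q)\leq \dist(p,q)<\alpha$, so the merging step unites them; transitivity of the merge relation then places all core points of $C_1$ into a single approximate span $S\in\mathcal{S}$, and $C_1\subseteq S$ because each $p\in C_1$ lies inside its own cell, which is a subset of $S$.

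The soundness side is the crux. For each cell $X\in S$, Lemma~\ref{thm:core:cell} part~(2) supplies a witness $p_X$ that is a core point of $((2+3\eta')\alpha,\MinPts-\tau)$-clustering with $X\subseteq \mathbf{B}(p_X,(1+2\eta')\alpha)$; since $(2+3\eta')\alpha\leq(3+\eta)\alpha=\rho\alpha$, the same $p_X$ is also a $(\rho\alpha,\MinPts-\tau)$-core point. The key triangle-inequality step is to bound $\dist(p_X,p_{X'})$ whenever $X,X'\in S$ are merged directly, i.e.\ $\dist(X,X')<\alpha$: picking locations $l\in X$ and $l'\in X'$ with $\dist(l,l')<\alpha$ and combining with $\dist(p_X,l)<(1+2\eta')\alpha$ and $\dist(p_{X'},l')<(1+2\eta')\alpha$ gives $\dist(p_X,p_{X'})<(3+4\eta')\alpha=(3+\eta)\alpha$, so $p_X$ and $p_{X'}$ are $\rho\alpha$-reachable. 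Propagating along the chain of merges that produces $S$ shows that the witnesses $\{p_X:X\in S\}$ all lie in a single cluster $C_2\in\mathcal{C}(\rho\alpha,\MinPts-\tau)$ (nonempty because $\MinPts>\tau$), giving $S=\bigcup_{X\in S}X\subseteq \bigcup_{X\in S}\mathbf{B}(p_X,\rho\alpha)\subseteq \Span(C_2)$.

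The main obstacle is the calibration that ties everything together: the $(2+3\eta')$ bound from Lemma~\ref{thm:core:cell} and the additional $\alpha$ gained at each merge step force the chain-distance constant to be $3+4\eta'$, which must be at most $\rho=3+\eta$; the choice $\eta'=\eta/4$ is what ultimately injects the $(1+8\sqrt{d}/\eta)^d$ blow-up into $\tau$. Aside from this calibration and the underlying core-point and reachability bookkeeping in $\mathcal{C}(\cdot,\cdot)$, everything else is a straightforward composition of the histogram tail bound with the two parts of Lemma~\ref{thm:core:cell}.
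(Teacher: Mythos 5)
Your proposal is correct and follows essentially the same route as the paper's proof: it instantiates the grid with $\eta'=\eta/4$, invokes both parts of Lemma~\ref{thm:core:cell} conditioned on the histogram's $\gamma$-accuracy event, handles the first condition via the chain of $\alpha$-reachable core points whose cells get merged, and handles the second via the triangle inequality giving $\dist(p_X,p_{X'})<(3+4\eta')\alpha$ for directly merged cells. Your write-up is in fact slightly more explicit than the paper's in propagating the witness core points along the merge chain and spelling out the final containment $S\subseteq\Span(C_2)$, but this is a fleshing-out of the same argument rather than a different approach.
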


\begin{proof}
We prove a $(3+4\eta',\tau; \beta)$-approximation, so that taking $\eta=4\eta'$ gives the result claimed.

For the first condition, consider any $(\alpha,\MinPts)$-cluster $C_1$. In Lemma~\ref{thm:core:cell} we have shown that our algorithm reports all the core cells. By definition, for any pair of core points $p,q\in C_1$, there is a sequence of core points $p_1,\dots,p_k\in C$ where $p_1=p$,  $p_k=q$ and $\dist(p_i,p_{i+1})<\alpha$.
As a result, their corresponding cells $X_1,\dots,X_k$ satisfies $\dist(X_i, X_{i+1})<\alpha$, and will be merged into $S$ so that $C_1\subseteq S$.

For the second condition, although the cells are all from spans of $((2+3\eta')\alpha, \MinPts-\tau)$-clusters, our lower bound indicates it is impossible to have a $(2+3\eta', \tau)$-approximation.
Instead, we argue that whenever we merge cells, they are from the span of the same $((3+4\eta')\alpha, \MinPts-\tau)$ cluster.

Consider when we merge $X$ and $X'$, and the  distance is minimized at $\dist(l,l')<\alpha$ where $l\in X$ and $l'\in X'$.
By Theorem~\ref{thm:core:cell}, $X\subseteq \mathbf{B}(p,(1+2\eta')\alpha)$, where $p$ is a $((2+3\eta')\alpha, \MinPts-\tau)$-core point. Similarly, $X'\subseteq \mathbf{B}(p',(1+2\eta')\alpha)$ for core point $p'$.
We then have $\dist(p,p')\leq \dist(p,l)+\dist(l,l')+\dist(l',p')< (3+4\eta')\alpha$.
So $p$ and $p'$ are core-points from the same $((3+4\eta')\alpha, \MinPts-\tau)$ cluster.
\end{proof}

\section{A Linear-Time Pure-DP Histogram}\label{sec:hist}

The space and time of our mechanism depend on the complexity of the noisy histogram we use. After the histogram is built, identifying the core cells requires only a linear scan over the non-zero entries in the histogram, and merging neighboring cells requires comparing each non-zero entry with its $\kappa=O(1)$ neighbors. A naive implementation of the Laplace histogram in Theorem~\ref{thm:dphist} by adding independent noise to each cell will consume $O(|\X|)=O(w^{-d})$ time and space, which can be much larger than $n$, the data size.

As mentioned, our algorithm accepts any DP histogram with an explicit error guarantee $\gamma$ (or $\Gamma$). For approximate-DP, the stability-based histogram~\cite{DBLP:journals/jpc/BalcerV19} can be built in\footnote{
$\tilde{O}(\cdot)$ suppresses powers of $1/\varepsilon$, $\log 1/\delta$, $\log n$ and $\log |\X|$.
} $\tilde{O}(n)$ time using $O(n)$ space.
Under pure-DP, \cite{DBLP:conf/icdt/CormodePST12} designed a high-pass filter algorithm to only output the frequencies above a threshold in the DP histogram. The algorithm in \cite{DBLP:conf/icdt/CormodePST12} adopts Geometric noises; we use a version that uses the Laplace distribution, which enjoys a tighter concentration  bound.  Algorithm~\ref{algo:hist} takes as input the universe $\X$, a frequency map $F$ that contains the frequency $x_i>0$ for each element in the dataset, the privacy budget $\varepsilon$, and a threshold $\theta$ that decides which noisy frequencies will be truncated. In line \ref{line:freq} to \ref{line:freqend}, Laplace noise is added to each non-zero frequency, and the noisy $\tilde{x}_i$ is included into the output if it is at least $\theta$. Next, the distribution for the remaining $M$ entries of $\X-F$ is simulated by first sampling a set of entries that will have positive $\hat{x}_j$, and then drawing their noisy counts from the upper tail of the Laplace distribution.

\begin{algorithm}[htbp]
\begin{algorithmic}[1]

\Require Universe $\X$, frequency map $F=\{i\to x_i\}$,  privacy budget $\varepsilon$, threshold $\theta$
\Ensure Noisy histogram $H=\{j\to  h_j\}$ 

\State $H\gets \emptyset$
\ForEach {$(i,x_i)\in F$}\label{line:freq}
    \State $\tilde{x}_i\gets x_i + \op{Lap}(\frac{1}{\varepsilon})$
    \If {$\tilde{x}_i\geq \theta$}
        \State $H\gets H\cup \{i \to \tilde{x}_i\}$\label{line:freqend}
    \EndIf
\EndFor
\State $M\gets|\X-F|$, $p\gets \frac{1}{2}e^{-\varepsilon \theta}$
\State Sample $m\sim \op{Bin}(M, p)$ \label{line:bin}
\State Sample $m$ elements without replacement $J \subseteq \X-F$\label{line:srs}
\ForEach {$j\in J$}
    \State Sample $\hat{x}_j\geq \theta$ with $\Pr[\hat{x}_j=z]=\frac{1}{p} \cdot \Pr[\op{Lap}(\frac{1}{\varepsilon})=z]$
    \State $H\gets H\cup \{j \to \hat{x}_j\}$
\EndFor
\State \textbf{Output} $H$
\end{algorithmic}
\caption{Linear Time DP Histogram}\label{algo:hist}
\end{algorithm}

In \cite{DBLP:conf/icdt/CormodePST12}, it was only shown that the distribution of each $\hat{x}_i$ is equivalent to that in a standard histogram, without showing that they are independent, which is needed to show that the entire histogram satisfies DP.  Below we provide a complete proof on both privacy and utility. 

\begin{theorem}
For any $\theta>0$, Algorithm~\ref{algo:hist} is $\varepsilon$-DP. For any $\theta \geq \frac{1}{\varepsilon}\ln\frac{|\X|}{n}$, it runs in $O(n)$ time with high probability (in $n$). For any $0<\beta_0<1$, we have with probability $1-\beta_0$, the $\ell_\infty$ error of the output histogram $H$ is bounded by $\Pr[\max_{i\in\X} |h_i -x_i|>\theta+\gamma]\leq \beta_0$, where $\gamma=O(\frac{1}{\varepsilon}\log\frac{|\X|}{\beta_0})$ is the error for Laplace histogram.
\end{theorem}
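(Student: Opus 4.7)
The plan is to treat the three claims separately, but all three rest on one key lemma: the output $H$ of Algorithm~\ref{algo:hist} has the \emph{same joint distribution} as the output of the ``naive'' mechanism $\M_{\text{naive}}$ that independently draws $\tilde{x}_i = x_i + \op{Lap}(1/\varepsilon)$ for every $i \in \X$ and retains only the entries with $\tilde{x}_i \geq \theta$. The main obstacle is establishing this equivalence at the joint level rather than only marginally, which is precisely the gap in~\cite{DBLP:conf/icdt/CormodePST12} that the paper calls out.

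To establish the joint equivalence, I would argue cell-by-cell. On $F$ the two mechanisms are identical by construction. On $\X - F$, the inclusion of each cell under $\M_{\text{naive}}$ is an i.i.d.\ Bernoulli event with success probability $p = \frac{1}{2} e^{-\varepsilon \theta}$, the upper-tail mass of $\op{Lap}(1/\varepsilon)$ beyond $\theta$. The standard equivalence between (a) $M$ i.i.d.\ Bernoulli$(p)$ trials and (b) first sampling $m \sim \op{Bin}(M,p)$ and then picking a uniformly random subset of size $m$ shows that lines~\ref{line:bin}--\ref{line:srs} reproduce the correct joint law of the included index set. Conditional on inclusion, each $\hat{x}_j$ in Algorithm~\ref{algo:hist} is drawn from the upper-tail-conditional Laplace density $\Pr[\op{Lap}(1/\varepsilon) = z \mid \op{Lap}(1/\varepsilon) \geq \theta]$, independently across $j$; the same holds under $\M_{\text{naive}}$ by Bayes' rule. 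Joint equivalence follows. Privacy of Algorithm~\ref{algo:hist} then reduces to that of $\M_{\text{naive}}$, which is the Laplace mechanism followed by the deterministic post-processing ``truncate to $0$ below $\theta$,'' hence $\varepsilon$-DP.

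For the runtime, the loop on $F$ costs $O(n)$. Under the assumption $\theta \geq \frac{1}{\varepsilon} \ln \frac{|\X|}{n}$, the binomial mean is $E[m] = Mp \leq |\X| \cdot \frac{1}{2} \cdot \frac{n}{|\X|} \leq n/2$, and a Chernoff bound gives $m = O(n)$ with probability $1 - e^{-\Omega(n)}$. Sampling $m$ distinct indices from $\X - F$ can be done in $O(m)$ time (e.g., Floyd's algorithm or rejection with a hash set storing $F$), and each $\hat{x}_j$ is drawn in $O(1)$ time by inverse CDF of the truncated Laplace. So the total time is $O(n)$ w.h.p.

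For the utility bound, I would use the joint equivalence to reason as if $h_i = (x_i + Z_i) \cdot \mathbf{1}[x_i + Z_i \geq \theta]$ for independent $Z_i \sim \op{Lap}(1/\varepsilon)$ across all $i \in \X$. Theorem~\ref{thm:dphist} applied to these $Z_i$'s gives $\max_i |Z_i| \leq \gamma$ with probability at least $1 - \beta_0$. On that event: if $h_i \neq 0$ then $|h_i - x_i| = |Z_i| \leq \gamma \leq \theta + \gamma$; if $h_i = 0$ then $x_i + Z_i < \theta$, so $x_i < \theta + |Z_i| \leq \theta + \gamma$, giving $|h_i - x_i| = x_i < \theta + \gamma$. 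Hence $|h_i - x_i| \leq \theta + \gamma$ holds uniformly over $i$, as claimed.
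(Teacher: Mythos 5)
Your equivalence argument, privacy reduction, and utility case analysis are correct and essentially the same as the paper's: the paper establishes the joint equality $\Pr[H=\z]=\Pr[Y=\z]$ by writing out the joint density (Binomial term times the SRS term times the tail-conditional Laplace densities) and cancelling, which is exactly the ``$M$ i.i.d.\ Bernoulli$(p)$ trials $=$ Binomial count plus uniform subset'' decomposition you invoke, and the post-processing and $\theta+\gamma$ arguments match the paper verbatim in substance.

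The one place your proposal is genuinely thinner than the paper is the running-time claim. You bound $m$ by Chernoff and cost the loop over $F$, the subset sampling, and the inverse-CDF draws, but you never account for the cost of drawing $m\sim\op{Bin}(M,p)$ in line~\ref{line:bin} itself: the naive method of flipping $M$ coins (or summing $M$ Bernoullis) takes $O(M)=O(|\X|)$ time, which can be far larger than $n$ and would defeat the whole point of the algorithm. The paper closes this by using the BINV inversion algorithm, whose running time is $O(m)$ for output value $m$, hence $O(n)$ conditioned on the same high-probability event $m<n$, and it explicitly flags that this (and the $O(1)$ inverse-CDF step) relies on the Real-RAM model; under Word-RAM the representation of $(1-p)^{M}$ already needs $\Omega(|\X|)$ bits and no linear-time pure-DP histogram is known. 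A second, minor omission: your rejection/Floyd sampling of $m$ distinct indices from $\X-F$ runs in $O(m)$ expected time only when $|\X-F|$ is a constant fraction of $\X$, i.e.\ $n\leq|\X|/2$; the paper handles this by treating $n>|\X|/2$ as the trivial case where the naive $O(|\X|)$ histogram is already $O(n)$. With these two points added, your proof matches the paper's.
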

\begin{proof}
Without loss of generality, let $\X=\{1,\dots,|\X|\}$ where  $x_1,\dots,x_M$ are all $0$'s and $x_{M+1},\dots,x_{|\mathcal{X}|}>0$ in $F$.
We show that the distribution of $H=(h_1,\dots,h_{|\X|})$ is equivalent to that of $Y=(y_1,\dots,y_{|\X|})$ where $y_i = \tilde{x}_i\cdot \mathbf{1}[\tilde{x}_i\geq\theta]$ is obtained by removing noisy counts less than $\theta$ from the Laplace histogram in Theorem~\ref{thm:dphist}. Note that\footnote{
As is standard in DP analysis, we use $\Pr[\cdot]$ to denote the probability density function of continuous distributions.
}
\[
\Pr[y_i=z \mid x_i]=\begin{cases}
\Pr[x_i + \op{Lap}(\frac{1}{\varepsilon})<\theta]\,, & z = 0\,, \\
\Pr[x_i + \op{Lap}(\frac{1}{\varepsilon})=z]\,, 
 & z\geq\theta\,, \\
0\,, & \text{otherwise} \,. \\
\end{cases}
\]
For the zero-entries, $y_1,\dots, y_M$ are i.i.d., abbreviated as
\begin{equation}\label{eq:r}
\Pr[y=z]=\begin{cases}
\Pr[\op{Lap}(\frac{1}{\varepsilon})<\theta]=1-p\,, & z = 0\,, \\
\Pr[\op{Lap}(\frac{1}{\varepsilon})=z]\,, 
 & z\geq\theta\,, \\
0\,, & \text{otherwise} \,, \\
\end{cases}
\end{equation}
where $p=\frac{1}{2}e^{-\varepsilon \theta}$ for Laplace distribution.
By independence, the joint probability distribution of $Y$ is given by
\begin{equation}\label{eq:Y_distribution}
\Pr\left[Y=(z_1,\dots,z_{|\X|})\right] = \prod_{j=1}^M \Pr[y=z_j] \cdot \prod_{i=M+1}^{|\X|} \Pr[y_i=z_i \mid x_i]\,.
\end{equation}

Now consider $H$, note that when $x_i>0$, $h_i$ is generated using the exact same procedure as $y_i$.
For $x_j=0$, conditioned on the simple random sample (SRS) $J \subseteq\{1,\dots M\}$, each $h_j$ is distributed according to either $h_j\equiv 0$ (for $j\not\in J$) or according to
\[
\Pr[h=z]=\begin{cases}
 0\,, & z <\theta \,, \\
\frac{1}{p} \cdot \Pr[\op{Lap}(\frac{1}{\varepsilon})=z]\,, 
 & z\geq\theta\,, \\
\end{cases}\quad (\text{for } j\in J)\,.
\]

Now fix an arbitrary output $\z=(z_1,\dots,z_{|\X|})$, $z_i\in \{0\}\cup [\theta, \infty)$. Let $J_{\z}\subseteq\{1,\dots M\}$ be the indices of non-zero values in the first $M$ entries of $\z$ and $|J_{\z}|=m_{\z}$, it is easy to see that the algorithm produces $\z$ only if $m_{\z}$ and $J_{\z}$ are sampled. By conditional probability,
\begin{align*}
&\; \Pr\left[H=\z\right] \\
=&\;\Pr[\op{Bin}(M, p)=m_{\z}] \cdot \Pr[\op{SRS}(M, m_{\z})=J_{\z} \mid m_{\z}] \\
&\; \cdot \prod_{j=1}^M \Pr[h_j = z_i \mid J_{\z}] \cdot \prod_{i=M+1}^{|\X|} \Pr[y_i=z_i \mid x_i] \\
=&\; \binom{M}{m_{\z}} p^{m_{\z}} (1-p)^{M-m_{\z}} \cdot \frac{1}{\binom{M}{m_{\z}}} \cdot 1^{M-m_{\z}} \\
 &\; \cdot \frac{1}{p^{m_{\z}}}\cdot \prod_{j\in J_{\z}} \Pr[\op{Lap}(\frac{1}{\varepsilon}) = z_j] \cdot \prod_{i=M+1}^{|\X|} \Pr[y_i=z_i \mid x_i] \\
 = &\; (1-p)^{M-m_{\z}} \cdot \prod_{j\in J_{\z}} \Pr[\op{Lap}(\frac{1}{\varepsilon}) = z_j] \cdot \prod_{i=M+1}^{|\X|} \Pr[y_i=z_i \mid x_i] \\
\end{align*}

On the other hand, plug Equation~\eqref{eq:r} into Equation~\eqref{eq:Y_distribution} for the same $\z$,
\[
\Pr\left[Y=\z\right] = (1-p)^{M-m_{\z}}\cdot \prod_{j\in J_{\z}} \Pr[\op{Lap}(\frac{1}{\varepsilon}) = z_j] \cdot \prod_{i=M+1}^{|\X|} \Pr[y_i=z_i|x_i]\,.
\]
Therefore $\Pr[H=\z]=\Pr[Y=\z]$ for all $\z$, proving they are identically distributed.

\textbf{Privacy.} Privacy follows from the simple fact that $Y$ is obtained by post-processing the results of a Laplace histogram which satisfies $\varepsilon$-DP, and that $H$ is identically distributed as $Y$.

\textbf{Accuracy.} 
Consider any frequency $x_i$, let $\tilde{x}_i$ be the noisy frequency in a Laplace histogram, we have $\max_i |\tilde{x}_i-x_i|< \gamma$ with probability $1-\beta_0$. Conditioned on this, we make additional error only when $\tilde{x}_i<\theta$ and was truncated to $h_i=0$. When this happens, $x_i < \tilde{x}_i+\gamma < \theta+\gamma$, which holds simultaneously for all $x_i$.
For any $\theta \leq\gamma=\frac{1}{\varepsilon}\ln\frac{|\X|}{\beta_0}$, the error is $O(\frac{1}{\varepsilon}\log\frac{|\X|}{\beta_0})$ with probability $1-\beta_0$.

\textbf{Space and Time.}
We prove for $n\leq |\X|/2$, otherwise it is trivial.
Since there are at most $n$ non-zero $x_i$'s, line~\ref{line:freq} to \ref{line:freqend} take $O(n)$ time and space. For zero-entries $x_1,\dots,x_M$, let $\theta = \frac{1}{\varepsilon}\ln\frac{|\X|}{n}$, we have $p = \frac{n}{2|\X|}$ and $\mathbf{E}[\op{Bin}(M,p)] = Mp \leq \frac{n}{2}$.
Note that $|\X|/2 \leq M\leq |\X|$,
apply Chernoff bound, we have
\[
\Pr[\op{Bin}(M,p)\geq n] \leq e^{-\frac{Mp}{3}}\leq e^{-\frac{n}{12}}\,.
\]
With high probability (in $n$) only $m< n$ values are sampled.
Note that this high-probability result holds for any $\theta \geq \frac{1}{\varepsilon}\ln\frac{|\X|}{n}$, since increasing $\theta$ will reduce $p$, thus reducing the probability above.
Therefore the algorithm uses $O(n)$ space with high probability.

We are left with proving the running time.
Line~\ref{line:bin} samples a binomial random variable $m\sim\op{Bin}(M,p)$. This can be implemented using the BINV algorithm~\cite{DBLP:journals/cacm/KachitvichyanukulS88}, which recursively builds the inverse transformation function, and has running time $O(m)$ for output $m$. Conditioned on $m<n$, this can be done in $O(n)$ time with high probability.
In practice, this is combined with the BTPE algorithm~\cite{DBLP:journals/cacm/KachitvichyanukulS88} to achieve $O(1)$ expected time complexity~\cite{DBLP:journals/corr/abs-2403-11018}.
Line~\ref{line:srs} samples $m$ elements without replacement from $\X-F$, which can be done in $\tilde{O}(n+m)$ time~\cite{DBLP:journals/jpc/BalcerV19}.
To sample each $\hat{x}_j$ from a clipped Laplace distribution, we can apply inverse transform sampling~\cite{gentle2003random}: For $Z'\sim \op{Uniform}[0,1)$, $\op{CDF}_Z^{-1}(Z')$ follows $Z$ since 
\[
\Pr[\op{CDF}_Z^{-1}(Z')\leq z]=\Pr[Z'\leq \op{CDF}_Z(z)] = \op{CDF}_Z(z) = \Pr[Z\leq z]\,.
\]
The CDF of $\hat{x}_j$ is given by $\Pr[\hat{x}_j\leq z]=1-\frac{1}{2p} e^{-\varepsilon z}$, $z\geq\theta$, thus we can generate $\hat{x}_j$ by $\hat{x}_j=\frac{1}{\varepsilon}\ln\frac{1}{2p(1-Z')}$, when $Z'$ distributes by $\op{Uniform}[0,1)$. This can be done in $O(1)$ time.

The value of $\theta$ provides a space-accuracy trade-off: smaller $\theta$ favors accuracy while larger $\theta$ favors space and time. 
\end{proof}
\begin{corollary}\label{thm:linearhist}
There is an $\varepsilon$-DP mechanism for histogram that with high probability runs in $O(n)$ time using $O(n)$ space, and has simultaneous error $O(\frac{1}{\varepsilon}\log |\X|)$. 
\end{corollary}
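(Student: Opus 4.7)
The plan is to instantiate the preceding theorem with the boundary threshold $\theta = \frac{1}{\varepsilon}\ln\frac{|\X|}{n}$ and a suitably small failure probability $\beta_0$ (either a constant, or an inverse polynomial in $n$ if one wants a genuine high-probability error bound). All three conclusions of the corollary then fall out directly from three separate clauses of that theorem, so there is essentially no new work beyond parameter tuning and a union bound.

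First, $\varepsilon$-DP is immediate, since the theorem guarantees privacy for every $\theta > 0$. Second, for the running time, the theorem gives $O(n)$ time with high probability in $n$ as soon as $\theta \geq \frac{1}{\varepsilon}\ln\frac{|\X|}{n}$, which my choice achieves with equality; the $O(n)$ space bound follows from the same Chernoff argument used inside the proof, which shows that the number $m$ of sampled zero-entries is $O(n)$ with high probability, while the nonzero portion of the frequency map has at most $n$ entries by construction. Third, for the accuracy, the theorem bounds the simultaneous $\ell_\infty$ error by $\theta + \gamma$ with probability $1-\beta_0$, where $\gamma = O(\frac{1}{\varepsilon}\log\frac{|\X|}{\beta_0})$. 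Plugging in the two terms gives
\[
\theta + \gamma \;=\; \frac{1}{\varepsilon}\ln\frac{|\X|}{n} \;+\; O\!\left(\frac{1}{\varepsilon}\log\frac{|\X|}{\beta_0}\right) \;=\; O\!\left(\frac{1}{\varepsilon}\log|\X|\right),
\]
for any $\beta_0$ at least inverse polynomial, matching the claim.

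The only step that calls for mild care is combining the two ``with high probability'' events---one controlling the time and space, the other controlling the $\ell_\infty$ error---via a union bound, so that a single invocation of Algorithm~\ref{algo:hist} simultaneously satisfies all three guarantees. I do not anticipate any real obstacle; the corollary is purely a consequence of choosing $\theta$ at the smallest value permitted by the linear-time clause of the preceding theorem, which balances the truncation error $\theta$ against the Laplace-histogram error $\gamma$ and still yields the optimal asymptotic accuracy $O(\frac{1}{\varepsilon}\log|\X|)$.
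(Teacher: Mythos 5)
Your proposal is correct and matches the paper's (implicit) argument: the corollary is obtained exactly by instantiating the preceding theorem with $\theta=\frac{1}{\varepsilon}\ln\frac{|\X|}{n}$, so that the error bound $\theta+\gamma$ collapses to $O(\frac{1}{\varepsilon}\log|\X|)$ while the linear time and space guarantees hold with high probability. Your added care about choosing $\beta_0$ and union-bounding the two high-probability events is consistent with the paper's analysis (which handles the $n>|\X|/2$ case trivially via the naive histogram), so no further work is needed.
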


To compare the pure-DP histogram with the approximate-DP stability-based histogram~\cite{DBLP:journals/jpc/BalcerV19}, both algorithms run in linear time. The $\ell_\infty$ error guarantee is $O(\frac{1}{\varepsilon}\log |\X|)$ for our algorithm, and $O(\frac{1}{\varepsilon} \log\frac{1}{\delta})$ for stability-based histogram~\cite{DBLP:journals/jpc/BalcerV19}, both with constant probability. The pure-DP histogram offers a qualitative better privacy guarantee, and is at least as accurate as
the approximate-DP version when $|\X|=(1/\delta)^{O(1)}$.

Finally, note that both \cite{DBLP:conf/icdt/CormodePST12} and our analysis adopt the Real-RAM model, which assumes constant-time arithmetic operations on infinite-precision floats. In contrast, \cite{DBLP:journals/jpc/BalcerV19} adopts the Word-RAM model on finite computers. Under the latter model, sampling $\op{Bin}(M,p)$ is inefficient since the smallest probability $p^M$ requires $\Omega(M)=\Omega(|X|)$ bits to represent. The best known algorithm takes $\tilde{O}(n^2)$ time to build a pure-DP histogram, and it is open whether a linear time pure-DP histogram algorithm exists under the Word-RAM model.

\begin{corollary}
For constant $d$ and $\eta$, there is an $O(n)$ time algorithm that achieves Theorem~\ref{thm:main}.
\end{corollary}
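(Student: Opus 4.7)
The plan is to verify that each of the five steps of Algorithm~\ref{algo} runs in $O(n)$ time when $d$ and $\eta$ are constants, by combining the linear-time pure-DP histogram of Corollary~\ref{thm:linearhist} with a sparse representation of the grid. Correctness, privacy, and the sandwich quality guarantee are then inherited verbatim from Theorem~\ref{thm:main}.

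First I would avoid ever materializing the full grid $\X$, which has $|\X|=(\sqrt{d}/(\eta'\alpha))^d$ cells and may be much larger than $n$. Instead a cell is identified by its integer $d$-tuple coordinate (computable from a point's location in $O(d)=O(1)$ time) and the cells that matter are stored in a hash map keyed by these coordinates. Steps 1 and 2 of the algorithm then cost $O(1)$. For step 3, I would invoke Corollary~\ref{thm:linearhist} with threshold $\theta=O(\gamma)$; this produces a histogram $H$ with $O(n)$ non-zero entries in $O(n)$ time and space (with high probability), and $\ell_\infty$ error $O(\gamma)$. The extra slack introduced by $\theta$ inflates $\Gamma$ by only a constant factor, so the $\tau$ bound of Theorem~\ref{thm:main} is preserved.

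Next I would argue that step 4 also runs in $O(n)$ time. The key observation is that a cell $X$ can satisfy $\widetilde{\op{UB}}(X)\geq \MinPts$ only if some cell in $\mathbf{NB}(X)$ has a non-zero entry in $H$: otherwise $\widetilde{\op{UB}}(X)=\Gamma<\MinPts$, since the assumption $\MinPts>\tau=2\Gamma$ in Theorem~\ref{thm:main} rules this out. Hence the candidate core cells are contained in the union of the $\kappa=O(1)$ neighborhoods of the $O(n)$ non-zero entries of $H$, a set of size $O(n)$. For each candidate, $\widetilde{\op{UB}}$ is evaluated by $O(\kappa)=O(1)$ hash lookups, yielding $\X_{core}$ in $O(n)$ total time.

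Finally, step 5 is a union-find pass: for each $X\in\X_{core}$ (at most $O(n)$ cells), examine the $O(1)$ cells at distance less than $\alpha$ via the hash map and union those that are also in $\X_{core}$; emit the cells of each connected component as a span. This takes $O(n)$ time up to an inverse-Ackermann factor, and the output itself has size $O(n)$. The main obstacle throughout is avoiding any dependence on $|\X|$, which is handled uniformly by the sparse hash-map representation together with the observation that cells far from every data point cannot cross the $\MinPts$ threshold when $\MinPts>\tau$; once this is in place, every step is a standard $O(n)$ scan.
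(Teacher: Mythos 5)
Your proposal is correct and follows essentially the same route as the paper, which obtains the corollary by combining the linear-time pure-DP histogram with the observation (stated at the start of Section~\ref{sec:hist}) that identifying core cells and merging them only requires scanning the $O(n)$ non-zero histogram entries and comparing each with its $\kappa=O(1)$ neighbors. Your write-up is in fact somewhat more explicit than the paper's brief remark—e.g., the argument that a cell whose entire neighborhood is truncated cannot reach $\MinPts$ when $\MinPts>\tau$, and the note that the threshold $\theta$ only inflates $\Gamma$ by a constant—but these are elaborations of the same argument, not a different approach.
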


\section{Optimizations}\label{sec:discuss}

In previous sections, we established matching asymptotic bounds for DBSCAN under DP.
Next, we introduce practical optimizations that are important for the empirical performance.

\subsection{Tighter Upper Bound}\label{sec:lap}

Recall that our algorithm builds upon the noisy upper bound $\widetilde{\op{UB}}(X)$, obtained from taking the sum of $\kappa = \left(1+\frac{8\sqrt{d}}{\eta}\right)^d$ noisy counts $\sum_{X_i\in \mathbf{NB}(X)} \tilde{x}_i$ and a data-independent scale factor $\Gamma=\kappa\gamma$. In fact, we only need to use a high probability upper bound $\widetilde{\op{UB}}$ of $\op{UB}$ for the results to hold. This means the $\Gamma$ function can be tightened.

For a naive Laplace histogram, we apply Theorem~\ref{thm:lapConcentration} with a union bound over all the cells $X\in\X$ to obtain 
\[
\Gamma_{\op{Lap}}=\frac{2\sqrt{2}}{\varepsilon}\max\left\{\sqrt{\kappa\ln\frac{2|\X|}{\beta}}, \ln\frac{2|\X|}{\beta}\right\}\,,
\]
which improves the bound by a constant factor of $\sqrt{\kappa}$. For the linear time histogram in Section~\ref{sec:hist}, recall the error of each entry is bounded by $\theta+\gamma$. With Laplace concentration, the error bound for the independent sum is $\Gamma=\kappa \theta + \Gamma_{\op{Lap}}$, which gives the exact guarantee of our mechanism.

\begin{theorem}[Exact Bound]
For $0<\eta \leq 4$, there is a linear time  $\varepsilon$-DP mechanism for $(\alpha,\MinPts)$-DBSCAN clustering that is $(3+\eta,\tau;\beta)$-accurate, where 
\[
\tau = \frac{2\kappa}{\varepsilon}\cdot \max\left\{0, \ln {\frac{|X|}{n}}\right\} + \frac{4\sqrt{2}}{\varepsilon}\cdot \max\left\{\sqrt{\kappa \ln\frac{2|\X|}{\beta}}, \ln\frac{2|\X|}{\beta}\right\}\,,
\]
where $\kappa = \left(1+\frac{8\sqrt{d}}{\eta}\right)^d$ and $|\X|=\left(\frac{4\sqrt{d}}{ \eta \alpha}\right)^d$.
\end{theorem}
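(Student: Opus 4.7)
The plan is to instantiate Algorithm~\ref{algo} with the linear-time pure-DP histogram of Algorithm~\ref{algo:hist} as the subroutine, and to replace the loose scale factor $\Gamma = \kappa\gamma$ used in the proof of Theorem~\ref{thm:main} with a tighter high-probability bound obtained by Laplace concentration. Privacy and the $O(n)$ space/time are inherited directly from Algorithm~\ref{algo:hist} by post-processing once we set the threshold $\theta = \max\{0, \frac{1}{\varepsilon}\ln\frac{|\X|}{n}\}$, so the remaining work is purely an accuracy calculation; the payoff is that the bound on $\tau$ shrinks from $O(\kappa\gamma)$ to roughly $O(\kappa\theta + \sqrt{\kappa}\cdot\gamma)$.

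First I would isolate the per-cell understatement of the released histogram $H = \{h_i\}$. Using the equivalence $H \sim Y$ proved for Algorithm~\ref{algo:hist}, we may analyze as if $\tilde{x}_i = x_i + Z_i$ for independent $Z_i \sim \op{Lap}(1/\varepsilon)$ across all $X_i \in \X$, with $h_i = \tilde{x}_i$ when $\tilde{x}_i \geq \theta$ and $h_i = 0$ otherwise. A short case split then yields the one-sided bound $x_i - h_i \leq -Z_i + \theta$: in the first case it reduces to $-Z_i \leq -Z_i + \theta$, and in the second case $x_i + Z_i < \theta$ directly forces $x_i \leq \theta - Z_i$.

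Next, I would sum this inequality over the $\kappa = (1 + 8\sqrt{d}/\eta)^d$ cells in $\mathbf{NB}(X)$ for each fixed $X$, obtaining
\[
\sum_{X_i \in \mathbf{NB}(X)}(x_i - h_i) \;\leq\; \kappa\theta + \left|\sum_{X_i \in \mathbf{NB}(X)} Z_i\right|,
\]
and apply Theorem~\ref{thm:lapConcentration} with failure probability $\beta/|\X|$ to the noise sum, followed by a union bound over all $|\X|$ cells. This makes the noise term uniformly bounded by $\Gamma_{\op{Lap}} = \frac{2\sqrt{2}}{\varepsilon}\max\{\sqrt{\kappa\ln(2|\X|/\beta)},\, \ln(2|\X|/\beta)\}$ with probability $1 - \beta$. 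Setting $\Gamma = \kappa\theta + \Gamma_{\op{Lap}}$ in the definition~\eqref{eq:ubesti} of $\widetilde{\op{UB}}$ therefore guarantees simultaneously $\widetilde{\op{UB}}(X) \geq \op{UB}(X)$ for every cell $X$.

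Finally, because the proof of Lemma~\ref{thm:core:cell} (and hence of Theorem~\ref{thm:main}) only uses that $\widetilde{\op{UB}}$ is a valid high-probability upper bound on $\op{UB}$, plugging in the new $\Gamma$ preserves both sandwich conditions and gives $(3+\eta, \tau; \beta)$-accuracy with $\tau = 2\Gamma$; substituting the closed forms for $\theta$ and $\Gamma_{\op{Lap}}$ recovers exactly the quantity in the statement. The only delicate point is justifying that the simulated zero-entries of Algorithm~\ref{algo:hist} may be treated as if generated by independent Laplace noises when invoking Theorem~\ref{thm:lapConcentration}; this is precisely what the joint-distribution equivalence $H \sim Y$ in Section~\ref{sec:hist} provides, so no new probabilistic machinery is needed.
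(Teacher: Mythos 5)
Your overall route coincides with the paper's: instantiate Algorithm~\ref{algo} with the thresholded linear-time histogram, use the $H\sim Y$ equivalence to treat the released entries as thresholded independent Laplace counts, and tighten $\Gamma$ from $\kappa\gamma$ to $\kappa\theta+\Gamma_{\op{Lap}}$ via Theorem~\ref{thm:lapConcentration} plus a union bound over the $|\X|$ neighborhoods, then plug into Lemma~\ref{thm:core:cell} and Theorem~\ref{thm:main} with $\tau=2\Gamma$. Your per-entry inequality $x_i-h_i\le \theta-Z_i$ is a cleaner version of the paper's remark that each entry has error at most $\theta+\gamma$.

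There is, however, a gap in your last step. The proof of Lemma~\ref{thm:core:cell} does \emph{not} only use that $\widetilde{\op{UB}}(X)$ is a high-probability upper bound of $\op{UB}(X)$: its second property relies on $\widetilde{\op{UB}}(X)\le \op{UB}(X)+2\Gamma$, i.e.\ you also need the over-estimation bound $\sum_{X_i\in\mathbf{NB}(X)}(h_i-x_i)\le\Gamma$, and your one-sided inequality only controls under-estimation. In the over-estimation direction, $h_i-x_i=Z_i$ on kept entries and $h_i-x_i\le 0$ on truncated entries, so the quantity to control is $\sum_{\text{kept}}Z_i$, a sum over a noise-dependent subset of indices; Theorem~\ref{thm:lapConcentration} does not apply verbatim to it, and it can strictly exceed $\left|\sum_i Z_i\right|$ (e.g.\ when several entries with negative noise are truncated while positive-noise entries are kept). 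A repair is available: bound it by $\sum_i\max(Z_i,0)$, whose mean $\kappa/(2\varepsilon)$ is absorbed by the $\kappa\theta$ term whenever $\theta\ge \frac{1}{2\varepsilon}$ (i.e.\ $|\X|\ge\sqrt{e}\,n$) and whose fluctuation fits inside $\Gamma_{\op{Lap}}$, while in the regime $|\X|=O(n)$ one should simply run the plain Laplace histogram, which is also linear time and corresponds to the $\theta=0$ branch of the stated $\tau$. To be fair, the paper's own proof is equally terse on this point, but your explicit claim that only the upper-bound property of $\widetilde{\op{UB}}$ is needed is the step that would not survive scrutiny as written.
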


Note that when dealing with frequencies in integers, an alternative is to use Geometric noise in Theorem~\ref{thm:geom}, whose error bound is given in Theorem~\ref{thm:geomConcentration}.
For our problem where $\kappa$ is small, the Laplace bound is more tight when $\varepsilon$ is small. For example, when $\varepsilon=1$, $\beta=1/3$, $\kappa=21$ and $|\X|=1000$, Theorem~\ref{thm:geomConcentration} is forced to use the first bound to get  $\Gamma_{\op{Geom}}=252$, whereas the Laplace bound is only
$\Gamma_{\op{Lap}}=38.2$.
This similarly applies to  approximate $(\varepsilon,\delta)$-DP, where Gaussian noises could replace Laplace noise. With Gaussian concentration, it error bound is 
\[
\Gamma_{\op{Gauss}} = \frac{2}{\varepsilon}\sqrt{\kappa\ln\frac{1.25}{\delta}\ln\frac{2|\X|}{\beta}}\,.
\]
For reasonable $\delta\ll 0.1$, we have $\ln(1.25/\delta)>2$, and the bound is worse than the pure-DP version using Laplace mechanism. Note that the error function $\Gamma$ of a DP histogram directly translates to the additive error $\tau$ in the approximation ratio of DP-DBSCAN.

\subsection{Choice of Parameters}\label{sec:parameters}

Choosing the parameters (namely $\alpha$ and $\MinPts$) is an essential task even for the original DBSCAN algorithm.
It is known that while DBSCAN is sensitive to the radius $\alpha$, the $\MinPts$ parameter does not have a large effect on the clustering quality~\cite{DBLP:journals/tods/SchubertSEKX17}. 
For our DP-DBSCAN mechanism, we will be using the original $\alpha$ with a scaled up $\MinPts'=\MinPts+\tau$, compared to the parameters $(\alpha,\MinPts)$ in the non-private setting. 
That is, our algorithm will find all $(\alpha,\MinPts+\tau)$-clusters, while guaranteeing the resulting spans are from clusters of size $(3+\eta, \MinPts)$-approximation. This is both to guarantee that $\MinPts'>\tau$, and to offset the effect that $\op{UB}(X)$ itself without noise is an upper bound that tends to over-estimate the density of points in $X$ already.

\section{Experiments}
\label{sec:exp}

In this section, we present both visual and quantitative results to demonstrate the effectiveness of our algorithm.

\subsection{Experiment Setup}

\begin{table}[htbp]
    \centering
    \caption{Summary of Datasets Evaluated ($\alpha$ wrt.~data scale).}
    \label{tab:dataset}
    \begin{tabular}{ccccc}
    \toprule
        & Dataset & $n$  & Clusters & $(\alpha,\MinPts)$ \\
    \midrule
        \multirow{3}{*}{Synthetic} & Circles~\cite{scikit-learn} &  2,000 & 2 & $(0.2,10)$ \\
        & Moons~\cite{scikit-learn} &  2,000 & 2 & $(0.2,7)$ \\
        & Blobs~\cite{scikit-learn} &  2,000 & 3 & $(0.2,7)$ \\
    \midrule
        \multirow{3}{*}{Benchmark} & Cluto-t4~\cite{karypis2002cluto} &  8,000 & 6$^\dagger$ & $(9.0,11)$ \\
        & Cluto-t5~\cite{karypis2002cluto} & 8,000 & 6$^\dagger$ & $(9.0,20)$ \\
        & Cluto-t7~\cite{DBLP:journals/computer/KarypisHK99} &  10,000 & 9$^\dagger$ & $(12.0, 20)$ \\
    \midrule
        \multirow{4}{*}{Real} & Crash~\cite{DataCrash} &  1,860,785 & - & $(100\text{m},300)$ \\
        & Cabs-tiny~\cite{c7j010-22} &  845,685 & - & $(50\text{m},1000)$ \\
        & Cabs~\cite{c7j010-22} &  10,995,626 & - & $(20\text{m},500)$ \\
        & HAR70+~\cite{DBLP:journals/sensors/UstadLTTVBS23} & 103,860 & 7 & (0.01, 5)\\
    \bottomrule
     \multicolumn{5}{l}{$^{\dagger}$\footnotesize{contains noise}}
    \end{tabular}
\end{table}
\textbf{Datasets.} We conducted experiments on both synthetic and real-world datasets.
Table~\ref{tab:dataset} summaries the details of the datasets we evaluated, including the number of clusters for synthetic datasets, and the DBSCAN parameters with respect to the original data.

The \textbf{Circles}, \textbf{Moons}, and \textbf{Blobs} datasets are synthetic datasets generated by scikit-learn~\cite{scikit-learn}, commonly used to  evaluate clustering algorithms.
The \textbf{Cluto} (a.k.a.~Chameleon) benchmark~\cite{karypis2002cluto,DBLP:journals/computer/KarypisHK99} contains non-convex clusters and noises, making it particularly challenging for centroid-based clustering algorithms.

For real-world data, we use two geographic datasets to demonstrate that our DP-DBSCAN algorithm can be applied to visualize private data. The \textbf{Crash} dataset~\cite{DataCrash} contains motor vehicle collision data in New York City from 2012 to 2024. After cleaning the data by filtering valid event locations within New York City (between $40^\circ$ N $73^\circ$ W and $41^\circ$ N $75^\circ$ W), we retained 1.8 million records out of the original 2.1 million. We also used the \textbf{Cabs} dataset~\cite{c7j010-22}, which consists of trajectory data for taxi cabs in San Francisco, collected from around 500 taxis over 30 days.
We filtered cab locations between $37.4^\circ$ N $122.1^\circ$ W and $37.9^\circ$ N $122.7^\circ$ W, resulting in 11 million records.
Since the dataset is large enough to cause the original DBSCAN out-of-memory, we also use a subset of the data \textbf{Cabs-tiny}, containing the starting and ending points of each passenger-occupied trajectory in a smaller region. This gives 800k records.

For both geographic real-world datasets, we projected the longitude and latitude coordinates to kilometers when calculating Euclidean distances. Each degree of latitude corresponds to approximately 111.2 km, and each degree of longitude is approximately 85.2 km at around 40°N. We then ran DBSCAN with the distance parameter $\alpha$ in meters. 

Finally, to evaluate the performance of DP-DBSCSAN for classification tasks, we use a Human Activity Recognition (\textbf{HAR}) dataset~\cite{DBLP:journals/sensors/UstadLTTVBS23} obtained from UCI Machine Learning Repository~\cite{Dua:2019}. The dataset contains accelerometer recordings for 18 aged participants, where each record consists of 3D acceleration of both the back and the thigh sensor. The target is to recognize the corresponding activity, e.g.~walking, standing, sitting. We use the 3D acceleration of back sensor in our experiments. For each person, there are roughly 100k records.

\textbf{Metrics.} For datasets with true labels, we evaluate the clustering accuracy using two commonly adopted measures: the Adjusted Rand Index (ARI)~\cite{hubert1985comparing} and Adjusted Mutual Information
(AMI)~\cite{DBLP:conf/icml/NguyenEB09}.
Both metrics range from $[-0.5,1.0]$, where a score close to 0 indicates random labeling, and a score of 1 signifies perfect agreement with the ground truth.

\textbf{Mechanisms.} We implemented our proposed DP-DBSCAN mechanism in Python and compared it with: 1) the original DBSCAN and k-means mechanism implemented in scikit-learn~\cite{scikit-learn}, and 2) the DP-Kmeans mechanism from IBM’s Diffprivlib~\cite{diffprivlib}.
We also implemented existing methods~\cite{DBLP:conf/iiki/NiLLBY17,wu2015dp,DBLP:journals/access/NiLWJY18,jin2019improved} that privatize the pairwise distance matrix. Since they have $\ARI$ and $\AMI=0$ on all the datasets, we omit them in the comparison below.

Unless otherwise specified, we run Kmeans-based algorithms with the exact number of clusters if it is known, or $k=3$ if not; and run DBSCAN-based algorithms with the parameters specified in Table~\ref{tab:dataset}.
For DP-DBSCAN, we make our algorithm adopt for the linear-time histogram in Section~\ref{sec:hist} when $n\leq |\X|/2$, and the $O(|\X|)$ time naive Laplace histogram if the opposite is true.
We compare the difference in their running time in Section~\ref{sec:time}.

The code and datasets for all the experiments can be found at \url{https://github.com/QiuTedYuan/DpDBSCAN}.
To ensure reproducibility, we initialized the random seed to 0 for all experiments in the figures, and the reported scores are the average of three trials with random seeds 0, 1, and 2.
All experiments were conducted on a desktop equipped with an Intel(R) Core(TM) i7-13700KF 3.4 GHz CPU and 64GB of memory.

\subsection{Visualization of DP-DBSCAN}

\begin{figure*}[htbp]
    \centering
     \begin{subfigure}[t]{0.19\textwidth}
         \centering
         \includegraphics[width=\textwidth]{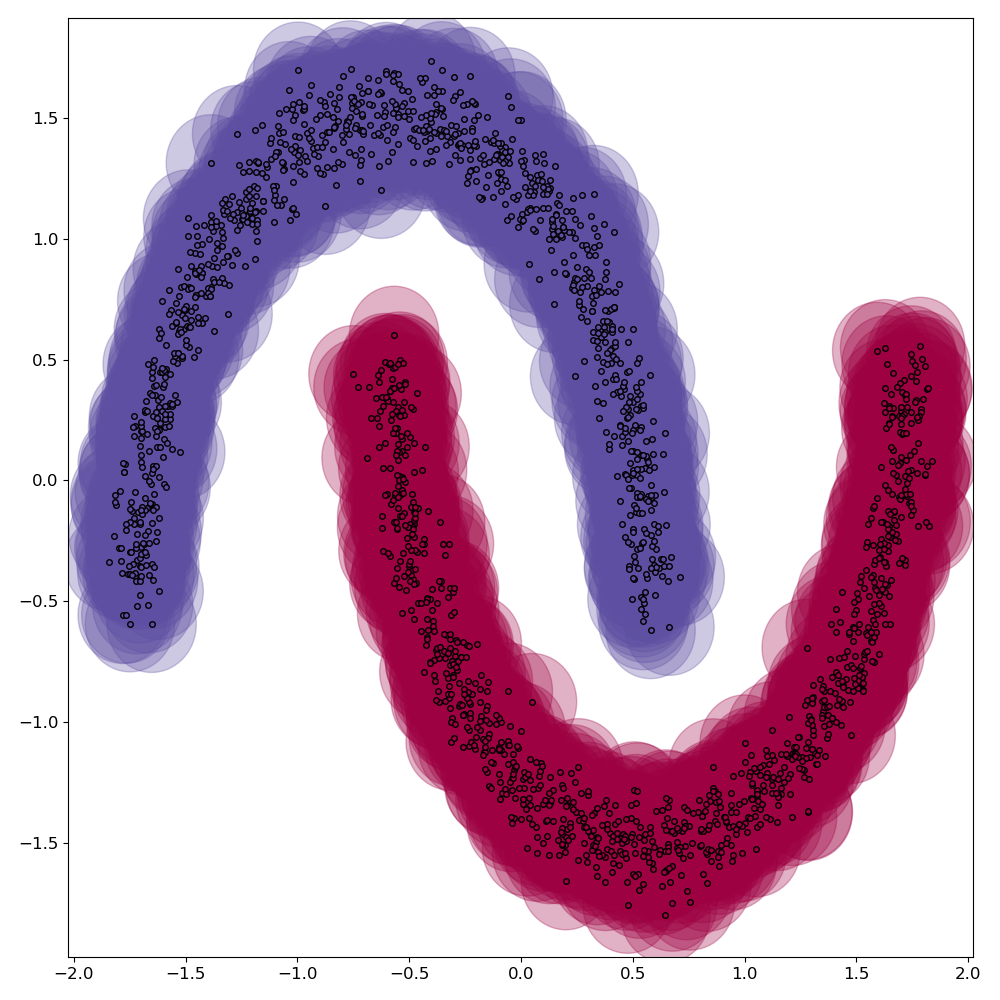}
         \caption{DBSCAN}
         \label{fig:moons_dbscan}
     \end{subfigure} \hfill
     \begin{subfigure}[t]{0.19\textwidth}
         \centering
         \includegraphics[width=\textwidth]{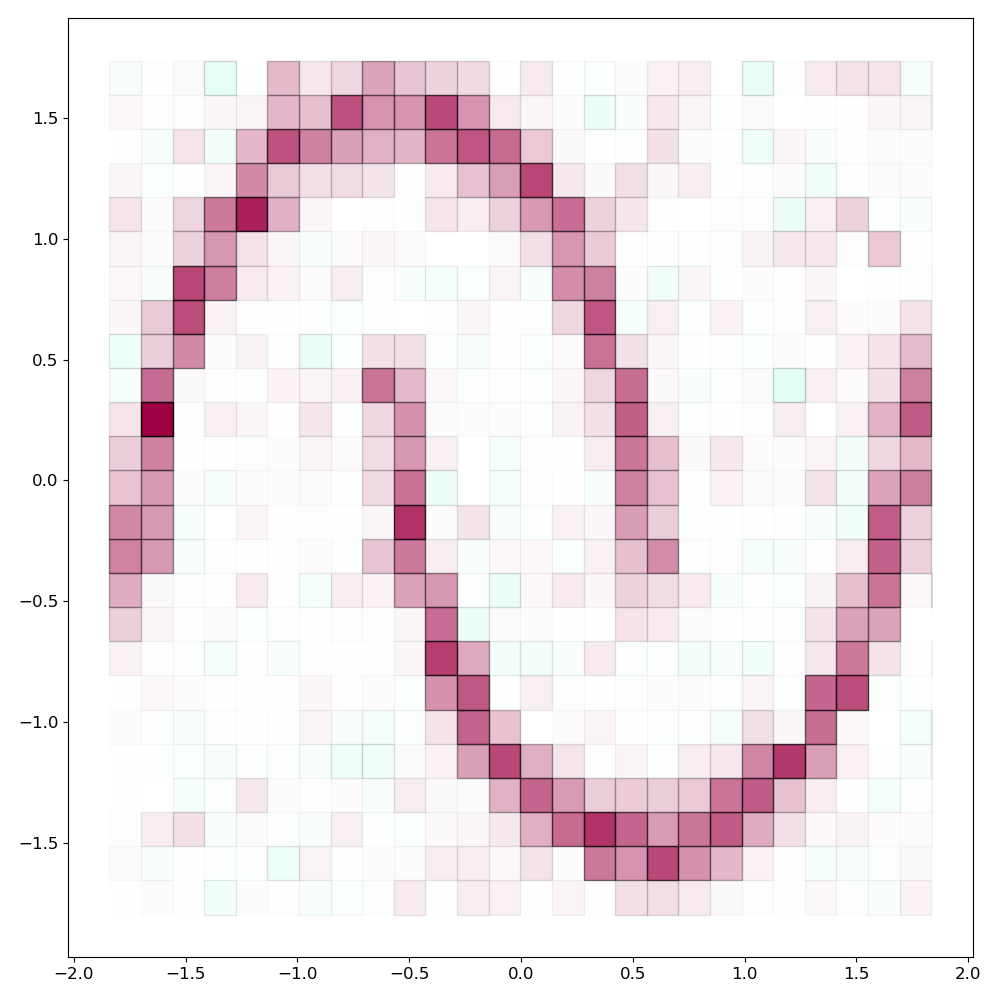}
         \caption{Noisy histogram}
         \label{fig:moons_counts}
     \end{subfigure} \hfill
     \begin{subfigure}[t]{0.19\textwidth}
         \centering
         \includegraphics[width=\textwidth]{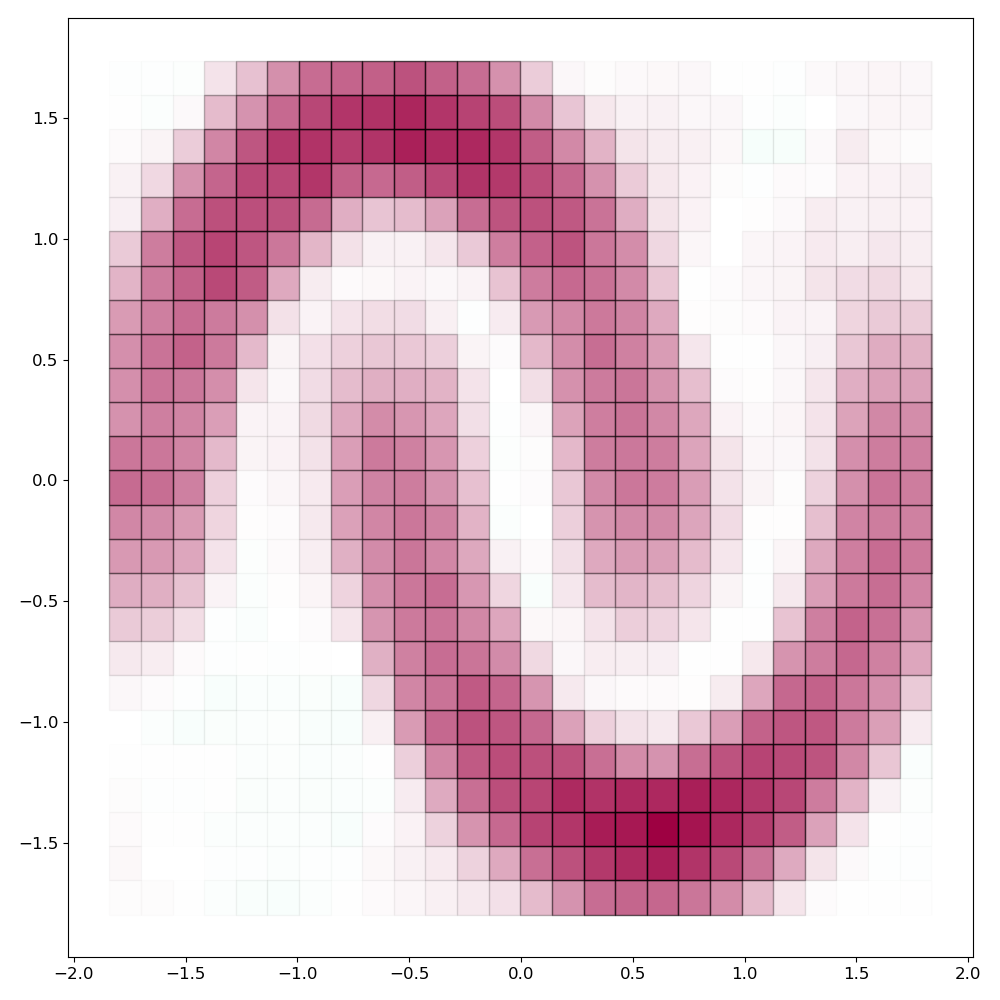}
         \caption{Neighbor sums}
         \label{fig:moons_sum}
     \end{subfigure} \hfill
     \begin{subfigure}[t]{0.19\textwidth}
         \centering
         \includegraphics[width=\textwidth]{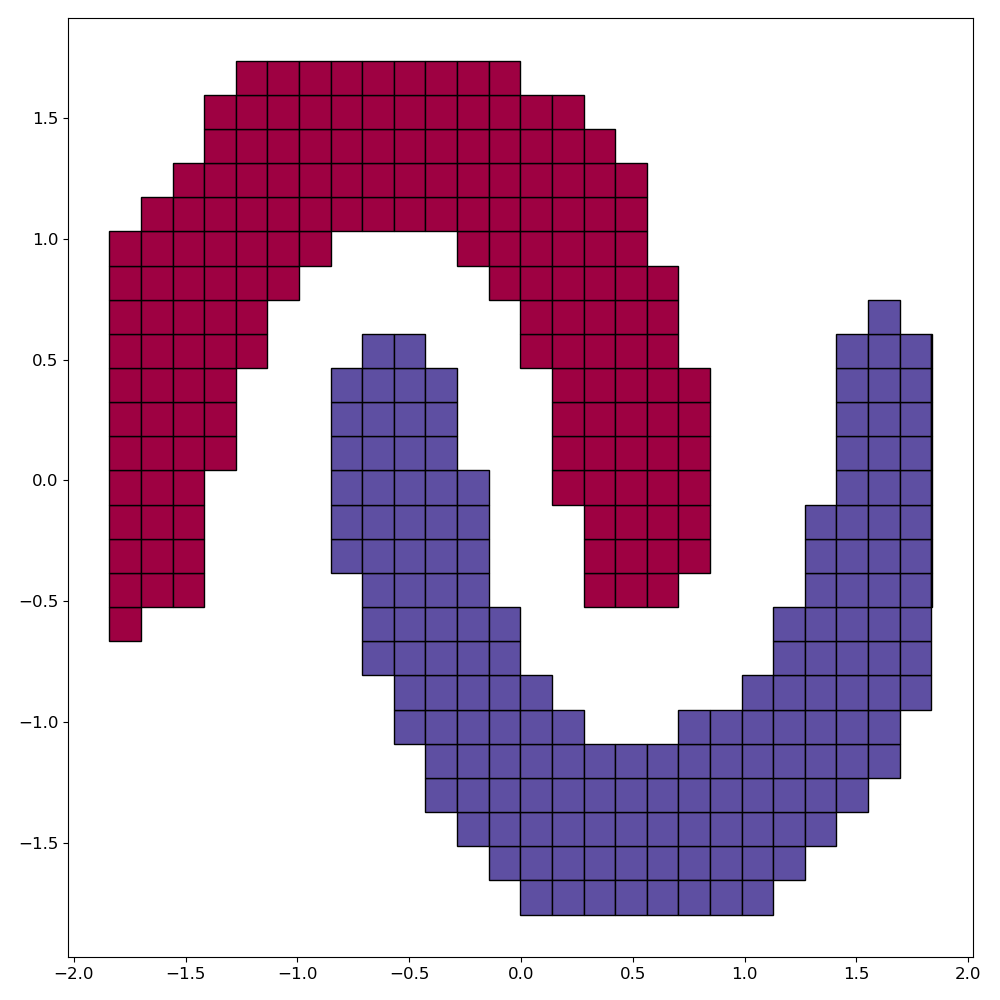}
         \caption{Approx.~spans}
         \label{fig:moons_dp_dbscan_grids}
     \end{subfigure} \hfill
     \begin{subfigure}[t]{0.19\textwidth}
         \centering
         \includegraphics[width=\textwidth]{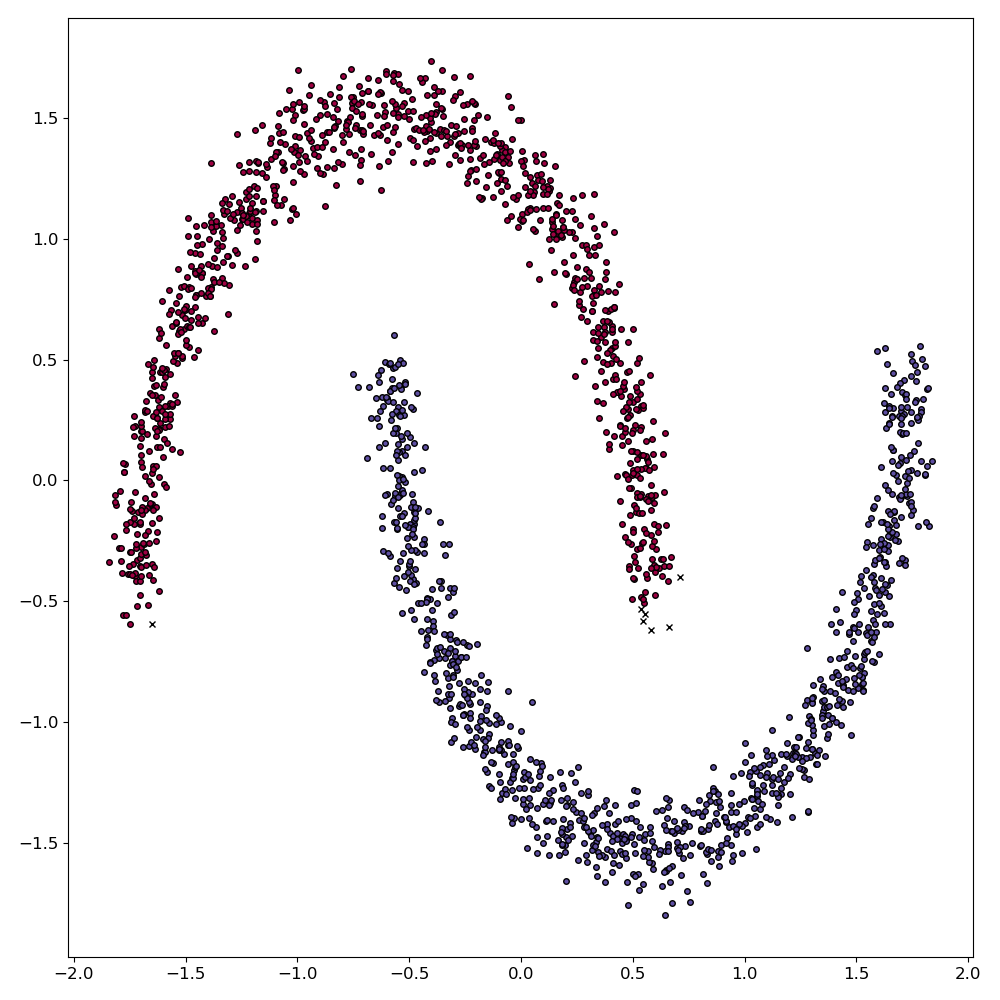}
         \caption{Extracted labels}
         \label{fig:moons_dp_dbscan_pts}
     \end{subfigure}

    \caption{Visualization of DP-DBSCAN on Moons Dataset.}
    \label{fig:moons}
\end{figure*}

We first illustrate how our mechanism works on the Moons dataset in Figure~\ref{fig:moons}. Given the dataset, we construct cells with width $w=\frac{\alpha}{\sqrt{d}}=0.14$, and release the noisy histogram under $\varepsilon=1$ in Figure~\ref{fig:moons_counts}, where cell transparency represents frequency. Cells with positive/negative counts are shown in red/green.
Next, we compute the sum of noisy counts for each cell and its neighboring cells, as seen in Figure~\ref{fig:moons_sum}. As indicated in Figure~\ref{fig:cell}, each noisy sum consists of the counts from $\kappa=21$ neighboring grids in this case, including the grid itself. 
Finally, by identifying core grids and merging adjacent ones, Algorithm~\ref{algo} produces the differentially private approximation to the DBSCAN cluster spans, as depicted in Figure~\ref{fig:moons_dp_dbscan_grids}.

To analyze the quality of the clusters, we calculate the labels assigned to each point in Figure~\ref{fig:moons_dp_dbscan_pts}, which are not a private output of the mechanism but for evaluation only.
It can be observed that while most points receive correct labels, points at the boarders are classified as noises in DP-DBSCAN, whereas they are core points in the original DBSCAN. This is reasonable under differential privacy: the boarder points are highly sensitive with respect to the addition or removal of a single point, whereas interior points are still highly likely to remain core, even if some of its neighbors gets removed in a neighboring dataset. Quantitatively, Our mechanism achieves $\AMI=0.97$ and $\ARI=0.94$.

\begin{figure*}[htbp]
    \centering
    \begin{subfigure}[t]{0.19\textwidth}
         \centering
         \includegraphics[width=\textwidth]{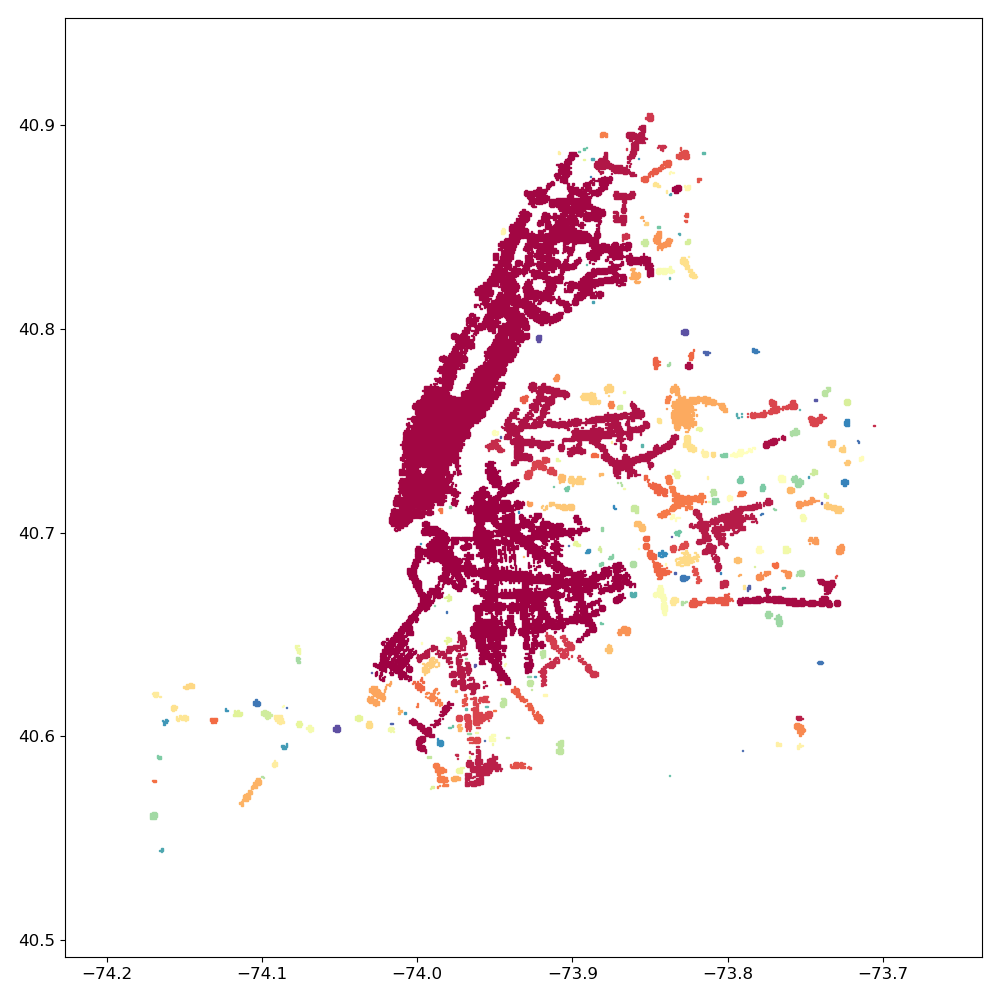}
         \caption{Crash/$O(n)$}
         \label{fig:crash_cluster_linear}
    \end{subfigure} \hfill
    \begin{subfigure}[t]{0.19\textwidth}
         \centering
         \includegraphics[width=\textwidth]{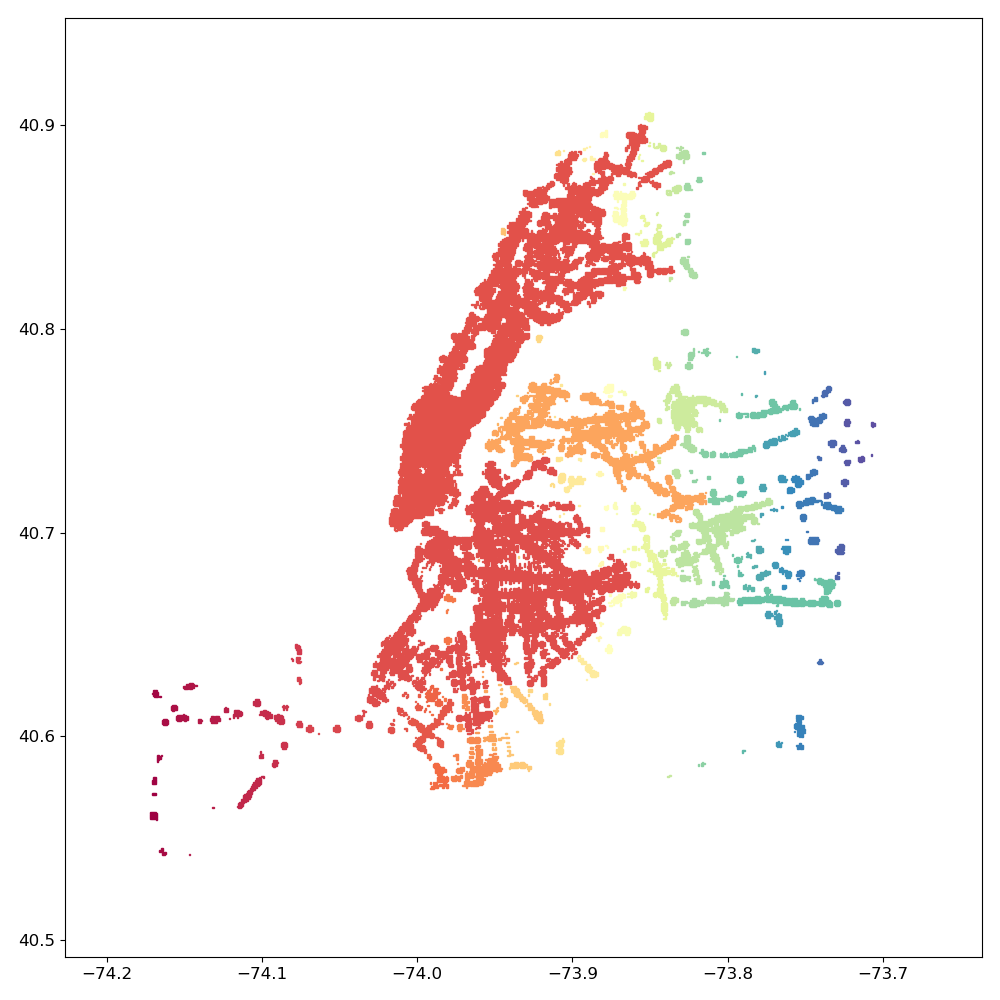}
         \caption{Crash/$O(|\X|)$}
         \label{fig:crash_cluster_naive}
    \end{subfigure} \hfill
    \begin{subfigure}[t]{0.19\textwidth}
         \centering
         \includegraphics[width=\textwidth]{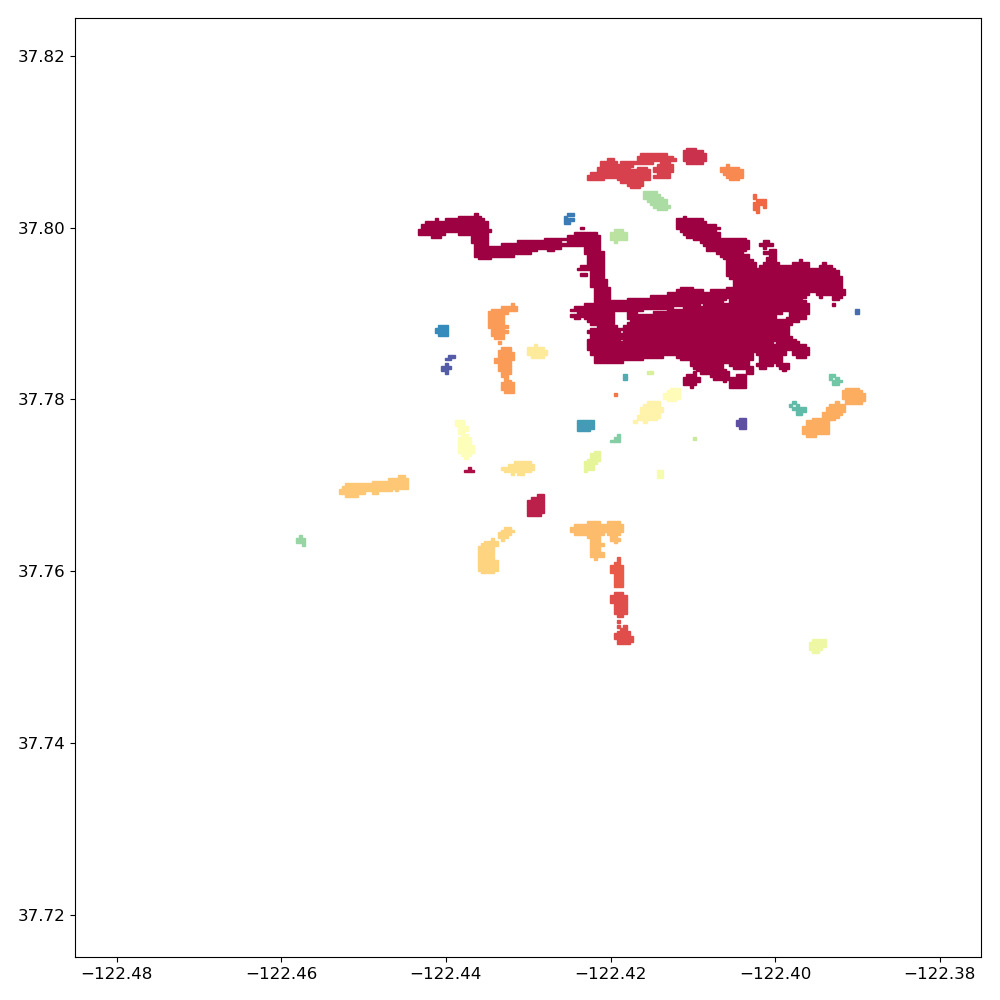}
         \caption{Cabs-tiny/$O(n)$}
    \label{fig:cabs_cluster_linear}
    \end{subfigure} \hfill
    \begin{subfigure}[t]{0.19\textwidth}
         \centering
         \includegraphics[width=\textwidth]{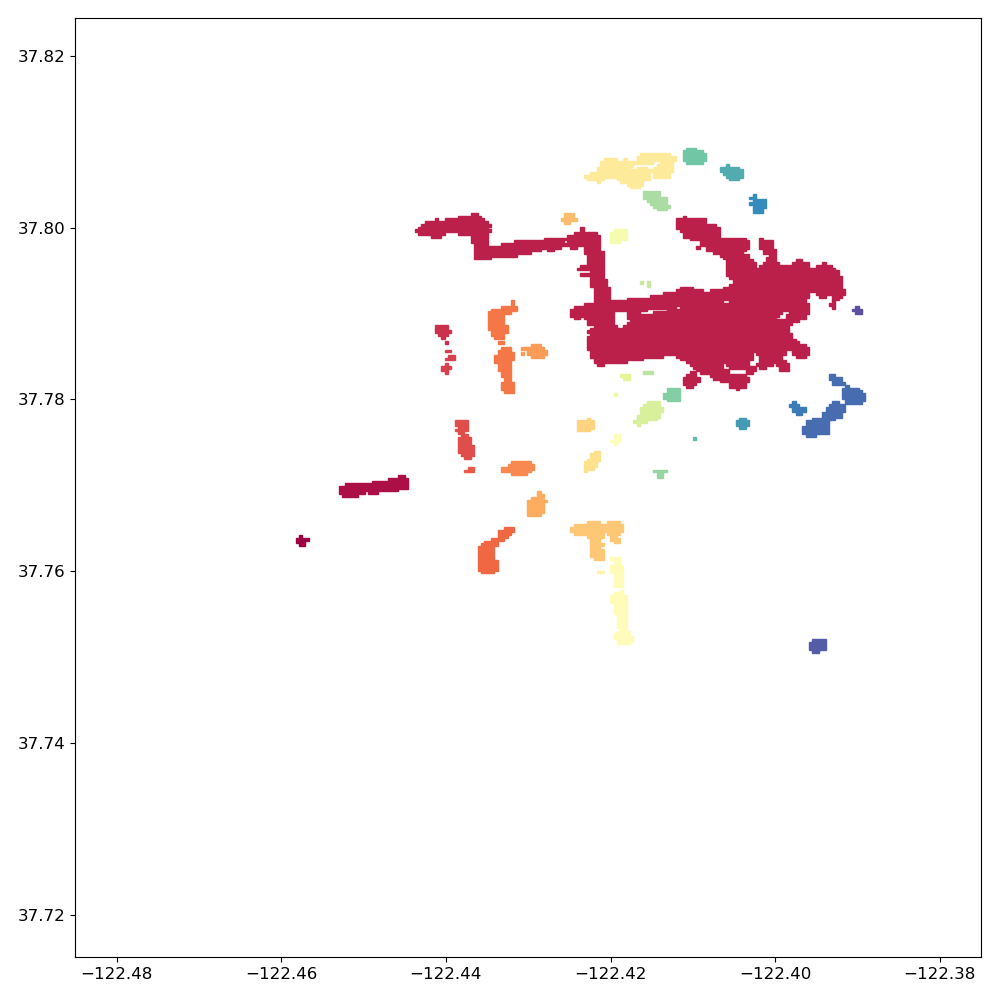}
         \caption{Cabs-tiny/$O(|\X|)$}
    \label{fig:cabs_cluster_naive}
    \end{subfigure} \hfill
    \begin{subfigure}[t]{0.19\textwidth}
         \centering
         \includegraphics[width=\textwidth]{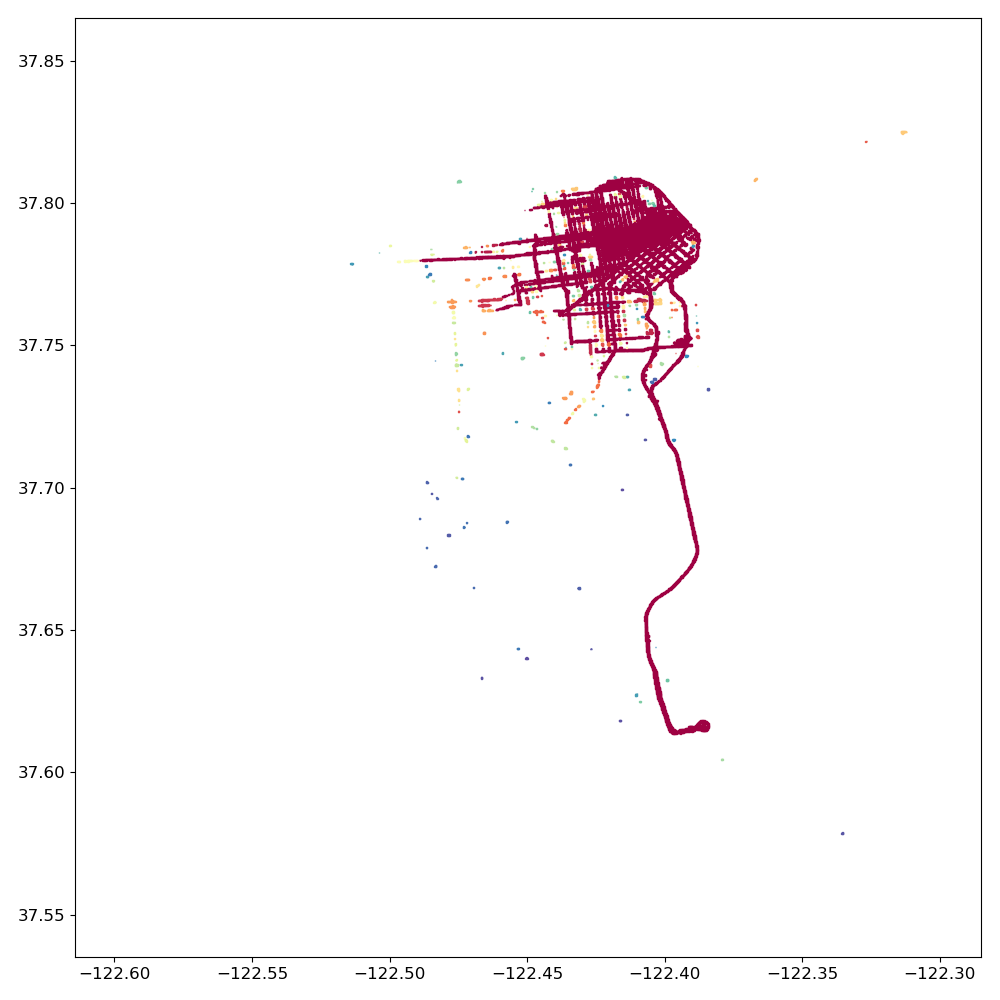}
         \caption{Cabs/$O(n)$}
    \label{fig:cabs_cluster}
    \end{subfigure} 
    \caption{Comparison of DP-DBSCAN clusters on geographic datasets.}
    \label{fig:real}
\end{figure*}

We also visualize the clustering results of DP-DBSCAN on real-world datasets in Figure~\ref{fig:real}, where $\varepsilon=1$.
The point labels are omitted due to the high number of points and lack of ground truth.
For the Crash dataset, 284 clusters are detected by our algorithm in Figure~\ref{fig:crash_cluster_linear}, with the largest cluster spanning Manhattan, Brooklyn, and the Bronx. The performance is similar compared to the $O(|\X|)$ time implementation in Figure~\ref{fig:crash_cluster_naive}, which assigned 271 clusters.
This justifies the accuracy of our linear-time histogram. Their running time will be evaluated in Section~\ref{sec:time}.
For the Cabs-tiny dataset, we observe similar performance in Figure~\ref{fig:cabs_cluster_linear} and \ref{fig:cabs_cluster_naive}, where both implementations of our DP-DBSCAN output 38 clusters, highlighting the high frequency of taxi pick-ups in the northeastern part of San Francisco. However, since we use a small radius $\alpha=20$m on the raw dataset, the naive implementation fails to  work. The linear time implementation can still be run efficiently to produce Figure~\ref{fig:cabs_cluster}, which identified a route to the SFO airport.

\subsection{Comparison of Accuracy}

\begin{figure*}[htbp]
    \centering
    \begin{subfigure}[b]{0.32\textwidth}
         \centering
         \includegraphics[width=\textwidth]{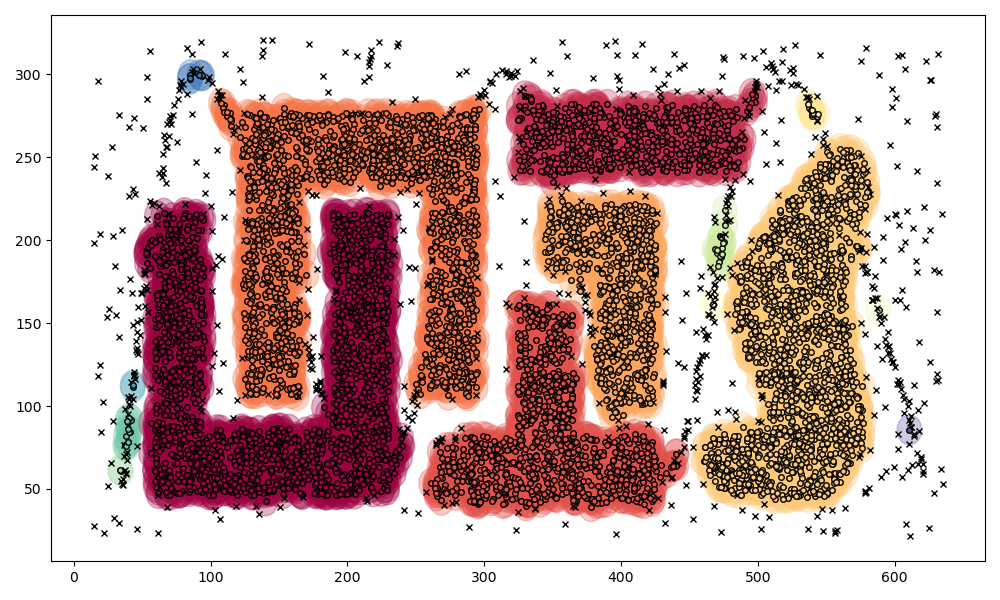}
         \caption{DBSCAN (Cluto-t4)}
         \label{fig:cluto_t4_dbscan}
     \end{subfigure} \hfill
     \begin{subfigure}[b]{0.32\textwidth}
         \centering
         \includegraphics[width=\textwidth]{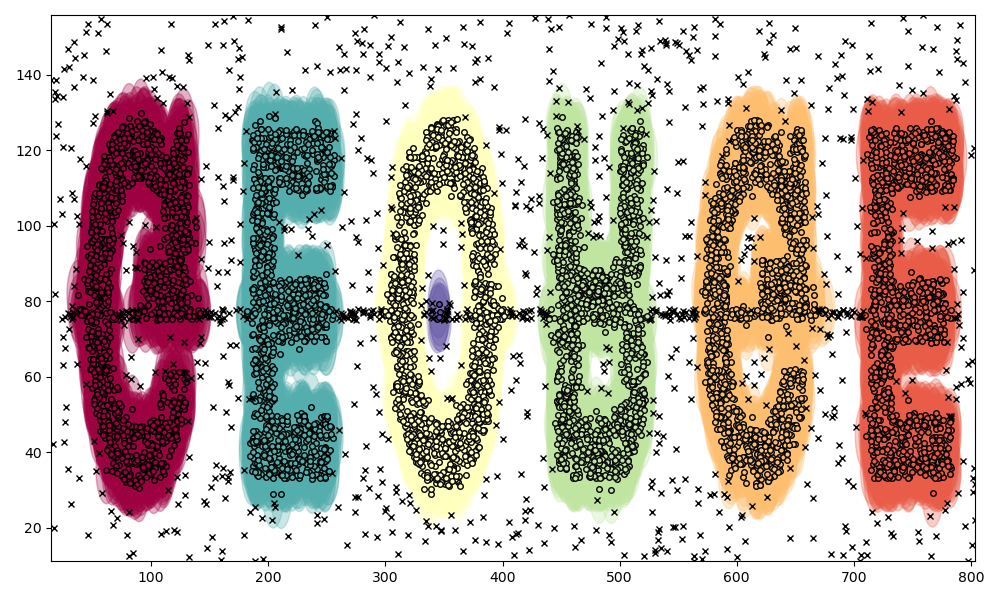}
         \caption{DBSCAN (Cluto-t5)}
         \label{fig:cluto_t5_dbscan}
     \end{subfigure}
     \hfill
     \begin{subfigure}[b]{0.32\textwidth}
         \centering
         \includegraphics[width=\textwidth]{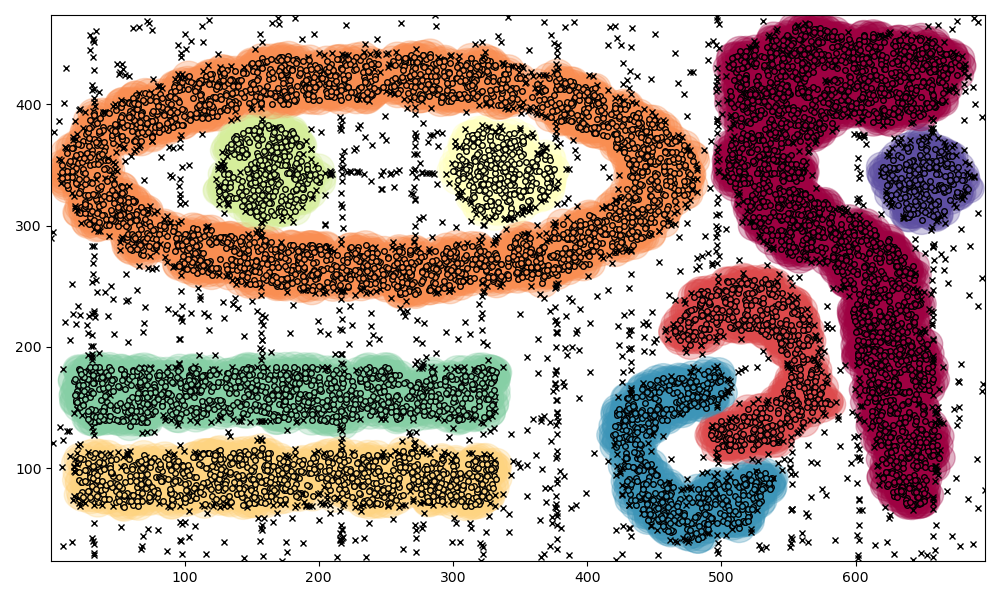}
         \caption{DBSCAN (Cluto-t7)}
         \label{fig:cluto_t7_dbscan}
    \end{subfigure} 
    
    \begin{subfigure}[b]{0.32\textwidth}
         \centering
         \includegraphics[width=\textwidth]{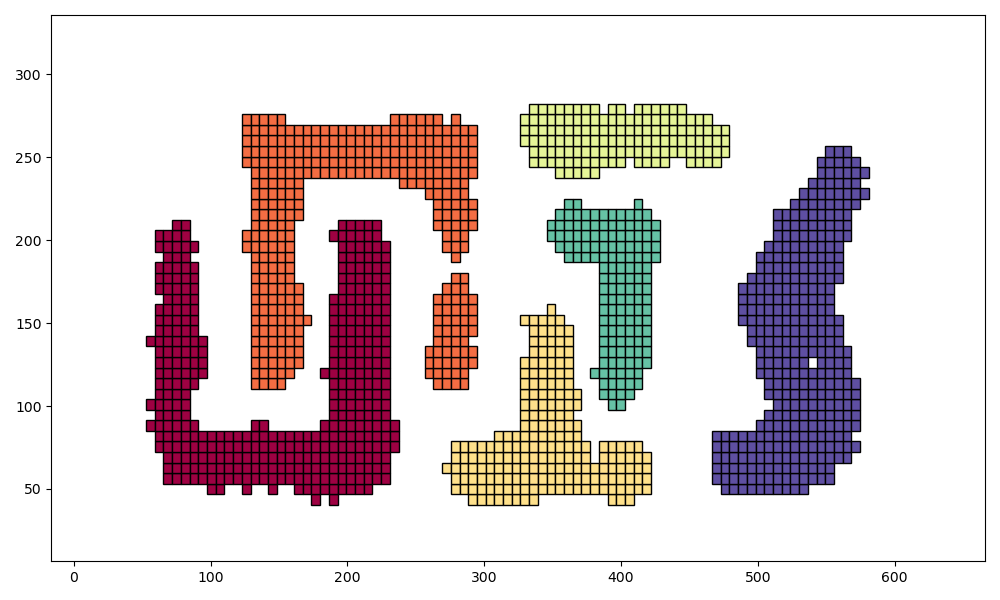}
         \caption{DP-DBSCAN (Cluto-t4)}
         \label{fig:cluto_t4_dp_dbscan}
     \end{subfigure} \hfill
     \begin{subfigure}[b]{0.32\textwidth}
         \centering
         \includegraphics[width=\textwidth]{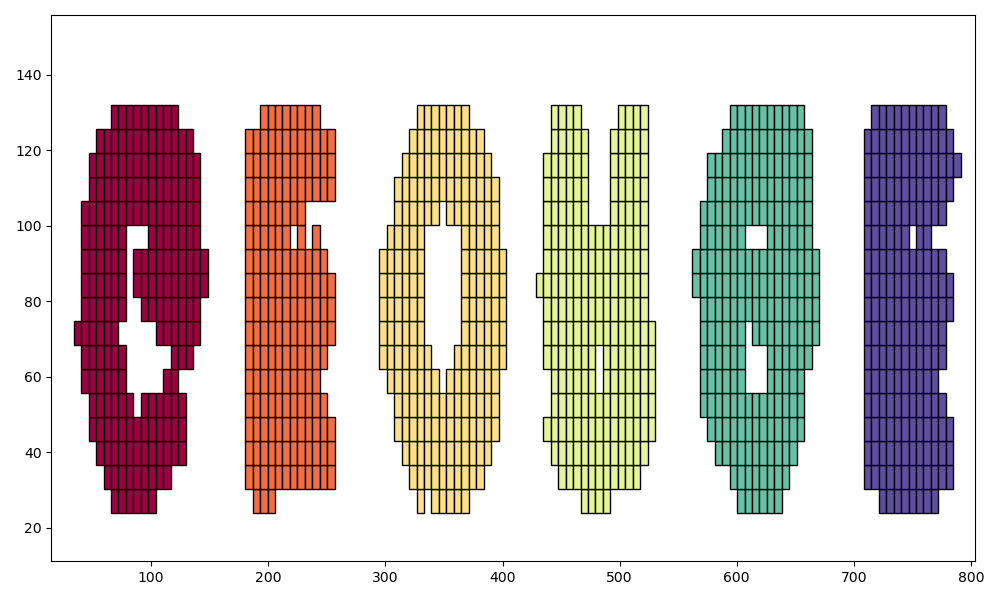}
         \caption{DP-DBSCAN (Cluto-t5)}
         \label{fig:cluto_t5_dp_dbscan}
     \end{subfigure} \hfill
     \begin{subfigure}[b]{0.32\textwidth}
         \centering
         \includegraphics[width=\textwidth]{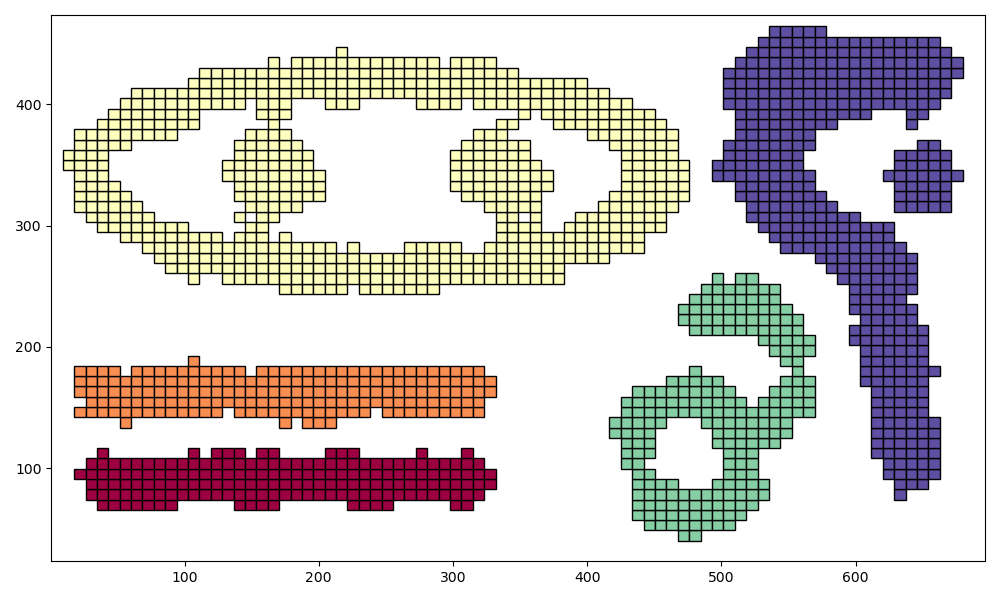}
         \caption{DP-DBSCAN (Cluto-t7)}
         \label{fig:cluto_t7_dp_dbscan}
    \end{subfigure} 
    
    \begin{subfigure}[b]{0.32\textwidth}
         \centering
         \includegraphics[width=\textwidth]{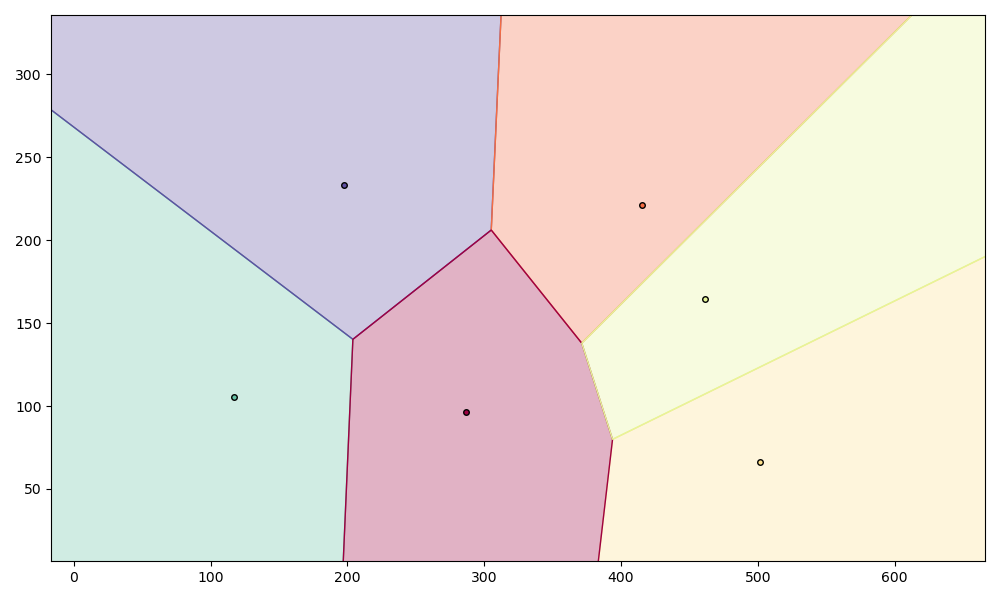}
         \caption{DP-Kmeans (Cluto-t4)}
         \label{fig:cluto_t4_dp_kmeans}
     \end{subfigure} \hfill
     \begin{subfigure}[b]{0.32\textwidth}
         \centering
         \includegraphics[width=\textwidth]{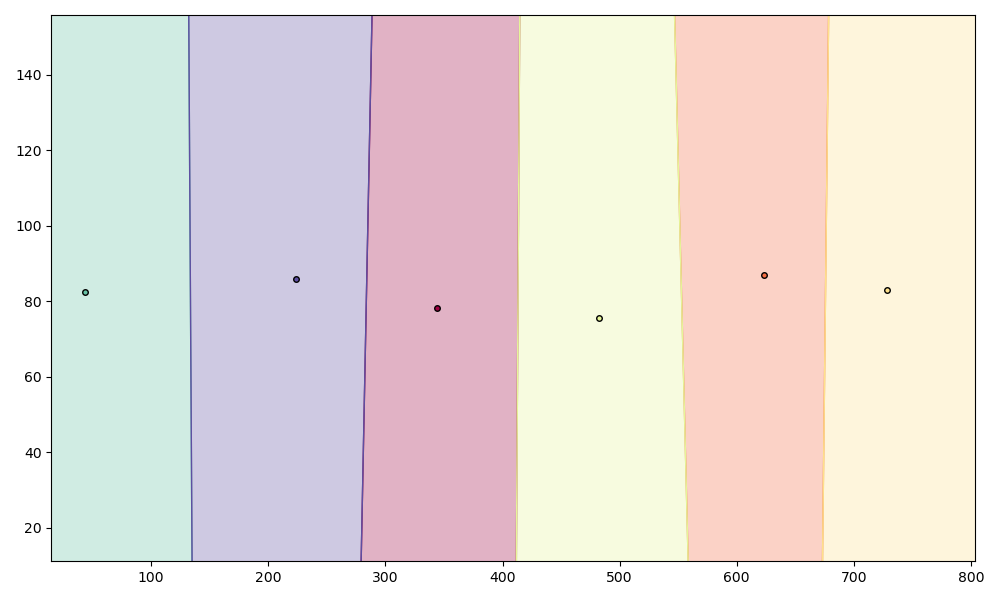}
         \caption{DP-Kmeans (Cluto-t5)}
         \label{fig:cluto_t5_dp_kmeans}
     \end{subfigure} \hfill
     \begin{subfigure}[b]{0.32\textwidth}
         \centering
         \includegraphics[width=\textwidth]{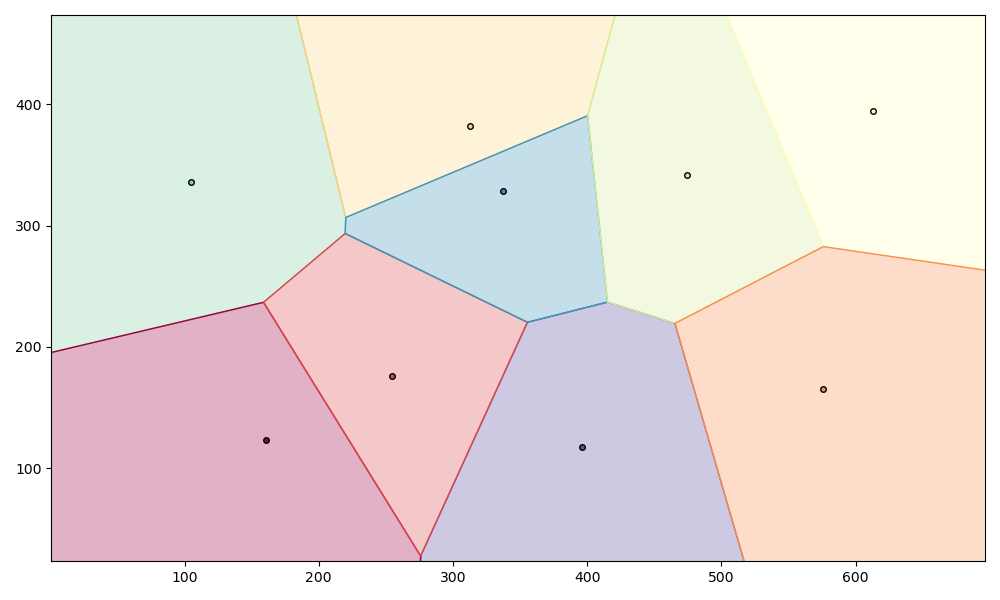}
         \caption{DP-Kmeans (Cluto-t7)}
         \label{fig:cluto_t7_dp_kmeans}
    \end{subfigure}
    
      \begin{subfigure}[b]{0.32\textwidth}
         \centering
         \includegraphics[width=\textwidth]{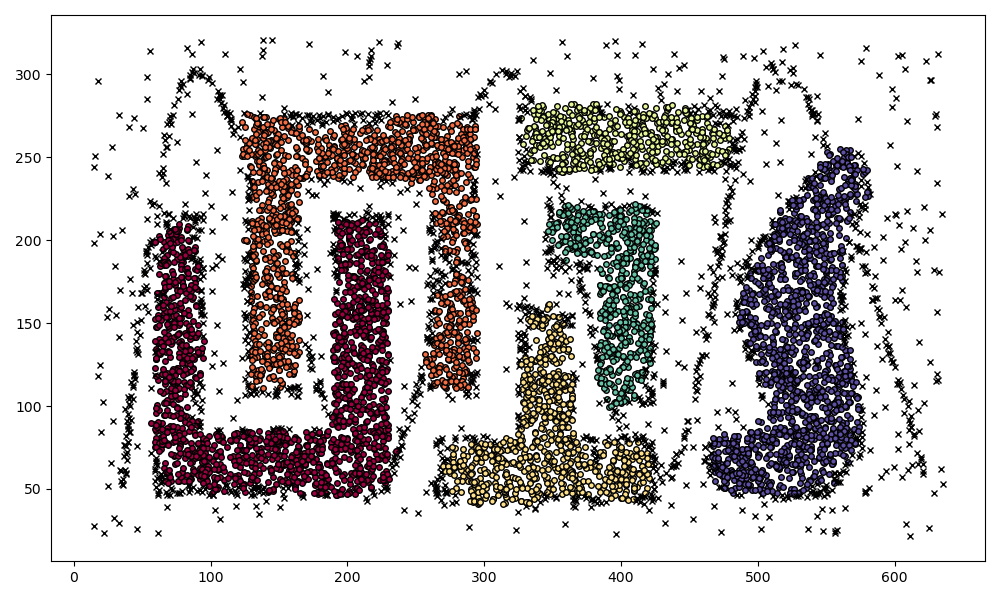}
         \caption{DP-DBSCAN labels (Cluto-t4)}
         \label{fig:cluto_t4_dp_dbscan_labels}
     \end{subfigure} \hfill
     \begin{subfigure}[b]{0.32\textwidth}
         \centering
         \includegraphics[width=\textwidth]{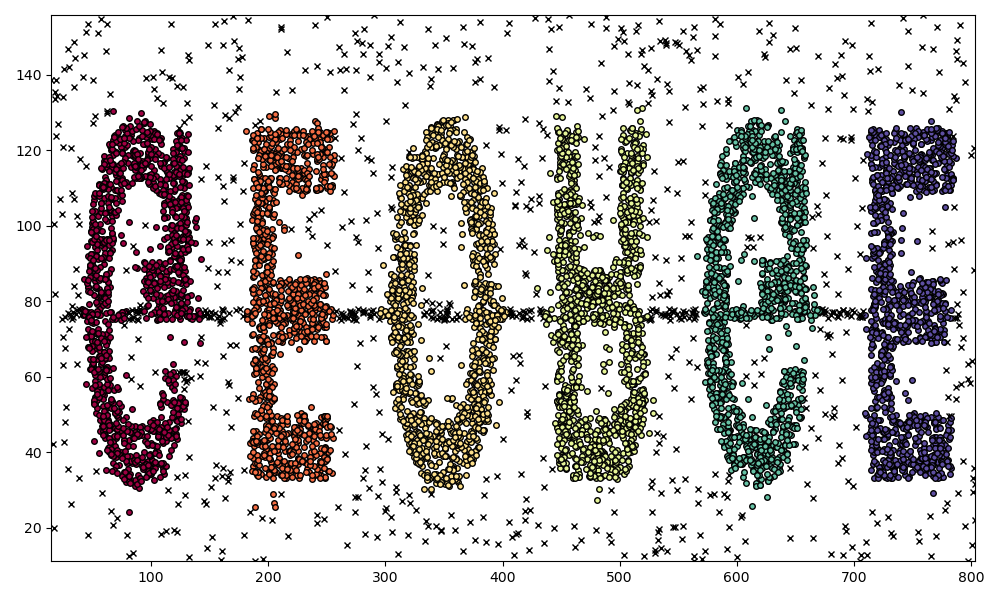}
         \caption{DP-DBSCAN labels (Cluto-t5)}
         \label{fig:cluto_t5_dp_dbscan_labels}
     \end{subfigure} \hfill
     \begin{subfigure}[b]{0.32\textwidth}
         \centering
         \includegraphics[width=\textwidth]{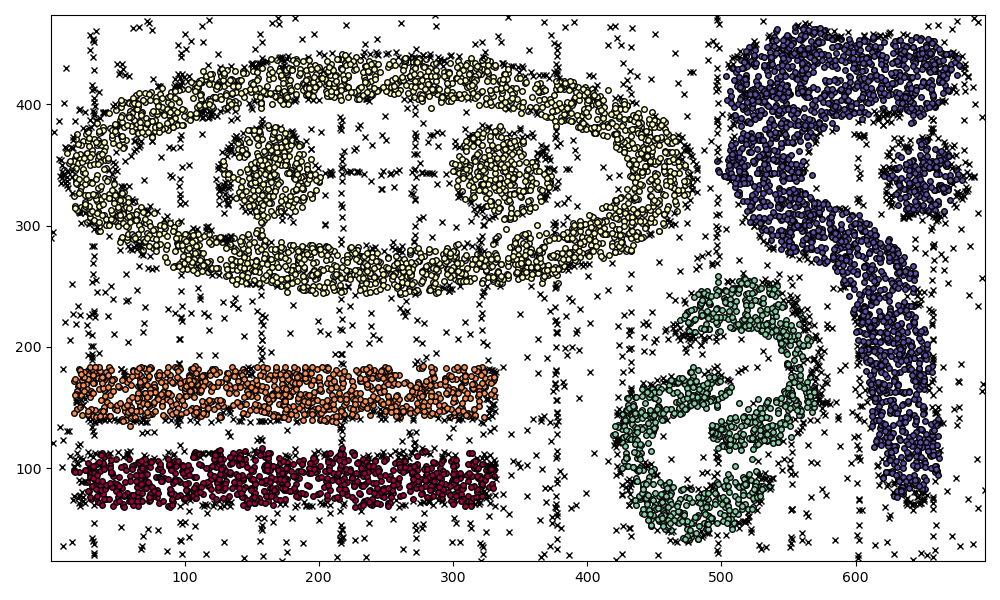}
         \caption{DP-DBSCAN labels (Cluto-t7)}
         \label{fig:cluto_t7_dp_dbscan_labels}
    \end{subfigure}
    
    \begin{subfigure}[b]{0.32\textwidth}
         \centering
         \includegraphics[width=\textwidth]{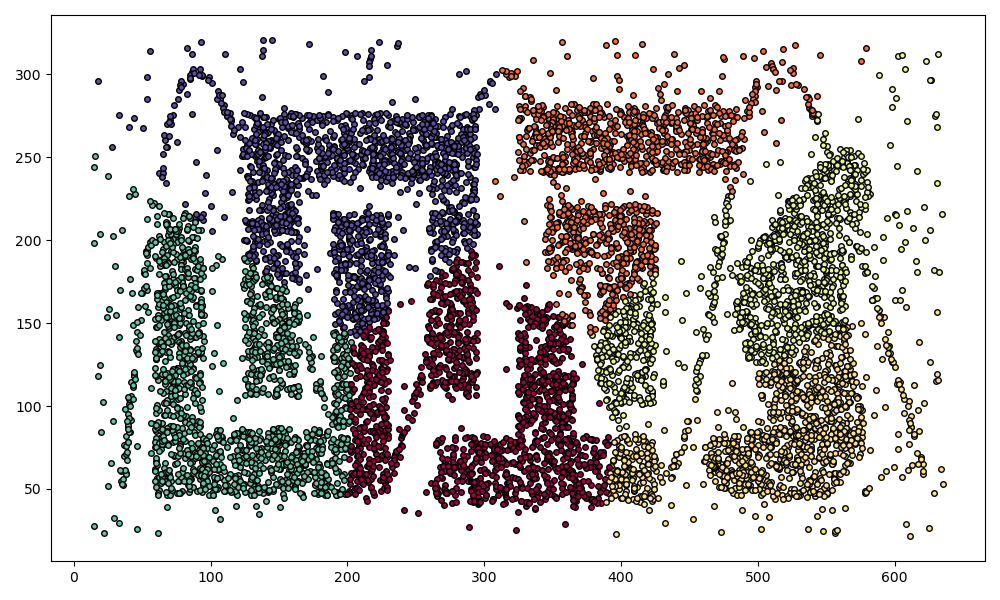}
         \caption{DP-Kmeans labels (Cluto-t4)}
         \label{fig:cluto_t4_dp_kmeans_labels}
     \end{subfigure} \hfill
     \begin{subfigure}[b]{0.32\textwidth}
         \centering
         \includegraphics[width=\textwidth]{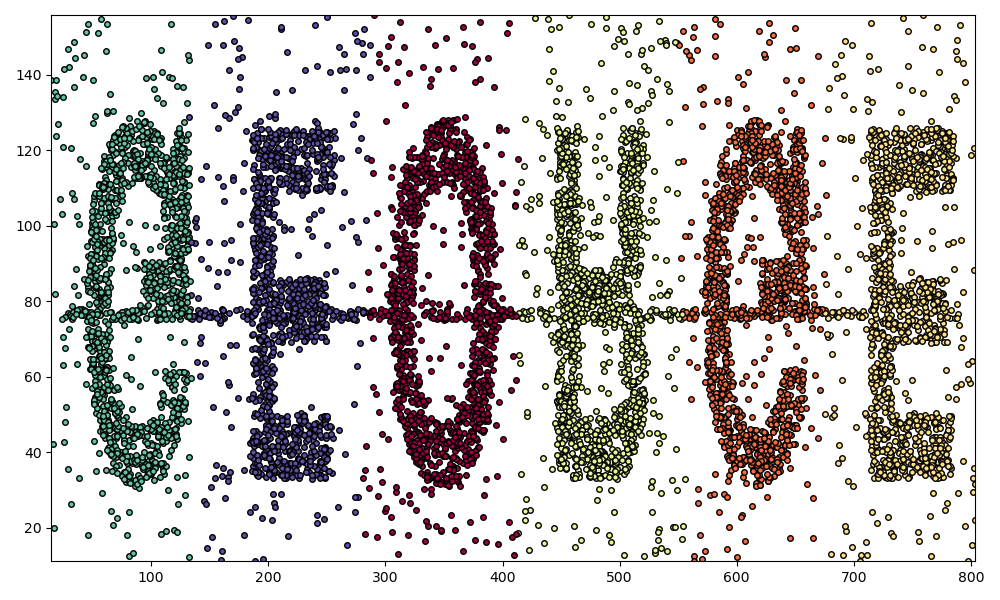}
         \caption{DP-Kmeans labels (Cluto-t5)}
         \label{fig:cluto_t5_dp_kmeans_labels}
     \end{subfigure} \hfill
     \begin{subfigure}[b]{0.32\textwidth}
         \centering
         \includegraphics[width=\textwidth]{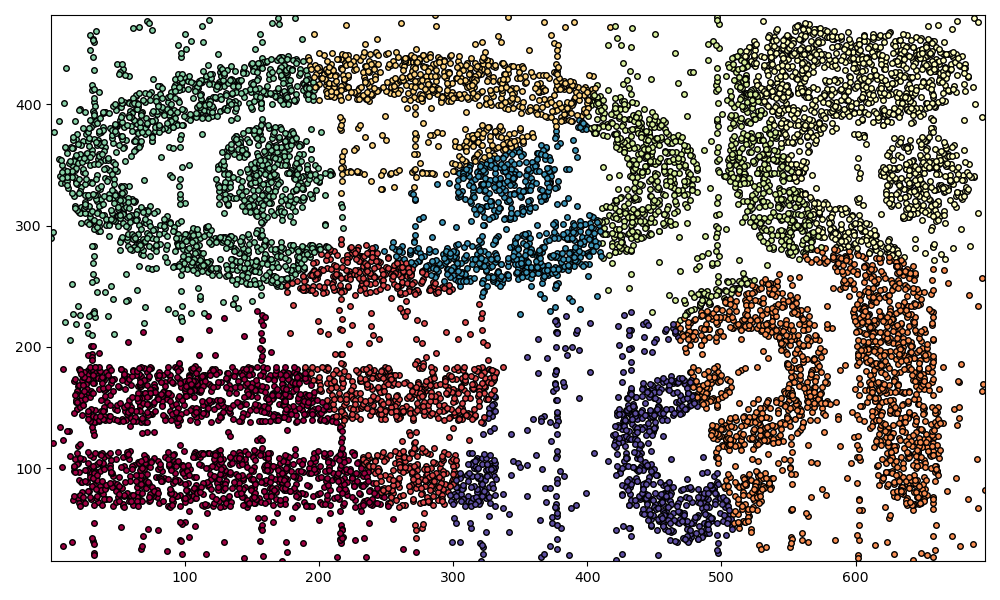}
         \caption{DP-Kmeans labels (Cluto-t7)}
         \label{fig:cluto_t7_dp_kmeans_labels}
    \end{subfigure} 
    
    \caption{Comparison of accuracy on Cluto datasets.}
    \label{fig:cluto}
\end{figure*}

We provide a detailed comparison of accuracy on the Cluto datasets plotted in Figure~\ref{fig:cluto}.
Figure~\ref{fig:cluto_t4_dbscan} to \ref{fig:cluto_t7_dbscan} show the (non-private) clusters produced by DBSCAN, run with parameters specified in Table~\ref{tab:dataset}. The original DBSCAN successfully finds all clusters in the datasets. Figure~\ref{fig:cluto_t4_dp_dbscan} to \ref{fig:cluto_t7_dp_dbscan} show the approximate cluster spans from our differentially private DP-DBSCAN mechanism, run with $\varepsilon=1$ and the same $(\alpha, \MinPts')$ that have been discussed in Section~\ref{sec:parameters}. On Cluto-t4 and Cluto-t5, all the 6 clusters are successfully found by our mechanism. On Cluto-t7 however, DP-DBSCAN reports 5 instead of 9 clusters since some clusters are not well-separated and are merged into one.
This is an expected performance under DP, as two close clusters can easily be merged on a neighboring dataset, with the addition of a single point in between.

To compare, We run DP-Kmeans with the best $k$ that equals to the exact number of clusters on each dataset, under the same $\varepsilon=1$. The $k$ centers and their clusters defined by the Voronoi diagram are shown in Figure~\ref{fig:cluto_t4_dp_kmeans} to \ref{fig:cluto_t7_dp_kmeans}.
It should be clear that approximate cluster spans produced by DP-DBSCAN are better descriptions that capture the shapes of clusters on these complex datasets, especially for Cluto-t4 and t7. In addition, DBSCAN can be used for identifying noises, which cannot be done by k-means. Further, due to the limit of Kmeans in finding clusters of arbitrary shapes, the quality of clusters can hardly be improved by other centroid-based clustering methods under DP  (e.g.~\cite{DBLP:conf/icml/ChangG0M21}), since the same problems exists for the original Kmeans where $\varepsilon=\infty$.

\begin{table}[htbp]
    \centering
    \caption{Comparison of ARI and AMI scores.}
    \label{tab:score}
    \begin{tabular}{ccccccc}
    \toprule
         & & \multicolumn{2}{c}{$\varepsilon=1$} & \multicolumn{2}{c}{$\varepsilon=\infty$} \\

        \multicolumn{2}{c}{Dataset} & DP-DBSCAN & DP-Kmeans & DBSCAN & Kmeans \\
    \midrule
        \multirow{6}{*}{ARI} & Circles & \textbf{0.94} &  0.00 & \textbf{0.98} & 0.00 \\ 
        & Moons & \textbf{0.99} & 0.51 & \textbf{1.00} & 0.47 \\ 
        & Blobs & \textbf{0.81} &  0.79 & 0.55 & \textbf{0.89} \\ 
        & Cluto-t4 & \textbf{0.64} & 0.47 & \textbf{0.95} & 0.50 \\
        & Cluto-t5 & \textbf{0.93} &  0.69 & \textbf{0.96}  & 0.78 \\
        & Cluto-t7 & \textbf{0.52} &  0.32 & \textbf{0.76} & 0.33 \\
        & HAR70+ & \textbf{0.40} & 0.19 & \textbf{0.57} & 0.23 \\
    \midrule
        \multirow{6}{*}{AMI} & Circles & \textbf{0.92} & 0.00 & \textbf{0.96} & 0.00 \\ 
        & Moons & \textbf{0.99} & 0.41 & \textbf{1.00} & 0.37 \\ 
        & Blobs & \textbf{0.83} & 0.79 & 0.66 & \textbf{0.87} \\ 
        & Cluto-t4 & \textbf{0.74} &0.59 & \textbf{0.92} & 0.61 \\
        & Cluto-t5 & \textbf{0.92} &  0.77 & \textbf{0.95}  & 0.82 \\
        & Cluto-t7 & \textbf{0.63} &  0.54 & \textbf{0.82} & 0.56 \\
        & HAR70+ & \textbf{0.45} & 0.43 & \textbf{0.54} & 0.48 \\
    \bottomrule
    \end{tabular}
\end{table}

To give a score of clustering quality, we extract the point labels for both DP-DBSCAN and DP-Kmeans in Figure~\ref{fig:cluto_t4_dp_dbscan_labels} to \ref{fig:cluto_t7_dp_kmeans_labels}.
Recall that for both mechanisms, the point labels are not outputs and do not have a privacy guarantee. It is  just for evaluating the clustering quality in our experiment.
Table~\ref{tab:score} summaries the ARI and AMI scores for the Cluto dataset among other datasets, averaged over 3 trials for the DP mechanisms. We observe that DP-DBSCAN outperforms DP-Kmeans on all the datasets, mainly because the original DBSCAN outperforms Kmeans on most evaluated datasets. Kmeans performs well on Blobs, which is arguably most suitable for centroid-based methods.
Our mechanism is even more accurate than non-DP DBSCAN on the Blobs dataset under the given parameters, as it successfully identified all three blobs while the original DBSCAN merged two of the Blobs into one.

\subsection{Varying Parameters}

In this section, we evaluate the effect of different parameters on accuracy of our mechanism. We use the default $(\alpha, \MinPts)$ with $\varepsilon=1$ and $\eta=4$ unless otherwise specified.

\begin{figure}[htbp]
    \centering
     \begin{subfigure}[b]{0.48\textwidth}
         \centering
         \includegraphics[width=\textwidth]{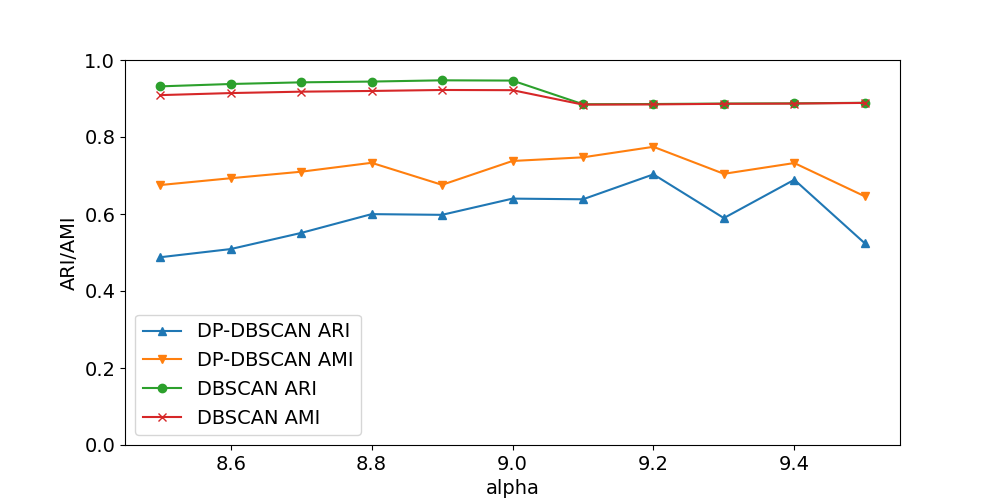}
     \end{subfigure}
     \begin{subfigure}[b]{0.48\textwidth}
         \centering
         \includegraphics[width=\textwidth]{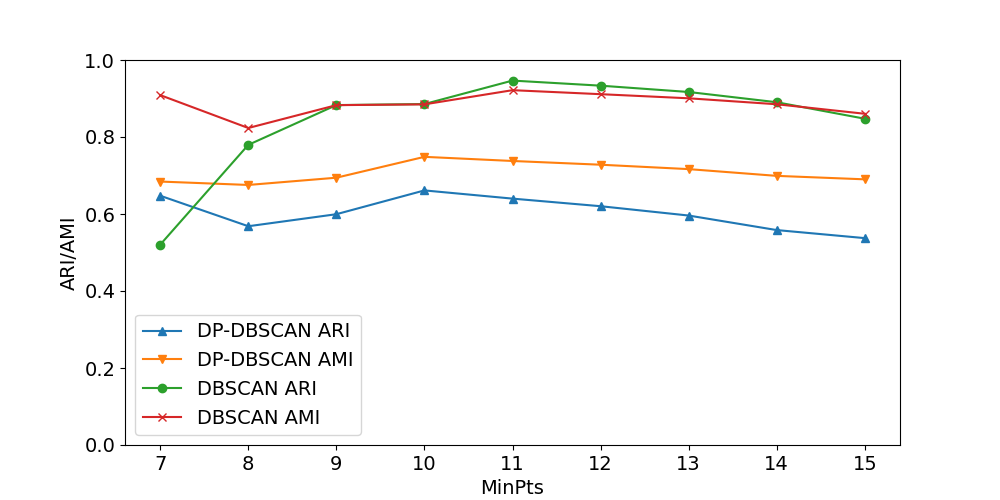}
     \end{subfigure}
    \caption{Accuracy on Cluto-t4, varying $(\alpha, \MinPts)$.}
    \label{fig:dbscan_params}
\end{figure}

\subsubsection{Dependency on DBSCAN parameters ($\alpha, \MinPts$)}

Figure~\ref{fig:dbscan_params} compares the accuracy of DP-DBSCAN against the original DBSCAN on Cluto-t4, varying $\alpha$ and $\MinPts$. Similar to the original DBSCAN, the performance of DP-DBSCAN is robust over a wide range of parameters.
While it is possible that DBSCAN and DP-DBSCAN can be optimized at different $\alpha$ and $\MinPts$, we find that the optimal parameters for DBSCAN will also be good parameter choices for DP-DBSCAN.

Also note that while changing $\alpha$ affects the grid structure, using different $\MinPts$ only affects the post-processing steps of our mechanism without any further privacy costs. Therefore, our mechanism is capable of producing clusters for all values of $\MinPts$ with the $\alpha$ fixed. This can help in an exploratory data analysis, for example, allowing the analyst to decide adaptively the next $\alpha$ to try given clusters at different density levels.

\begin{figure}[htbp]
    \centering
     \begin{subfigure}[b]{0.48\textwidth}
         \centering
         \includegraphics[width=\textwidth]{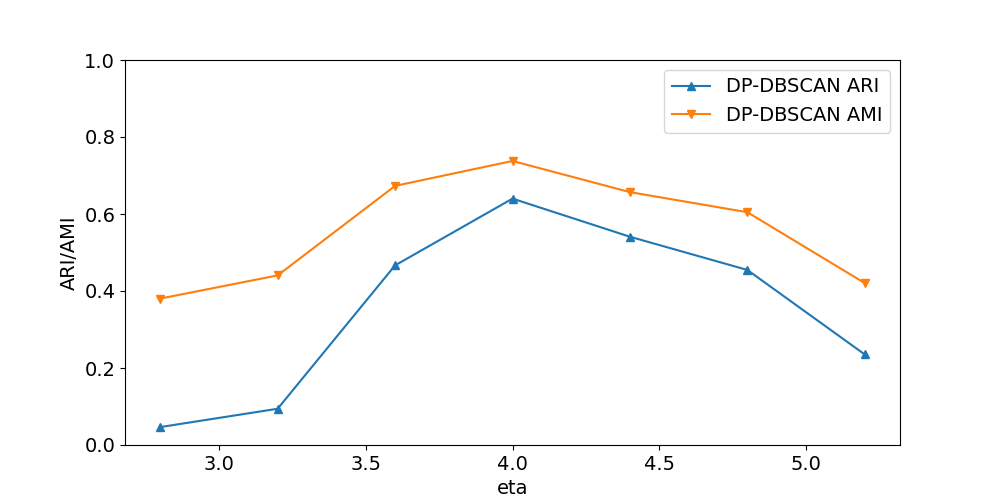}
         \subcaption{Cluto-t4}
     \end{subfigure}
     \begin{subfigure}[b]{0.48\textwidth}
         \centering
         \includegraphics[width=\textwidth]{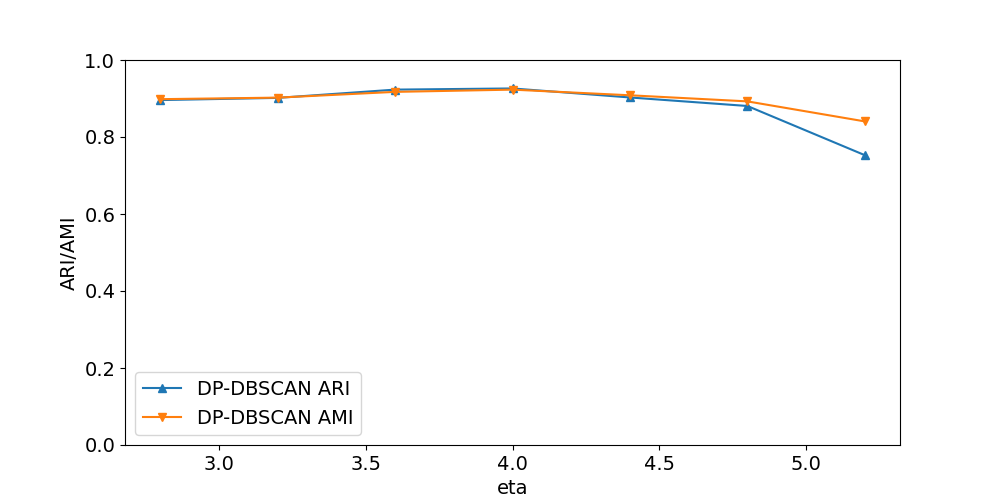}
         \subcaption{Cluto-t5}
     \end{subfigure}
    \caption{Accuracy on Cluto-t4 and t5, varying $\eta$. Larger $\eta$ means larger cells, which increases $\rho$ but decreases $\tau$.}
    \label{fig:grid_width}
\end{figure}

\subsubsection{Dependency on approximation ratio $(\rho,\tau)$}

Recall that both $\rho$ and $\tau$ are related to the parameter $\eta$, which can be controlled by the cell width $w=\frac{\eta\alpha}{4\sqrt{d}}$. With larger $\eta$ we have larger cells, which increases the approximation ratio $\rho$ to the radius, but decreases the additive error $\tau$ since each cell has less neighbors. In Figure~\ref{fig:grid_width}, we use different cell width to test the performance trade-off between $\rho$ and $\tau$. While the best choice may depend on the dataset, it can be observed that when $\eta=4$,  the grid width of $w=\alpha/\sqrt{d}$ tend to be optimal. This is because: 1) for smaller $\eta$, the error $\Gamma$ grows with $\kappa = (1+\frac{8\sqrt{d}}{\eta})^d$, which may dominate the actual counts in a cluster; 2) for $\eta>4$, core points within the same grid may be from different clusters, which will be indistinguishable and bring error.

\begin{figure}[htbp]
    \centering
     \begin{subfigure}[b]{0.48\textwidth}
         \centering
         \includegraphics[width=\textwidth]{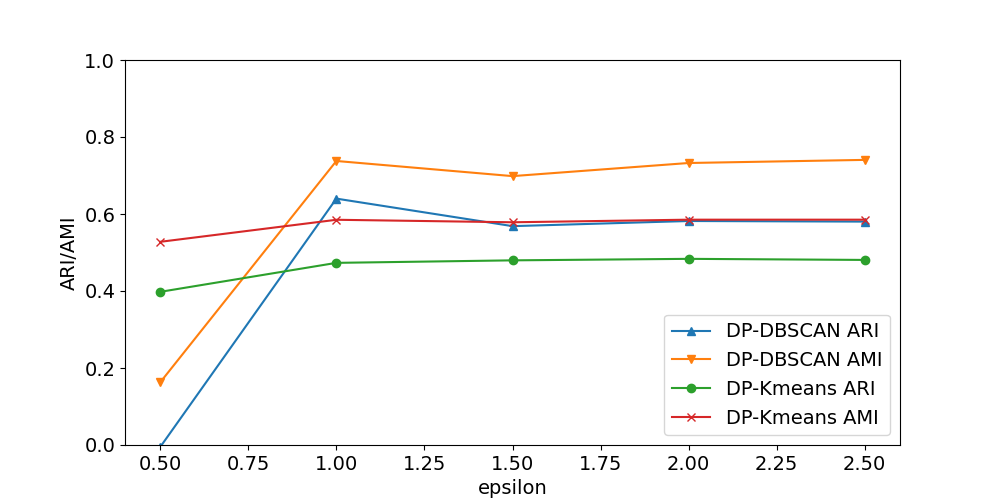}
         \subcaption{Cluto-t4}
     \end{subfigure}
     \begin{subfigure}[b]{0.48\textwidth}
         \centering
        \includegraphics[width=\textwidth]{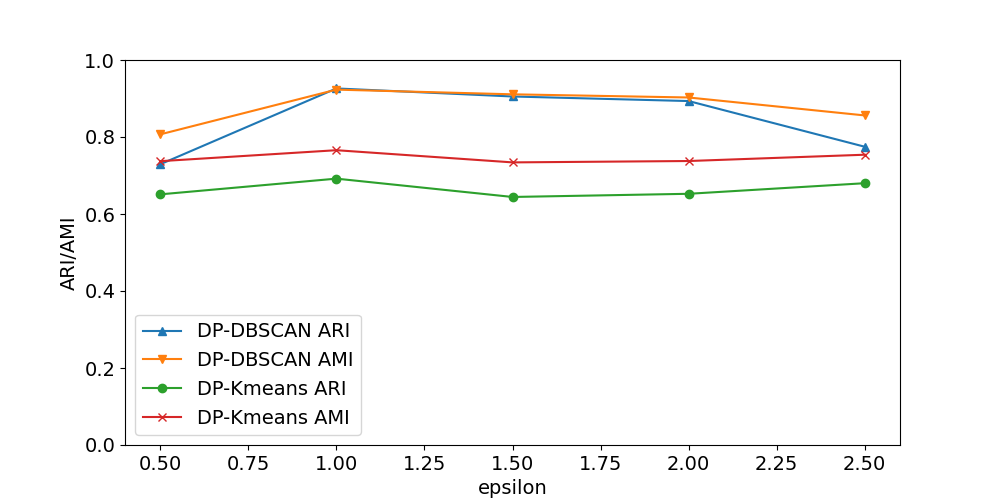}
        \subcaption{Cluto-t5}
    \end{subfigure}
    \caption{Accuracy on Cluto-t4 and t5, varying $\varepsilon$.}
    \label{fig:eps}
\end{figure}

\subsubsection{Dependency on privacy budget $\varepsilon$}

Figure~\ref{fig:eps} compares the accuracy of DP-DBSCAN and DP-KMeans with respect to $\varepsilon$. It is clear that DP-DPSCAN has better accuracy than DP-Kmeans as long as $\varepsilon \geq 1$, but we also observe that the accuracy of DP-DBSCAN drops significantly on Cluto-t4 when $\varepsilon$ goes towards 0, while DP-Kmeans seems to maintain the same level of accuracy. This is because for our mechanism, decreasing $\varepsilon$ from $1.0$ to $0.5$ effectively increases $\tau$ by a factor of 2, which dwarfs the real value of $\MinPts$, corresponding to the true counts of points in the neighborhood. As a result, all points are classified as noises when $\varepsilon=0.5$. On the other hand, since we provide the correct number of clusters to DP-Kmeans, it can always assign the right number of distinct labels, achieving some utility. In fact, even if we set $\varepsilon=1/n^2$ which breaks the utility of even a counting query, DP-Kmeans still show an accuracy of $\ARI=0.34$ and $\AMI=0.50$ when the clusters are almost completely data-independent.

The valid range of $\varepsilon$ depends on the dataset, since it contributes to an additive noise. Naturally, for datasets where a large $\MinPts$ is still valid for producing meaningful clusters, having a small $\varepsilon$ does not cause our mechanism to lose utility.

\subsection{Running Time}\label{sec:time}

\begin{table}[htbp]
    \centering
    \caption{Comparison of Running Time.}
    \label{tab:time}
    \begin{tabular}{cccccc}
    \toprule
        \multirow{2}{*}{Dataset} & \multicolumn{2}{c}{DP-DBSCAN} & \multirow{2}{*}{DP-Kmeans} & \multirow{2}{*}{DBSCAN} & \multirow{2}{*}{Kmeans} \\
        & $O(|\X|)$ & $O(n)$ \\
    \midrule
        Circles & 103 ms & 12.8 ms & 10.1 ms & 5.13 ms & 19.7 ms\\
        Moons & 69.9 ms & 14.9 ms & 9.78 ms & 5.62 ms & 19.4 ms \\
        Blobs & 77.1 ms & 12.4 ms & 7.72 ms & 8.27 ms & 13.6 ms\\
        Cluto-t4 & 349 ms & 18.7 ms & 33.6 ms & 18.6 ms & 91.8 ms  \\
        Cluto-t5 & 213 ms & 57.3 ms & 33.7 ms & 22.3 ms & 87.7 ms\\
        Cluto-t7 & 335 ms & 25.6 ms & 34.3 ms & 26.2 ms & 88.4 ms\\
        Crash & 26.6 s & 5.49 s & 2.57 s & 13.6 s & 349 ms \\
        Cabs &  - & 27.9 s & 15.6 s & -$^\dagger$ & 1.43 s\\
        Cabs-tiny &  5.49 s & 2.19 s & 1.14 s & 9.86 s & 183 ms\\
        HAR70+ &  - & 23.6 s & 334 ms & 729 ms & 120 ms\\
    \bottomrule
    \multicolumn{6}{l}{$^{\dagger}$\footnotesize{out of memory}}
    \end{tabular}
\end{table}

Table~\ref{tab:time} summaries the running time of DP-DBSCAN with both the naive Laplace histogram and the linear time histogram, compared with DP-Kmeans (implemented by IBM) and DBSCAN/Kmeans (implemented in scikit-learn). For 2D datasets, the linear time version has similar time cost as existing mechanisms, and the naive implementation is in general more time-costly.
For the 3D HAR70+ dataset, there is a clear increase in running time, since the constant $\kappa$ depends on $d$ exponentially.
But with the help of the linear-time histogram, our algorithm can still be run within half a minute on this dataset, which has $|\X|=10^8$ cells.
We also find a majority of running time spent on collecting and privatizing the noisy histogram and noisy sums, with the follow-up clustering almost free-of-charge.
Constructing such a histogram may benefit other applications as well.

\section{Acknowledgments}
This work has been supported by HKRGC under grants 16205422, 16204223, and 16203924; and by the National Research Foundation, Prime Minister’s Office, Singapore under its Campus for Research Excellence and Technological Enterprise (CREATE) programme. We would also like to thank the anonymous reviewers who have made valuable suggestions on improving the presentation of the paper.

\bibliographystyle{plainnat}
\bibliography{main.bib}
\end{document}